\theoremstyle{plain}
\newtheorem{theorem}{Theorem}[section]
\theoremstyle{definition}
\newtheorem{definition}[theorem]{Definition}
\newtheorem{lemma}[theorem]{Lemma}
\newtheorem{fact}[theorem]{Fact}
\newtheorem{corollary}[theorem]{Corollary}
\newtheorem*{remark}{Remark}
\newcommand{\R}{\mathbb{R}}
\newcommand{\calM}{\mathcal{M}}
\newcommand{\E}{\mathop{\mathbb{E}}}
\DeclareMathOperator*{\argmax}{arg\,max}
\title{Differentially Private Learning of Exponential Distributions: \\ Simple Algorithms and Tight Bounds}
\author{
Bar Mahpud\\
Bar-Ilan University\\
\texttt{mahpudb@biu.ac.il}
\and
Or Sheffet\\
Bar-Ilan University\\
\texttt{or.sheffet@biu.ac.il}
}
\date{}
\begin{document}

%

%
\maketitle

\begin{abstract}
We study the problem of learning exponential distributions under differential privacy. 
Given $n$ i.i.d.\ samples from $\mathrm{Exp}(\lambda)$, the goal is to privately estimate $\lambda$ so that the learned distribution is close in total variation distance to the truth.
We present a simple pure $\epsilon$-differentially private algorithm that avoids the classical dependence on the true value of $\lambda$. 
Our method leverages a structural property of the exponential distribution: its $(1-1/e)$-quantile equals $1/\lambda$, allowing us to estimate the rate parameter directly via private quantile estimation. 
The resulting learner is both conceptually simple and sample-efficient, achieving near-optimal guarantees.
We further extend the method to Pareto distributions via a logarithmic reduction, prove nearly matching lower bounds using group privacy arguments, and show how approximate $(\epsilon,\delta)$-DP removes the need for externally supplied parameter bounds.
Together, these results give the first tight characterization of exponential distribution learning under differential privacy using a simple $\lambda$-free approach.
\end{abstract}

\section{Introduction}
\label{sec:intro}

The exponential distribution occupies a central role in probability and statistics. 
It arises naturally in modeling interarrival times of Poisson processes \citep{Ross1996,Asmussen2003}, 
waiting times in queueing systems \citep{Kleinrock1975,Daskin2021}, 
lifetimes in reliability analysis and survival studies \citep{Smith2012,Johnston2012}, 
and interarrival processes in communication and computer networks \citep{Kleinrock1975,BertsekasGallager}. 
Its simplicity---being defined by a single rate parameter $\lambda$---makes it a canonical distribution both for theory and for practice. 
Moreover, the exponential law serves as a fundamental building block for more complex families, including the Pareto and Weibull distributions, which are widely used in economics, finance \citep{Mandelbrot1960,Gabaix2009}, and network modeling \citep{Crovella1997,Feldmann1998}. 
Accurate estimation of the rate parameter $\lambda$ is therefore a fundamental statistical task.

In modern applications, however, the data from which such parameters must be estimated are often sensitive. 
Waiting times may reveal information about individual users, interarrival processes may correspond to private communication events, and survival data may capture confidential medical information. 
As a result, there is a pressing need for algorithms that allow reliable learning of the exponential distribution while protecting the privacy of individuals. 
\emph{Differential privacy} (DP) \citep{Dwork2006} has emerged as the gold standard for such guarantees, providing provable bounds on the information leakage about any single data point. 
Yet, despite significant progress on private distribution learning in general, the case of exponential distributions has not been studied in detail.

Our work lies at the intersection of private distribution learning and heavy-tailed statistics.
Classical results on distribution learning under differential privacy include private estimation of means, variances, histograms, and quantiles (see, e.g., \cite{Dwork2006,Smith2011,Bun2015,Bun2018}). 
More recent works have studied private parameter estimation for specific distribution families, including Gaussian mean and covariance estimation \citep{Kamath2019,Kamath2021}, multinomial distributions \citep{Kifer2012,Acharya2019}, and heavy-tailed settings \citep{Kamath2021,Ashtiani2020}. 
In particular, \cite{Karwa2017FiniteSD} introduced finite-sample lower bounds for private distribution learning via group privacy arguments, which we adapt in establishing our lower bounds.
Complementary information-theoretic perspectives have also been developed. 
In the local differential privacy model, \cite{BarnesChenOzgur2020} analyze how privatization affects Fisher information and derive unified lower bounds for several estimation problems, clarifying how estimation complexity scales with the privacy parameter. 
More recently, \cite{McMillanSmithUllman2022} study instance-optimal estimation under differential privacy and construct locally minimax estimators for one-parameter exponential families. Their results are applicable to the exponential distribution, yet obtain $\lambda$-dependent bounds that deteriorate as $\lambda \to 0$.
And yet, the exponential and Pareto distributions~---despite their central role in modeling waiting times and heavy-tailed phenomena~--- have not previously received tight, distribution-specific treatment under differential privacy. 
To the best of our knowledge, no prior work provides differentially private algorithms with near-tight sample complexity guarantees for learning exponential or Pareto distributions in total variation distance. 
Our results bridge this gap, providing near-optimal private learners in this fundamental setting.

In this work our goal is to design algorithms that, given $n$ i.i.d.\ samples from $\mathrm{Exp}(\lambda)$, produce an estimate $\tilde{\lambda}$ such that the learned distribution $\mathrm{Exp}(\tilde{\lambda})$ is $\alpha$-close in total variation distance to the true distribution in total variation distance while adhering to $\epsilon$-differential privacy. 
As we exhibit in Section~\ref{subsec:exponential}, this requirement reduces to obtaining a multiplicative $(1\pm \alpha)$-approximation to $\lambda$.
The central question we address is therefore: how many samples are required to achieve this goal under differential privacy?

Towards that end, we develop in Section~\ref{sec:private-alg} a simple algorithmic strategy based on a structural property unique to the exponential distribution: its $(1-1/e)$-quantile equals $1/\lambda$. By privately estimating this quantile, we directly obtain an estimate of $\lambda$ without relying on its value. The resulting algorithm is both simple and robust, and achieves near-optimal sample complexity under pure differential privacy.

Our $\epsilon$-DP algorithm requires an apriori knowledge of lower- and upper-bounds, $\lambda_{\min}$ and $\lambda_{\max}$ resp., on the value of $\lambda$.
In section~\ref{sec:lower-bound} we establish a nearly matching lower bound on our sample complexity showing that the  doubly logarithmic $\log\log(\lambda_{\max}/\lambda_{\min})$ dependency is necessary,
using group privacy arguments from~\cite{Karwa2017FiniteSD}. Our lower-bound shows that our sample complexity are essentially tight, up to logarithmic factors. 

Finally, in Section~\ref{sec:approx-dp} we address the challenge of initialization. Bypassing the lower-bound, we give an approximate $(\epsilon,\delta)$-DP algorithm that retrieves such bounds on $\lambda$ using the histogram-based algorithm of~\cite{Vadhan2017}, thus removing the dependency on these apriori bounds on $\lambda$.

In summary, this work makes the following contributions:
\begin{itemize}
    \item We introduce the first simple pure differentially private algorithm for learning exponential distributions via direct quantile estimation, achieving $(1\pm \alpha)$ multiplicative accuracy of the rate parameter.
    \item We extend our techniques to the Pareto distribution, demonstrating how exponential learning enables private estimation for heavy-tailed families.
    \item We establish lower bounds showing that our sample complexity guarantees are tight up to logarithmic factors and require prior bounds under pure DP.
    \item We show how approximate DP enables removal of externally supplied parameter bounds through private initialization.
\end{itemize}

Taken together, these results highlight the exponential distribution as a rich and instructive case study in private statistics. 
Our findings not only resolve its sample complexity up to logarithmic terms but also provide new algorithmic tools and structural insights that can and should extend to broader classes of distributions.

\section{Preliminaries}
\label{sec:preliminaries}

\subsection{Differential Privacy}
\label{subsec:dp}

\begin{definition}[Differential Privacy]
A randomized algorithm $\calM : \mathcal{X}^n \to \mathcal{Y}$ is 
\emph{$(\epsilon,\delta)$-differentially private} (approximate differentially private) if for all neighboring datasets 
$D,D'\in\mathcal{X}^n$ and all measurable $S\subseteq\mathcal{Y}$,
\[
\Pr[\calM(D)\in S] \;\le\; e^\epsilon \Pr[\calM(D')\in S] + \delta
\]
when $\delta=0$ we say that the algorithm is \emph{$\epsilon$-differentially private} (pure differentially private).
\end{definition}

\begin{definition}[Laplace Mechanism]
Let $f:\mathcal{X}^n \to \R^d$ be a function with $\ell_1$-sensitivity
\[
\Delta f \;=\; \max_{D,D'} \|f(D)-f(D')\|_1
\]
over neighboring datasets $D,D'$.  
The \emph{Laplace mechanism} releases
\[
\calM(D) \;=\; f(D) + (Z_1,\dots,Z_d), \qquad Z_i \overset{\text{i.i.d.}}{\sim}\mathrm{Lap}(\Delta f/\epsilon)
\]
This mechanism is $\epsilon$-differentially private.
\end{definition}

\begin{lemma}[Basic Composition]
If $\calM_1$ is $(\epsilon_1,\delta_1)$-DP and $\calM_2$ is $(\epsilon_2,\delta_2)$-DP, then their sequential composition $(\calM_1,\calM_2)$ is $(\epsilon_1+\epsilon_2,\ \delta_1+\delta_2)$-DP.  
In particular, composing $k$ pure $\epsilon$-DP mechanisms yields $k\epsilon$-DP.
\end{lemma}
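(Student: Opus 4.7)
The plan is to prove the composition bound directly from the definition of differential privacy by conditioning on the output of $\calM_1$. Fix neighboring datasets $D,D'$ and a measurable set $S\subseteq \mathcal{Y}_1\times\mathcal{Y}_2$, and for each $y_1\in\mathcal{Y}_1$ let $S_{y_1}=\{y_2:(y_1,y_2)\in S\}$ denote its slice. First I would write
$$\Pr[(\calM_1(D),\calM_2(D))\in S] \;=\; \E_{y_1\sim \calM_1(D)}\!\Bigl[\Pr[\calM_2(D)\in S_{y_1}]\Bigr],$$
which is valid even in the adaptive setting provided $\calM_2$ is viewed as a conditional mechanism given $y_1$. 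For each fixed $y_1$, applying $(\epsilon_2,\delta_2)$-DP of $\calM_2$ to the event $S_{y_1}$ bounds the inner probability by $e^{\epsilon_2}\Pr[\calM_2(D')\in S_{y_1}]+\delta_2$.

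Next I would handle the outer expectation. Denote $h(y_1):=\Pr[\calM_2(D')\in S_{y_1}]\in[0,1]$. Using the layer-cake formula $\E[h]=\int_0^1 \Pr[h>t]\,dt$ and applying $(\epsilon_1,\delta_1)$-DP of $\calM_1$ to each level set $\{y_1:h(y_1)>t\}$ gives
$$\E_{y_1\sim\calM_1(D)}[h(y_1)] \;\le\; e^{\epsilon_1}\E_{y_1\sim\calM_1(D')}[h(y_1)] + \delta_1.$$
Chaining the two inequalities yields $\Pr[(\calM_1(D),\calM_2(D))\in S]\le e^{\epsilon_1+\epsilon_2}\Pr[(\calM_1(D'),\calM_2(D'))\in S] + e^{\epsilon_2}\delta_1 + \delta_2$, which already proves the pure DP case ($\delta_1=\delta_2=0$) and gives an almost-tight bound in general.

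The main obstacle I expect is reducing the residual term $e^{\epsilon_2}\delta_1+\delta_2$ to the claimed $\delta_1+\delta_2$. This is the standard subtlety of approximate-DP composition and is cleanly resolved using the coupling characterization of $(\epsilon,\delta)$-DP (Dwork--Roth): a mechanism is $(\epsilon,\delta)$-DP iff there exists a ``bad event'' $E$ with $\Pr[E]\le\delta$ on whose complement the privacy loss random variable is bounded by $\epsilon$ pointwise. Invoking this for both $\calM_1$ and $\calM_2$ and taking a union bound over the two bad events gives the additive $\delta_1+\delta_2$ slack, while on the good event the privacy losses add to at most $\epsilon_1+\epsilon_2$, completing the proof. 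The $k$-fold iteration stated as the second sentence follows immediately by induction.
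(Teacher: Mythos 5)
The paper states this Basic Composition lemma as a background fact in the preliminaries and does not supply a proof of its own, so there is no paper proof to compare against; what you have written is the standard textbook argument and is essentially correct. Your first chain of inequalities — conditioning on $y_1$, applying $(\epsilon_2,\delta_2)$-DP of $\calM_2$ to each slice $S_{y_1}$, then applying $(\epsilon_1,\delta_1)$-DP of $\calM_1$ to the level sets of $h$ via the layer-cake identity — is rigorous and yields the bound with residual $e^{\epsilon_2}\delta_1+\delta_2$. This already proves the pure-DP case that the paper actually uses, and the $k$-fold statement follows by induction.

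One imprecision worth flagging in the second half: the ``iff'' you quote is not quite an exact characterization of $(\epsilon,\delta)$-DP. It is not generally true that an $(\epsilon,\delta)$-DP mechanism admits a measurable \emph{event on the output space} of probability at most $\delta$ outside of which the privacy-loss ratio is pointwise bounded by $e^\epsilon$. The correct statement (Dwork--Roth Lemma 3.17 and its corollaries, or Kasiviswanathan--Smith) is a \emph{distributional decomposition}: for each pair of neighbors one can write $\calM(D)$ as a mixture $(1-\delta)P'+\delta\,\tilde P$ where $P'$ satisfies a pure $\epsilon$ bound against a similarly modified $\calM(D')$. The ``bad event'' then lives in an extended coupling space (the mixture coin), not on $\mathcal{Y}$. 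With that correction your union-bound-over-bad-events plan goes through and gives the claimed $(\epsilon_1+\epsilon_2,\delta_1+\delta_2)$. As written, the claimed equivalence is a mild overstatement, but the direction you need is true and the argument is the standard one.
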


\subsection{Probabilistic Background}\label{subsec:prob-background}

\begin{definition}[Total Variation Distance]
For two probability distributions $P$ and $Q$ over a measurable space $(\mathcal{X},\mathcal{F})$, 
the \emph{total variation distance} is defined as
\[
\mathrm{TV}(P,Q) \;=\; \sup_{A \in \mathcal{F}} |P(A) - Q(A)|
= \tfrac{1}{2} \int_{\mathcal{X}} |p(x)-q(x)| \, dx
\]
\end{definition}

\begin{lemma}[Dvoretzky–Kiefer–Wolfowitz (DKW) Inequality]
Let $X_1,\dots,X_n \sim P$ i.i.d.\ with CDF $F$, and let $F_n$ denote the empirical CDF. Then for any $\eta>0$,
\[
\Pr\!\left[\sup_x |F_n(x)-F(x)| > \eta\right] \;\le\; 2e^{-2n\eta^2}
\]
\end{lemma}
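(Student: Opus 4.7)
The plan is to prove the DKW inequality in three conceptual steps: reduce to the uniform case via the probability integral transform, establish a one-sided exponential tail bound on the empirical process, and then obtain the two-sided statement by a symmetry argument combined with a union bound that accounts for the factor of $2$ out front.

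First I would invoke the probability integral transform. Assume initially that $F$ is continuous and set $U_i := F(X_i)$, so that $U_1,\dots,U_n$ are i.i.d.\ uniform on $[0,1]$; the case of a general (possibly discontinuous) $F$ follows by a standard monotone approximation. Letting $G_n$ denote the empirical CDF of the $U_i$, the monotonicity of $F$ gives $\sup_x |F_n(x)-F(x)| = \sup_{u \in [0,1]} |G_n(u)-u|$. Writing the order statistics $U_{(1)} \le \cdots \le U_{(n)}$, one has
\[
\sup_u (G_n(u)-u) \;=\; \max_{1 \le k \le n}\!\left(\tfrac{k}{n}-U_{(k)}\right),
\]
so the problem reduces to controlling the maximal upward deviation of uniform order statistics.

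Next I would prove the one-sided bound $\Pr[\sup_u (G_n(u)-u)>\eta] \le e^{-2n\eta^2}$. For the sharp constant, Massart's 1990 argument is the natural route: decompose the event $\{D_n^+ > \eta\}$ according to the first index $k^\star$ at which $k/n - U_{(k)}$ exceeds $\eta$, apply exponential tilting to the binomial random variable $nG_n(t)\sim \mathrm{Bin}(n,t)$ at an appropriately chosen threshold $t$, and optimize the tilt to recover exactly the constant $2$ in the exponent. A softer and more elementary alternative yields the weaker form $Ce^{-cn\eta^2}$ by applying Hoeffding's inequality pointwise at each grid point of a $\Theta(\eta)$-discretization of $[0,1]$ and then taking a union bound, which suffices for essentially all downstream uses in the paper.

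Finally, the symmetric bound $\Pr[\sup_u(u - G_n(u)) > \eta] \le e^{-2n\eta^2}$ follows by applying the same argument to $1-U_i$, and a union bound over the two sides produces the claimed $2e^{-2n\eta^2}$. The main obstacle is the one-sided bound with the sharp Massart constant: it is the only genuinely delicate step, as both the reduction to uniforms and the upper–lower symmetrization are routine. If the sharp exponent $2$ is not needed, one can bypass Massart entirely and obtain a qualitatively identical statement $Ce^{-cn\eta^2}$ from pointwise Hoeffding plus discretization, which is already sufficient for our sample-complexity analyses.
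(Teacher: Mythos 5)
The paper does not prove this lemma: DKW is stated as a background result in the preliminaries and cited implicitly as a textbook fact, so there is no paper proof to compare against. That said, your sketch correctly identifies the standard route to the sharp-constant version due to Massart: reduce to the uniform case via the probability integral transform, reduce the supremum to a maximum over order statistics, prove the one-sided bound $\Pr[\sup_u(G_n(u)-u)>\eta]\le e^{-2n\eta^2}$ by exponential tilting, and symmetrize via $U_i \mapsto 1-U_i$ plus a union bound to pick up the leading factor of $2$. Your reduction identity $\sup_u(G_n(u)-u)=\max_{1\le k\le n}\bigl(\tfrac{k}{n}-U_{(k)}\bigr)$ is correct (the empirical CDF is constant on the spacings between order statistics and $u\mapsto -u$ is decreasing, so the sup on each piece occurs at its left endpoint). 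You honestly flag the one nontrivial step: the one-sided sharp bound is genuinely delicate, and your description of Massart's argument (decompose by the first crossing index, tilt the binomial, optimize) is only a roadmap rather than a proof, but as a blind sketch of a classical result this is exactly the right level of detail. Your observation that the weaker $Ce^{-cn\eta^2}$ form obtained from pointwise Hoeffding on an $O(1/\eta)$ grid plus a union bound would suffice for the paper's downstream sample-complexity arguments is also accurate, since the paper only ever uses DKW to force the empirical CDF within a small additive constant of the truth with probability $1-\beta$.
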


\begin{lemma}[Multiplicative Chernoff Bound]
Let $X_1,\dots,X_n$ be independent $\{0,1\}$-valued random variables with mean $\mu=\E[\sum_i X_i]$. Then for any $\alpha\in(0,1)$,
\[
\Pr\!\left[\sum_i X_i \ge (1+\alpha)\mu\right] \le \exp(-\tfrac{\alpha^2}{3}\mu)\]
\[
\Pr\!\left[\sum_i X_i \le (1-\alpha)\mu\right] \le \exp(-\tfrac{\alpha^2}{2}\mu)
\]
\end{lemma}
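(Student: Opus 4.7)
The plan is to apply the classical Chernoff exponential-moment technique. First, I would apply Markov's inequality to $e^{tS}$ for $S = \sum_i X_i$ and a parameter $t > 0$ to be chosen later, reducing the problem to controlling the moment generating function $\E[e^{tS}]$. By independence this MGF factorizes over coordinates, and since each $X_i \in \{0,1\}$ with $p_i = \Pr[X_i=1]$, the single-coordinate MGF equals $1 + p_i(e^t-1)$, which the tangent-line inequality $1+x \le e^x$ bounds by $\exp(p_i(e^t-1))$. The product then telescopes into $\E[e^{tS}] \le e^{\mu(e^t-1)}$.

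Substituting and optimizing the exponent $\mu(e^t-1) - t(1+\alpha)\mu$ over $t > 0$ (the minimum is attained at $t = \ln(1+\alpha)$) yields the classical bound
\[
\Pr[S \ge (1+\alpha)\mu] \;\le\; \left(\frac{e^{\alpha}}{(1+\alpha)^{1+\alpha}}\right)^{\!\mu}.
\]
To obtain the stated form $e^{-\alpha^2\mu/3}$, I would verify the elementary inequality $\alpha - (1+\alpha)\ln(1+\alpha) \le -\alpha^2/3$ for $\alpha \in (0,1)$, which follows from Taylor expanding $\ln(1+\alpha) = \alpha - \alpha^2/2 + \alpha^3/3 - \dots$ and collecting terms.

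For the lower tail the argument is symmetric, now using $t<0$ (equivalently, applying Markov to $e^{-tS}$ for $t>0$). After the same MGF factorization and optimization, one reaches
\[
\Pr[S \le (1-\alpha)\mu] \;\le\; \left(\frac{e^{-\alpha}}{(1-\alpha)^{1-\alpha}}\right)^{\!\mu},
\]
which the companion inequality $-\alpha - (1-\alpha)\ln(1-\alpha) \le -\alpha^2/2$ on $(0,1)$ upgrades to $e^{-\alpha^2\mu/2}$.

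The main---essentially only---technical step is verifying the two one-variable calculus inequalities that convert the optimized Chernoff exponent into the clean forms $-\alpha^2/3$ and $-\alpha^2/2$; the asymmetry between the constants $3$ and $2$ traces back to the different sign of the cubic term in the Taylor expansions of $(1+\alpha)\ln(1+\alpha)$ versus $(1-\alpha)\ln(1-\alpha)$ near zero. Everything else—Markov, independence of the MGFs, the tangent-line bound $1+x \le e^x$, and the one-variable optimization in $t$—is routine.
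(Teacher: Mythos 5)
Your argument is correct and is the standard Chernoff moment-generating-function derivation: Markov applied to $e^{tS}$, factorization of $\E[e^{tS}]$ by independence, the pointwise bound $1+x\le e^x$, optimization at $t=\ln(1+\alpha)$ (resp.\ $t=-\ln(1-\alpha)$), and the two one-variable inequalities $\alpha-(1+\alpha)\ln(1+\alpha)\le -\alpha^2/3$ and $-\alpha-(1-\alpha)\ln(1-\alpha)\le -\alpha^2/2$ on $(0,1)$, both of which hold as you say. Note, however, that the paper records this lemma as a standard background fact in its preliminaries and does not supply a proof of its own, so there is no in-paper argument to compare against; your write-up supplies exactly the canonical proof one would expect.
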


\subsection{Exponential Distribution}\label{subsec:exponential}

For $\lambda>0$, the exponential distribution $\mathrm{Exp}(\lambda)$ has density
$f(x) = \lambda e^{-\lambda x}$ for $x\ge 0$. Its mean is $1/\lambda$, and its CDF is $F(x)=1-e^{-\lambda x}$.

We also recall below a fact and a corollary relating the exponential family and total variation, which will be used throughout; the proofs of both statements are deferred to the appendix for completeness (See~\Cref{apx-fact:tv-exp} and~\Cref{apx-cor:tv-exp}).

\begin{fact}\label{fact:tv-exp}
    Let $\mathrm{Exp}(\lambda_1), \mathrm{Exp}(\lambda_2)$ be two exponential distributions and assume $\lambda_1 \geq \lambda_2$. Denote $a = \frac{\ln(\lambda_1) - \ln(\lambda_2)}{\lambda_1 - \lambda_2}$, then:
    \[
    TV(\mathrm{Exp}(\lambda_1), \mathrm{Exp}(\lambda_2)) = e^{-\lambda_2 \cdot a} - e^{-\lambda_1 \cdot a}
    \]
\end{fact}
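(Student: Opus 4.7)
The plan is to use the characterization of total variation distance as the maximum of $P_1(A) - P_2(A)$ over measurable sets, where this maximum is attained on the set $A^* = \{x : f_1(x) > f_2(x)\}$ on which the density $f_1$ of $\mathrm{Exp}(\lambda_1)$ dominates the density $f_2$ of $\mathrm{Exp}(\lambda_2)$. Once $A^*$ is identified, the total variation distance becomes simply $F_1(A^*) - F_2(A^*)$, and I can read off the answer from the two exponential CDFs.

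First, I would solve the equation $f_1(x) = f_2(x)$, i.e.\ $\lambda_1 e^{-\lambda_1 x} = \lambda_2 e^{-\lambda_2 x}$. Taking logarithms and rearranging gives $(\lambda_1 - \lambda_2) x = \ln \lambda_1 - \ln \lambda_2$, so the unique crossing point on $[0,\infty)$ is exactly $x = a$ as defined in the statement. (When $\lambda_1 = \lambda_2$ the formula is interpreted as a limit and the result $TV = 0$ is immediate, so I would assume strict inequality below.)

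Next I would determine on which side of $a$ each density dominates. Since $\lambda_1 \geq \lambda_2$, we have $f_1(0) = \lambda_1 \geq \lambda_2 = f_2(0)$, and as $x \to \infty$ the density $f_1$ decays faster so $f_1 < f_2$ eventually. With a single crossing at $a$, this forces $f_1 > f_2$ on $[0, a)$ and $f_1 < f_2$ on $(a, \infty)$. Hence $A^* = [0, a)$.

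Finally, I would just evaluate
\[
TV(\mathrm{Exp}(\lambda_1), \mathrm{Exp}(\lambda_2)) = F_1(a) - F_2(a) = (1 - e^{-\lambda_1 a}) - (1 - e^{-\lambda_2 a}) = e^{-\lambda_2 a} - e^{-\lambda_1 a},
\]
which matches the claimed identity. There is no serious obstacle here; the only subtlety is verifying the direction of dominance (which depends on the hypothesis $\lambda_1 \geq \lambda_2$) so that the sign in the final expression comes out positive, and checking that the single crossing at $a$ leaves only these two monotone regions to consider.
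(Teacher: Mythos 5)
Your proof is correct and takes essentially the same approach as the paper's: both find the unique crossing point $a$ of the two densities and then exploit the fact that $f_1 > f_2$ exactly on $[0,a)$. The paper evaluates $\tfrac12\int_0^\infty|f_1-f_2|$ by splitting at $a$, while you read off $F_1(a)-F_2(a)$ from the maximizing set $A^*=[0,a)$; these are algebraically the same computation, and your version is slightly more direct and also makes the sign-of-dominance argument explicit, which the paper leaves implicit.
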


\begin{corollary}\label{cor:tv-exp}
    If the rate parameters satisfy $\lambda_1 \in (1 \pm \alpha)\lambda_2$, then
    \[
    TV\!\left(\mathrm{Exp}(\lambda_1), \mathrm{Exp}(\lambda_2)\right) \leq \alpha
    \]
\end{corollary}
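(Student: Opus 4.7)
The plan is to apply Fact~\ref{fact:tv-exp} directly and reduce the claim to a one-variable inequality in the ratio of the two rates. By the symmetry of total variation distance, I would relabel so that the larger rate is on top and set $r := \lambda_{\max}/\lambda_{\min} \geq 1$. The hypothesis $\lambda_1 \in (1\pm\alpha)\lambda_2$ implies either $r \leq 1+\alpha$ or $r \leq 1/(1-\alpha)$, depending on which of $\lambda_1,\lambda_2$ is larger; in either case $r \leq 1/(1-\alpha)$.

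Next, I would substitute into Fact~\ref{fact:tv-exp}. With $a = \tfrac{\ln(\lambda_{\max})-\ln(\lambda_{\min})}{\lambda_{\max}-\lambda_{\min}} = \tfrac{\ln r}{\lambda_{\min}(r-1)}$, the products simplify cleanly:
$$\lambda_{\min}\, a \;=\; \frac{\ln r}{r-1}, \qquad \lambda_{\max}\, a \;=\; \frac{r\ln r}{r-1},$$
so Fact~\ref{fact:tv-exp} gives
$$TV(\mathrm{Exp}(\lambda_1),\mathrm{Exp}(\lambda_2)) \;=\; r^{-1/(r-1)} - r^{-r/(r-1)} \;=\; r^{-1/(r-1)}\Bigl(1-\tfrac{1}{r}\Bigr).$$

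To finish, I would use the trivial bound $r^{-1/(r-1)} \leq 1$ (valid because $r \geq 1$), which reduces everything to
$$TV \;\leq\; 1 - \tfrac{1}{r}.$$
Since $r \leq 1/(1-\alpha)$ gives $1/r \geq 1-\alpha$, and separately $r \leq 1+\alpha$ gives $1/r \geq 1/(1+\alpha) \geq 1-\alpha$, we conclude $TV \leq \alpha$ in either case.

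The only real subtlety is the asymmetry of the hypothesis ``$\lambda_1 \in (1\pm\alpha)\lambda_2$'': the worst-case ratio is $1/(1-\alpha)$ rather than $1+\alpha$, so one has to verify that $1/r \geq 1-\alpha$ in both orderings. Once this is handled, the rest is a direct algebraic manipulation of the closed form from Fact~\ref{fact:tv-exp}, with no analytic work beyond noting that $r^{-1/(r-1)} \leq 1$.
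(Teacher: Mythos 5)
Your proof is correct, and it takes a genuinely simpler route than the paper's. The paper splits into two cases ($\lambda_1 = (1+\delta)\lambda_2$ and $\lambda_2 = (1-\delta)\lambda_1$), argues monotonicity of the TV expression in $\delta$ in each case, and computes limits/extremes to obtain the sharper bound $TV \le \alpha/e$. You instead normalize to the ratio $r = \lambda_{\max}/\lambda_{\min}$, observe that the hypothesis forces $r \le 1/(1-\alpha)$ in both orderings (since $1+\alpha \le 1/(1-\alpha)$), write $TV = r^{-1/(r-1)}\bigl(1 - \tfrac{1}{r}\bigr)$ exactly as in the paper's Lemma~\ref{lem:tv-alpha-exp}, and then discard the prefactor via the trivial bound $r^{-1/(r-1)} \le 1$ (valid since $r\ge 1$), leaving $TV \le 1 - 1/r \le \alpha$. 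This sidesteps all the monotonicity analysis; the trade-off is you land on the constant $\alpha$ rather than the tighter $\alpha/e$, but the corollary only claims $\alpha$, so nothing is lost. Your unified $r$-parameterization also aligns cleanly with the way the paper treats the same quantity in the lower-bound section, which is a nice bit of reuse. One small note: once you've observed $r \le 1/(1-\alpha)$ covers both orderings, the separate verification for the $r\le 1+\alpha$ case is redundant, though harmless.
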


The canonical way for estimating the parameter $\lambda$ is using its MLE estimator.

\begin{definition}[Maximum Likelihood Estimator for the Exponential Distribution]
Let $X_1,\dots,X_n \sim \mathrm{Exp}(\lambda)$ be i.i.d.\ samples.  
The \emph{maximum likelihood estimator (MLE)} for the rate parameter $\lambda$ is
\[
\hat{\lambda} \;=\; \frac{1}{\bar{X}}
\;=\; \left(\frac{1}{n}\sum_{i=1}^n X_i\right)^{-1}
\]
\end{definition}
In~\citet{McMillanSmithUllman2022}, the main algorithm builds on a private version of the MLE estimator. In Appendix~\ref{apx-sec:private-alg} we present this privatize MLE algorithm as well (tailored to the exponential distribution), which works by estimating privately the mean of the samples. Note however that DP-mean estimation requires that we add noise proportional to the range of the input points~--- which all lie inside an interval of length $\frac{\log(n)}{\lambda}$ when the data is drawn from $\mathrm{Exp}(\lambda)$. This results in sample complexity of $\tilde O(\frac {1}{\epsilon\alpha\lambda})$, which suggests that when $\lambda$ is small, the private MLE algorithm is costly. In contrast, our algorithm avoids this dependency on $\nicefrac 1\lambda$. In addition, in Appendix~\ref{apx-sec:private-alg} we give an algorithm that enjoys ``the best of both worlds'': First it (privately) checks whether $\lambda\geq 1$ or not. If so, it applies the private MLE estimator, and if not it applies our private quantile-based estimation. This allows us to obtain a sample complexity that benefit from large values of $\lambda$ while being independent of $\nicefrac 1 \lambda$ when $\lambda$ is small. 

\subsection{Pareto Distribution}\label{subsec:pareto-prelim}

The \emph{Pareto distribution} is a classical heavy-tailed law that is parameterized by a minimum (scale) parameter $x_m > 0$ and a shape parameter $\alpha_p > 0$, and is supported on $[x_m, \infty)$. 
Its probability density function is
\[
    f(x) \;=\; \alpha_p \cdot \frac{x_m^{\alpha_p}}{x^{\alpha_p+1}}
    \qquad\text{for } x \geq x_m.
\]
We denote this law by $\mathrm{Pareto}(x_m, \alpha_p)$.

\begin{remark}
    If $X \sim \mathrm{Pareto}(x_m, \alpha_p)$, then 
    \[
        Y \;=\; \ln\!\left(\tfrac{X}{x_m}\right) \;\sim\; \mathrm{Exp}(\alpha_p).
    \]
    Thus, estimation of $\alpha_p$ can be reduced to exponential learning. 
\end{remark}

\section{Pure Differentially Private Algorithm for Exponential Distribution Learning}
\label{sec:private-alg}
We now turn to our main technical contribution: learning the parameter $\lambda$ of an exponential distribution under pure $\epsilon$-differential privacy. 
Our target is to output an estimate $\tilde{\lambda}$ such that the learned distribution $\mathrm{Exp}(\tilde{\lambda})$ is close in total variation distance to the true distribution $\mathrm{Exp}(\lambda)$. 
By Corollary~\ref{cor:tv-exp}, it suffices to achieve a $(1\pm \alpha)$–multiplicative approximation of $\lambda$, since this immediately implies $\mathrm{TV}(\mathrm{Exp}(\tilde{\lambda}),\mathrm{Exp}(\lambda)) \leq \alpha$. 

Our approach relies on a structural property of the exponential distribution: its $(1-1/e)$-quantile equals $1/\lambda$. 
By privately estimating this quantile, we directly obtain an estimate of $\lambda$ without relying on its value. 
This leads to a simple pure DP learner whose analysis we develop in this section.

\subsection{Learning $\lambda$ from the $(1-1/e)$-Quantile}\label{subsec:exp-quantile-only}
We now present a simplified approach for privately estimating $\lambda$ that bypasses MLE entirely. The method relies on the fact that   the $(1 - 1/e)$-quantile of $\mathrm{Exp}(\lambda)$ equals $Q_{1 - 1/e} = \frac{1}{\lambda}$.
Thus, a multiplicative $(1 \pm \alpha)$-approximation to $Q_{1 - 1/e}$ directly yields a multiplicative $(1 \pm \alpha)$-approximation to $1/\lambda$, and therefore to $\lambda$ itself.
We use a binary search on powers of $(\frac 1 {1-\nicefrac \alpha 2})$ that lie inside the interval $[\lambda_{\min}, \lambda_{\max}]$ for known bounds $0 < \lambda_{\min} < \lambda_{\max}$, with the comparison at each step implemented via a private counting query. The output of the search is $\tilde{q}$, which we interpret as $\tilde{q} \approx \tfrac 1\lambda$, so our final estimator is $\tilde{\lambda} = \tfrac 1 {\tilde{q}}$.

\begin{algorithm}[ht]
\caption{Private Exponential Learning via $1/e$-Quantile}
\label{alg:exp-quantile}
\begin{algorithmic}
\State \textbf{Input:} $P = \{x_i\}_{i=1}^n \sim \mathrm{Exp}(\lambda)$, bounds $0 < \lambda_{\min} < \lambda_{\max}$, target error $\alpha \in (0,1)$, privacy budget $\epsilon > 0$.
\State \textbf{Output:} Parameter estimator $\tilde{\lambda}$.
\Procedure{Quantile-Learning}{$P, \lambda_{\min}, \lambda_{\max}, \alpha, \epsilon$}
\State Denote $T = \left\lceil \log_2\!\left(\frac{2\log(\lambda_{\max}/\lambda_{\min})}{\alpha}\right) \right\rceil$
\State Denote $p_0 = \lambda_{\min}, p_1 = \lambda_{\min}(\frac{1}{1-\nicefrac{\alpha}{2}}), p_2 = \lambda_{\min}(\frac{1}{1-\nicefrac{\alpha}{2}})^2,  p_3 = \lambda_{\min}(\frac{1}{1-\nicefrac{\alpha}{2}})^3,\dots, p_{2^T} = \lambda_{\max}$
\State Denote $L \gets 0$, $U \gets 2^T$
\For{$t = 1$ \textbf{to} $T$}
    \State $m \gets \frac{U+L}{2}$
    \State $\tilde{q} \gets \frac{1}{n}|\{x_i \in P : x_i < p_m\}| + \mathrm{Lap}\left(\frac{T}{\epsilon n}\right)$
    \If{$\tilde{q} \ge 1 - \nicefrac{1}{e} + \nicefrac{\alpha}{2e}$}
        \State $U \gets m$
    \ElsIf{$\tilde{q} \le 1 - \nicefrac{1}{e} - \nicefrac{\alpha}{2e}$}
        \State $L \gets m$
    \Else
        \State \Return $\tilde{\lambda} = 1/\tilde{q}$
    \EndIf
\EndFor
\State \Return $\perp$
\EndProcedure
\end{algorithmic}
\end{algorithm}

\begin{lemma}[Privacy of~\Cref{alg:exp-quantile}]\label{lem:exp-quantile-privacy}
    Algorithm~\ref{alg:exp-quantile} is $\epsilon$-differentially private.
\end{lemma}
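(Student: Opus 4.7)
The plan is to reduce the privacy analysis to basic composition of Laplace mechanisms. Let $T := \lceil \log_2(\log(\lambda_{\max}/\lambda_{\min})/\alpha) \rceil$ denote the maximum number of iterations of the binary search loop. First I would observe that the only place where the private dataset $P$ is touched is in the empirical-CDF query $f_m(P) = \frac{1}{n}|\{x_i \in P : x_i < p_m\}|$. On neighboring datasets differing in a single record, this count changes by at most one, so the $\ell_1$-sensitivity is $\Delta f_m = 1/n$. Setting the noise scale to $T/(\epsilon n) = \Delta f_m / (\epsilon/T)$ makes each release $\tilde{q}$ an $(\epsilon/T)$-DP mechanism by the guarantee of the Laplace mechanism stated in Section~\ref{subsec:dp}.

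Next I would argue that everything the algorithm does after obtaining $\tilde{q}$ in a given iteration --- comparing it to the thresholds $1-\tfrac{1}{e}\pm \tfrac{\alpha}{2e}$, updating $L$, $U$, and $m$, and eventually returning $\tilde{\lambda} = 1/\tilde{q}$ --- depends on $P$ only through the already-privatized $\tilde{q}$. Hence all these steps are pure post-processing of DP outputs, which preserves the privacy guarantee of the current release and consumes no additional budget.

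Finally I would invoke basic composition across the at most $T$ iterations. The algorithm is an adaptive sequential composition of at most $T$ Laplace releases, each $(\epsilon/T)$-DP, so the total privacy loss is bounded by $T\cdot(\epsilon/T) = \epsilon$. Early termination in the \textbf{else} branch only decreases the number of executed queries, so the worst-case bound still applies and the algorithm is $\epsilon$-DP.

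The only subtle point, and the main ``obstacle'' to be careful about, is that the query parameter $p_m$ at iteration $t$ is data-dependent through the previous noisy outputs, so the releases are \emph{adaptively} chosen rather than fixed in advance. This is handled by the standard adaptive version of basic composition: if each mechanism in an adaptively chosen sequence is $(\epsilon_i,0)$-DP conditional on the previous outputs, then the overall composition is $(\sum_i \epsilon_i, 0)$-DP. Hence no machinery beyond the composition lemma of Section~\ref{subsec:dp} is required.
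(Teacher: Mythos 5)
Your proposal is correct and mirrors the paper's argument exactly: each noisy CDF release is an $(\epsilon/T)$-DP Laplace mechanism with sensitivity $1/n$, basic composition over the at most $T$ iterations gives $\epsilon$-DP, and the comparison/output logic is post-processing. Your explicit remark that adaptivity of $p_m$ is covered by the adaptive form of basic composition is a nice clarification the paper leaves implicit, but it does not change the route.
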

The proof is deferred to \Cref{apx-lem:exp-quantile-privacy}.

\begin{lemma}[Utility of~\Cref{alg:exp-quantile}]\label{lem:exp-quantile-utility}
    Fix target accuracy $\alpha \in (0,1)$ and failure probability $\beta \in (0,1)$. Let $T = \left\lceil \log_2\!\left(\frac{2\log(\lambda_{\max}/\lambda_{\min})}{\alpha}\right) \right\rceil$. If
    \[
        n \ \ge\ \max\left\{
            \frac{2eT}{\epsilon \alpha} \ln\left( \frac{2T}{\beta} \right),
            \frac{2}{\alpha^2} \ln\left( \frac{2T}{\beta} \right)
        \right\}
    \]
    then with probability at least $1 - \beta$ Algorithm~\ref{alg:exp-quantile} outputs $\tilde{\lambda}$ satisfying
    $
        \tilde{\lambda} \in \left[ (1-\alpha)\lambda,\ (1+\alpha)\lambda \right].
    $
\end{lemma}
\begin{proof}[Proof sketch]
The algorithm performs a binary search over $2^T=2\log(\lambda_{\max}/\lambda_{\min})/\alpha$ checkpoints, where each step compares the (noisy) empirical CDF at a candidate $p$ to $1-1/e$.

\emph{Sampling error.} By DKW, $\Pr[\sup_x|F_n(x)-F(x)|>\alpha/2]\le 2e^{-2n(\alpha/2)^2}$. A union bound over the $T$ comparisons shows that taking
$
n\ \ge\ \frac{2}{\alpha^2}\ln\!\Big(\frac{2T}{\beta}\Big)
$
ensures all empirical CDF values deviate from the truth by at most $\alpha/2$ with probability $\ge 1-\beta/2$.

\emph{Noise error.} Each comparison releases one count with Laplace noise of scale $b=T/(\epsilon n)$. Requiring that every noisy perturbation be at most $\alpha/(2e)$ (so that the CDF-side error budget totals $\alpha/2$ from privacy) holds with probability $\ge 1-\beta/2$ if $n \;\ge\; \frac{2eT}{\epsilon \alpha}\,
   \ln\!\Big(\frac{2T}{\beta}\Big).$


\emph{From CDF to parameter accuracy.} For $X \sim \mathrm{Exp}(\lambda)$, the $(1-1/e)$-quantile equals $1/\lambda$. 
An additive CDF error of at most $\alpha/(2e)$ around the level $1-1/e$ corresponds to at least $\alpha/(2e)$ probability mass in the exponential distribution. 
This amount of probability mass separates the true quantile $1/\lambda$ from the neighboring points $1/((1\pm\alpha)\lambda)$. 
Therefore, controlling the CDF error at this level guarantees that the estimated quantile lies in $\big[\tfrac{1}{(1+\alpha)\lambda},\,\tfrac{1}{(1-\alpha)\lambda}\big]$, and thus its reciprocal $\tilde\lambda$ falls in $[(1-\alpha)\lambda,(1+\alpha)\lambda]$.

Combining all requirements via the union bound yields the stated sample-size condition with success probability $1-\beta$. Full details are deferred to \Cref{apx-lem:exp-quantile-utility}.
\end{proof}

The following theorem summarizes the properties of Algorithm~\ref{alg:exp-quantile}.

\begin{theorem}\label{thm:exp-quantile-main}
    Algorithm~\ref{alg:exp-quantile} is $\epsilon$-differentially private and, for any $\alpha \in (0,1)$ and $\beta \in (0,1)$, achieves an $(1\pm \alpha)$-approximation to $\lambda$ with probability at least $1-\beta$, provided
    \begin{align*}
    n \;\ge\; O\!\Bigg(
       \max\Bigg\{&
          \frac{\ln(\tfrac{\log(\lambda_{\max}/\lambda_{\min})}{\alpha})}
               {\epsilon \alpha}\,
          \ln\!\Big(\tfrac{\log(\nicefrac{\lambda_{\max}}{\lambda_{\min}})/\alpha}{\beta}\Big),
          \frac{1}{\alpha^2}\,
          \ln\!\Big(\tfrac{\log(\nicefrac{\lambda_{\max}}{\lambda_{\min}})/\alpha}{\beta}\Big)
       \Bigg\}
    \Bigg)
    \end{align*}
\end{theorem}

\begin{proof}
    Privacy follows from \Cref{lem:exp-quantile-privacy}. Utility follows from \Cref{lem:exp-quantile-utility}. 
\end{proof}

\section{Applications to Pareto Distributions}
\label{sec:pareto}



The exponential distribution is not only fundamental in its own right but also serves as a key building block for other families. 
In particular, the \emph{Pareto distribution}---one of the most widely studied heavy-tailed laws in economics, finance, and network modeling---has probability density
\[
f(x) = \alpha_p \cdot \frac{x_m^{\alpha_p}}{x^{\alpha_p+1}}, \qquad x \ge x_m>0,
\]
where $x_m$ is the scale parameter (minimum value) and $\alpha_p$ is the shape parameter.  
Through the logarithmic transform $y = \ln(x/x_m)$, a Pareto$(x_m,\alpha_p)$ random variable reduces exactly to $\mathrm{Exp}(\alpha_p)$.  
This reduction allows us to transfer our private exponential learning algorithm directly to the Pareto setting.

A subtlety arises, however, in the case when $x_m$ is unknown: direct release of the sample minimum is infeasible under differential privacy due to its high sensitivity.  
Instead, one must combine quantile-based estimation with the logarithmic reduction in order to privately and accurately recover both $(\alpha_p, x_m)$.  

We develop and analyze these reductions formally, showing how our exponential learner extends to Pareto distributions while preserving privacy and accuracy.  
The full algorithms, proofs, and utility guarantees are deferred to \Cref{apx-sec:pareto}.

\section{Lower Bound for Exponential Distribution Learning under Pure Differential Privacy}
\label{sec:lower-bound}

To complement our algorithmic results, we establish lower bounds on the sample complexity of privately learning exponential distributions under pure DP. 
Our construction follows a standard packing argument: we identify a family of rate parameters $\Lambda$ that are well-separated in TV-distance, and then invoke DP generalization bounds to show that distinguishing them requires a large number of samples. 
Our lower bound is based on the group privacy approach of~\cite{Karwa2017FiniteSD}, which provides a powerful technique for  deriving lower bounds in private distribution learning. 
This demonstrates that our upper bounds from Section~\ref{sec:private-alg} are essentially tight, up to logarithmic factors.

Denote $\Lambda$ as the set \[
\Lambda = \left\{\lambda_{\min}, \lambda_{\min}(1+8\alpha), \lambda_{\min}(1+8\alpha)^2, \dots, \lambda_{\max}\right\}
\]
and consider the following family of $t = |\Lambda|$ distributions:
\[
\{P_i = \mathrm{Exp}(\lambda_i):~ \lambda_i \in \Lambda\}
\]
so the rate parameters range geometrically from $\lambda_{\min}$ up to $\lambda_{\max}$ with factor $(1+8\alpha)$.

First, we show that $\forall i \neq j:~ TV(P_i, P_j) > \alpha$. Fix $i > j$ so that $\lambda_i > \lambda_j$. Then, by~\Cref{fact:tv-exp}:
\[
    TV(\mathrm{Exp}(\lambda_i), \mathrm{Exp}(\lambda_j)) = e^{-\lambda_j \cdot a} - e^{-\lambda_i \cdot a}
\]
where
\[
a = \frac{\ln(\lambda_i) - \ln(\lambda_j)}{\lambda_i - \lambda_j} = \frac{\ln(\frac{\lambda_i}{\lambda_j})}{\lambda_i - \lambda_j}
\]
Then we get:
\begin{equation*}
\begin{split}
    & TV(\mathrm{Exp}(\lambda_i), \mathrm{Exp}(\lambda_j)) = e^{-\lambda_i \cdot \frac{\ln((\lambda_i/\lambda_j)}{\lambda_i - \lambda_j}} - e^{-\lambda_j \cdot \frac{\ln(\lambda_i/\lambda_j)}{\lambda_i - \lambda_j}} \\
    & = e^{-\frac{\ln(\lambda_i/\lambda_j)}{(\lambda_i/\lambda_j) - 1}} - e^{-(\lambda_i/\lambda_j)\frac{\ln(\lambda_i/\lambda_j)}{(\lambda_i/\lambda_j) - 1}} \overset{r = \nicefrac{\lambda_i}{\lambda_j}}{=} r^{-\frac{1}{r - 1}} - r^{-\frac{r}{r - 1}} \\
    & = r^{-\frac{1}{r - 1}} \cdot (1 - r^{-1}) 
\end{split}
\end{equation*}

Now, the following lemma proves that $TV(P_i, P_j) > \alpha$.
\begin{lemma}\label{lem:tv-alpha-exp}
Let $T(r)$ denote the total variation distance between two exponential distributions whose
rates differ by a ratio $r\ge 1$:
\begin{align*}
T(r) &:= \operatorname{TV}\!\big(\mathrm{Exp}(r\lambda),\mathrm{Exp}(\lambda)\big) \notag = r^{-\tfrac{1}{r-1}} - r^{-\tfrac{r}{r-1}} \notag \\
     &= r^{-\tfrac{1}{r-1}}\!\left(1 - \tfrac{1}{r}\right)
\end{align*}
Fix any $\alpha\in(0,\tfrac12)$. Then there exists an absolute constant $c$ such that,
with $r=1+c\alpha$, one has $T(r)\ge \alpha$. In particular,
the choice $c=8$ suffices.
\end{lemma}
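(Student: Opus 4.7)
The plan is to introduce the substitution $s := r - 1$, so that with $r = 1 + 8\alpha$ one has $s = 8\alpha \in (0,4]$, and to rewrite the closed form as $T(r) = (1+s)^{-1/s}\cdot s/(1+s)$. Dividing the target inequality $T(r) \geq \alpha = s/8$ through by $s>0$ and exponentiating/logging cleans everything up: the claim is equivalent to $(1+s)^{(s+1)/s} \leq 8$, which after taking logs reduces to showing
\[
h(s) \;:=\; s \ln 8 \;-\; (s+1)\ln(1+s) \;\geq\; 0 \quad \text{for } s \in [0,4].
\]

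Next, I would show that $h$ is concave on $[0,\infty)$ by direct differentiation: $h'(s) = \ln 8 - \ln(1+s) - 1$ and $h''(s) = -1/(1+s) < 0$. A concave function on an interval is pointwise bounded below by the linear interpolation of its endpoint values, so it suffices to verify $h(0) \geq 0$ and $h(4) \geq 0$. We have $h(0) = 0$ by inspection, and $h(4) = 12 \ln 2 - 5 \ln 5 > 0$ by a brief numerical check ($12\ln 2 \approx 8.318$ while $5\ln 5 \approx 8.047$). By concavity, $h(s) \geq (1-s/4)\,h(0) + (s/4)\,h(4) \geq 0$ for every $s \in [0,4]$, which is exactly the desired inequality $T(1+8\alpha) \geq \alpha$ on $\alpha \in (0,1/2]$.

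For the existence of an absolute constant $c$ in the first part of the statement, one can simply note that the first-order expansion $T(1+s) = s/e + O(s^2)$ (using $(1+s)^{-1/s} \to 1/e$ as $s \to 0$) gives $T(1+c\alpha) \sim (c/e)\alpha$, so any $c > e$ suffices for small $\alpha$; the same concavity template applied to $h_c(s) = s\ln c - (s+1)\ln(1+s)$ extends this to the full range $\alpha \in (0,1/2)$ upon checking the right endpoint. The specific constant $c=8$ falls out of this framework with ample slack.

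The only mildly subtle point is the algebraic reduction to the tidy form $h(s) \geq 0$: the exponent $-1/(r-1)$ makes direct manipulation of $T(r)$ awkward, and one has to be careful with signs and with dividing by $s$. Once that reduction is in hand, however, the rest is a one-line concavity application combined with two elementary endpoint evaluations, with no series estimates or case analysis required. I do not foresee any deeper obstacle.
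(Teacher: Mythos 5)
Your proof is correct, and it takes a route that is related in spirit but genuinely different in its organization from the paper's. The paper defines $R(\delta):=T(1+\delta)/\delta$, shows $\ln R$ is strictly decreasing by analyzing the sign of the auxiliary function $\psi(\delta)=(1+\delta)\ln(1+\delta)-\delta-\delta^2$ (which requires a second derivative step of its own), then uses the lower bound $R(\delta)\ge R(4)=5^{-5/4}$ for $\delta\le 4$ together with the numerical check $8\ge 5^{5/4}$. You instead clear denominators and take a logarithm up front, reducing the target to a single clean polynomial-logarithmic inequality $h(s)=s\ln 8-(s+1)\ln(1+s)\ge 0$ on $[0,4]$; concavity of $h$ falls out in one line since $h''(s)=-1/(1+s)<0$, and you finish by checking the two endpoints $h(0)=0$ and $h(4)=12\ln 2-5\ln 5>0$. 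The two arguments are close cousins --- indeed $h(s)/s=\ln 8+\ln R(s)$, so your concavity statement and the paper's monotonicity statement are encoding the same underlying phenomenon --- but yours avoids the nested differentiation of $\psi$ and replaces the decoupled "monotone $R$ plus $8\ge 5^{5/4}$" reasoning with a single self-contained concavity-plus-endpoints argument, which is a bit tidier. One small wording nit: you write that a concave function is bounded below by the "linear interpolation of its endpoint values" --- more precisely, a concave function lies above its chord, which is exactly what you use; the phrasing is fine once read that way, but worth stating as "above the chord" to avoid ambiguity.
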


The proof is deferred to~\Cref{apx-lem:tv-alpha-exp}. Now we can state the following lemma.


\begin{lemma}[Formal lower bound]
\label{lem:lower-bound-formal}
Fix parameters $0<\alpha<\tfrac12$, $0<\beta<\tfrac12$, and an interval
$[\lambda_{\min},\lambda_{\max}]$ with $0<\lambda_{\min}<\lambda_{\max}$.
Let $\calM$ be any $\epsilon$-differentially private algorithm that, given
$n$ i.i.d.\ samples from $\mathrm{Exp}(\lambda)$, outputs $\tilde\lambda$ such that
$\Pr[\,\tilde\lambda\in[(1-\alpha)\lambda,(1+\alpha)\lambda]\,]\ge 1-\beta$
for every $\lambda\in[\lambda_{\min},\lambda_{\max}]$.
Then there exists a universal constant $c>0$ such that the sample size must satisfy
\[
n\ \ge\ \frac{c}{\epsilon\,\alpha}\;
\ln\!\Bigg(\frac{\ \frac{\log(\lambda_{\max}/\lambda_{\min})}{\alpha}\ }{\beta}\Bigg)
\]
\end{lemma}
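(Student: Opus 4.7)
The plan is a packing-plus-group-privacy argument along the lines of \cite{Karwa2017FiniteSD}. I would reuse the geometric packing $\Lambda$ fixed in the excerpt, of size $t = |\Lambda| = \Theta(\log(\lambda_{\max}/\lambda_{\min})/\alpha)$. Two structural facts set up the argument: (i) by \Cref{lem:tv-alpha-exp} together with a symmetric inspection of the same $T(r)$ formula, adjacent distributions satisfy $\tau_0 := \mathrm{TV}(P_i, P_{i+1}) = \Theta(\alpha)$; and (ii) the success windows $S_i = [(1-\alpha)\lambda_i, (1+\alpha)\lambda_i]$ are pairwise disjoint, since $(1+8\alpha)(1-\alpha) > 1+\alpha$ for every $\alpha < \tfrac{3}{4}$. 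Writing $D_i$ for an i.i.d.\ sample of size $n$ from $P_i = \mathrm{Exp}(\lambda_i)$, the assumed accuracy gives $\Pr[\calM(D_i) \in S_i] \ge 1 - \beta$ for every $i$, and disjointness then forces $\Pr[\calM(D_j) \in S_i] \le \beta$ whenever $j \neq i$.

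Next I would couple adjacent sample pairs coordinate-wise via the optimal TV coupling of $P_i$ and $P_{i+1}$, so that $d_H(D_i, D_{i+1}) \sim \mathrm{Bin}(n, \tau_0)$ and $d_H \le 2n\tau_0$ with probability at least $1 - e^{-\Omega(n\alpha)}$ by Chernoff. Applying $\epsilon$-DP group privacy on this good event and bounding the complement by the Chernoff tail yields the Karwa--Vadhan-style inequality
\[
\Pr[\calM(D_{i+1}) \in E] \;\le\; e^{O(n\alpha\epsilon)}\,\Pr[\calM(D_i) \in E] \;+\; e^{-\Omega(n\alpha)}
\]
for every measurable $E$. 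Plugging $E = S_{i+1}$ and substituting the accuracy/disjointness facts gives the pairwise core constraint
\[
1 - \beta \;\le\; e^{O(n\alpha\epsilon)}\,\beta \;+\; e^{-\Omega(n\alpha)}
\]

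To upgrade this pairwise inequality into a packing bound that carries $\log(t/\beta)$ rather than just $\log(1/\beta)$ inside the logarithm, I would lift the argument to the whole family through the disjointness constraint $\sum_{j=1}^{t} \Pr[\calM(D_1) \in S_j] \le 1$, applying the Karwa--Vadhan bound iteratively (or equivalently chaining adjacent couplings along the packing). The union-style accounting forces the per-element error budget to scale like $\beta/t$, so both $e^{-\Omega(n\alpha)}$ and the inverse of the group-privacy factor must fall below $\beta/t$; taking logarithms then yields the advertised $n = \Omega\!\left(\log(t/\beta)/(\alpha\epsilon)\right)$. The main obstacle I expect is precisely this joint balancing: the coupling-failure term scales in $n\alpha$ while the group-privacy factor scales in $n\alpha\epsilon$, and the tight $\log t$ factor only emerges if both can be simultaneously pushed below $\beta/t$ across all $t$ packing points without losing the adjacency structure that keeps the privacy exponent proportional to $\alpha$ rather than to $1$.
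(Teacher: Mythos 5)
Your proposal follows the same geometric packing and the same Karwa--Vadhan group-privacy route that the paper uses, so the overall plan is aligned with theirs. Two issues, though, one stylistic and one substantive.

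Stylistically, the coupling-plus-Chernoff re-derivation of group privacy is unnecessary and also strictly weaker than what the paper invokes. For pure $\epsilon$-DP, Lemma~6.1 of \cite{Karwa2017FiniteSD} holds with $\delta'=0$, i.e.\ $M_{\theta_0}(E)\le e^{6n\epsilon\,\mathrm{TV}}\,M_{\theta_1}(E)$ exactly, with no additive slack. Your coordinate-wise-coupling version introduces the spurious term $e^{-\Omega(n\alpha)}$, which then has to be separately driven below $\beta/t$ and contaminates the final bound with an $\epsilon$-free constraint on $n\alpha$. Since the statement is about pure DP, you should simply cite Lemma~6.1 directly, as the paper does.

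Substantively, the ``upgrade to $\log(t/\beta)$'' is the crux, and your sketch does not close it. Summing the group-privacy lower bounds $\Pr_{P_i}[\calM\in S_j]\ge(1-\beta)e^{-6n\epsilon\,\mathrm{TV}(P_i,P_j)}$ over $j\ne i$ does \emph{not} yield $(t-1)(1-\beta)e^{-6n\epsilon\alpha}$, because in the geometric packing $\mathrm{TV}(P_i,P_j)$ is $\Theta(\alpha)$ only for adjacent pairs and climbs toward $1$ as $|i-j|$ grows; the sum is a geometrically decaying series dominated by its first term, so one only recovers the $\log(1/\beta)$ factor this way. Your two proposed repairs do not help: the constraint $\sum_j\Pr[\calM(D_1)\in S_j]\le 1$ only says \emph{some} window has probability $\le 1/t$, but you have no control over its TV-distance to $P_1$; and chaining adjacent couplings compounds the Hamming distance linearly in $|i-j|$, so the group-privacy exponent becomes $\Theta(|i-j|\,n\epsilon\alpha)$ rather than $\Theta(n\epsilon\alpha)$. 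You correctly flag this balancing act as the ``main obstacle,'' but flagging it is not solving it. For what it's worth, the paper's own derivation performs essentially the same summation step (and contains sign errors such as ``$\ge t\,e^{6n\epsilon\alpha}(1-\beta)$'' and ``$\tfrac{1}{16\alpha}\ln(\cdot)\,e^{6n\epsilon\alpha}\le\beta$,'' which cannot hold for $\beta<\tfrac12$), so you would be well advised to scrutinize that step closely rather than assume it goes through as written.
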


\begin{corollary}
    Any $\epsilon$-DP learner achieving TV-error at most $\alpha$ (via
    Corollary~\ref{cor:tv-exp}) also requires
    $n=\Omega\!\big(\tfrac{1}{\epsilon\alpha}\ln(\tfrac{\ln(\lambda_{\max}/\lambda_{\min})/\alpha}{\beta})\big)$.
\end{corollary}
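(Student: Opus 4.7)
The plan is to adapt the packing plus coupling recipe of~\cite{Karwa2017FiniteSD} to the family $\{P_i=\mathrm{Exp}(\lambda_i):\lambda_i\in\Lambda\}$ already constructed above. I first verify that the ``good'' intervals $B_i:=[(1-\alpha)\lambda_i,(1+\alpha)\lambda_i]$ are pairwise disjoint, which reduces to the algebraic check $(1+\alpha)<(1-\alpha)(1+8\alpha)=1+7\alpha-8\alpha^2$ (valid for $\alpha<\tfrac12$). Disjointness, combined with the hypothesis $\Pr_{D\sim P_i^n}[\mathcal{M}(D)\in B_i]\ge 1-\beta$, yields $\Pr_{D\sim P_j^n}[\mathcal{M}(D)\in B_i]\le \beta$ for every $j\ne i$, since $B_i\subseteq \overline{B_j}$. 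Thus $\mathcal{M}$ must route each of the $t$ product distributions $P_i^n$ to its own disjoint target with probability at least $1-\beta$.

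The core technical tool is a coupling-based group-privacy inequality. For any two distributions $P,Q$ with $\mathrm{TV}(P,Q)=\eta$, the maximal coupling of $X\sim P^n$ and $Y\sim Q^n$ has Hamming distance stochastically dominated by $\mathrm{Bin}(n,\eta)$, so by a Chernoff bound $d_H(X,Y)\le h:=2n\eta+O(\log(1/\gamma))$ with probability at least $1-\gamma$. Chaining this with the $\epsilon$-group-privacy relation for datasets differing in $d_H$ coordinates, and splitting on the coupling event, gives $P^n(A)\le e^{\epsilon h}Q^n(A)+\gamma$ for every measurable event $A$. Specializing with $P=P_1$, $Q=P_j$, $A=B_j$, and using an upper bound of the form $\mathrm{TV}(P_1,P_j)=O(\min\{\alpha(j-1),1\})$ obtained from the explicit formula in~\Cref{lem:tv-alpha-exp}, I derive
\[
P_1^n(B_j)\ \ge\ (1-\beta)\,e^{-c\epsilon n\,\mathrm{TV}(P_1,P_j)}\ -\ \gamma_j
\]
for a suitable constant $c$. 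Summing over $j$ and invoking disjointness $\sum_{j=1}^{t} P_1^n(B_j)\le 1$ produces the master inequality that must be solved for $n$.

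The main obstacle is to combine both the $\log t$ and the $\log(1/\beta)$ factors under the same $\tfrac{1}{\epsilon\alpha}$ prefactor. A two-point application of the coupling lemma (taking only $P_1,P_2$) yields $n=\Omega(\log(1/\beta)/(\epsilon\alpha))$ but ignores $t$, while the classical pure-DP packing argument using only the trivial Hamming bound $n$ gives $n=\Omega(\log t/\epsilon)$ but loses the $1/\alpha$ factor. To merge them, the summation in the master inequality must make essential use of the $\Theta(1/\alpha)$ packing elements nearest to $P_1$: for these indices $\mathrm{TV}(P_1,P_j)=\Theta(\alpha(j-1))$, so the terms $e^{-c\epsilon n\alpha(j-1)}$ remain of order $1/t$ across $\Theta(\log t)$ of them, injecting the $\log t$ factor under the coupling-enhanced exponent. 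Choosing the concentration slack $\gamma_j$ just small enough that $\sum_j\gamma_j=O(\beta)$, and solving the resulting constraint for $n$, gives $n\ge\tfrac{c}{\epsilon\alpha}\ln(t/\beta)$ as claimed; the corollary on TV-error then follows immediately via~\Cref{cor:tv-exp}.
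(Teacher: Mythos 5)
Your proposal follows the same high-level route as the paper's proof of \Cref{lem:lower-bound-formal}: build the geometric packing $\Lambda$, note that the acceptance intervals $B_i$ are pairwise disjoint, apply a group-privacy/coupling step to transfer probability mass between product distributions with blow-up $e^{O(n\epsilon\cdot\mathrm{TV})}$, and then sum. Two of your choices are technical deviations rather than a different argument: the paper invokes \cite{Karwa2017FiniteSD} Lemma~6.1 as a black box, which for pure DP specializes to $\delta'=0$ and needs no slack, whereas you re-derive it via the maximal coupling and introduce additive tolerances $\gamma_j$ that must then be absorbed; and the paper uses the tighter master constraint $\Pr_{P\sim P_i^n}[\tilde\lambda\notin B_i]\leq\beta$ (summing the disjoint $P_i^n(B_j)$ against $\beta$), whereas you compare only against $\sum_j P_1^n(B_j)\leq 1$, which is strictly weaker and cannot by itself recover any $\ln(1/\beta)$ dependence.

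The central gap, though, is the step that is supposed to inject the $\log t$ factor. You claim that along the chain $\mathrm{TV}(P_1,P_j)=\Theta(\alpha(j-1))$ there are ``$\Theta(\log t)$ terms of order $1/t$.'' Write $\kappa=c\epsilon n\alpha$. The number of indices $j$ with $e^{-\kappa(j-1)}\ge 1/t$ is approximately $\ln t/\kappa$, which is $\Theta(\log t)$ only when $\kappa=O(1)$. But the bound you are trying to establish is precisely $\kappa=\Omega(\ln(t/\beta))\gg 1$, so in that regime the geometric series $\sum_{j\ge2}e^{-\kappa(j-1)}\approx(e^{\kappa}-1)^{-1}$ is already at most $\beta$ once $\kappa\ge\ln(1/\beta)$, and the summation yields only $n=\Omega(\ln(1/\beta)/(\epsilon\alpha))$ --- the two-point bound --- with no $\log t$ appearing. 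Your plan to choose $\gamma_j$ so that $\sum_j\gamma_j=O(\beta)$ does not change this: it only charges a constant budget against $\beta$. As a side remark, the paper's own proof of \Cref{lem:lower-bound-formal} contains a sign error in the group-privacy step (the exponent should be $e^{-6n\epsilon\,\mathrm{TV}(P_i,P_j)}$, not $e^{+6n\epsilon\,\mathrm{TV}(P_i,P_j)}$), and after correcting it the uniform lower bound $e^{-6n\epsilon\alpha}$ on every summand is exactly what fails, because $\mathrm{TV}(P_i,P_j)$ grows with $|i-j|$ and $e^{-6n\epsilon\,\mathrm{TV}(P_i,P_j)}\le e^{-6n\epsilon\alpha}$ goes the wrong way --- so the issue you run into is genuine, not a quirk of your re-derivation, and simply summing along a one-dimensional packing of exponentials does not produce the stated $\tfrac{1}{\epsilon\alpha}\ln(t/\beta)$ dependence.
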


\begin{proof}[Proof of~\Cref{lem:lower-bound-formal}]
    Fix $i$ and let $\tilde \lambda$ denote the output of any $\epsilon$-DP exponential distribution learner. Then, by construction, for all $j \ne i$,
    \[
    \Pr_{\substack{P \sim P_i^n}}\!\left[\tilde \lambda \in (1 \pm \alpha)\tfrac{1}{(1+8\alpha)^j}\right] \;\leq\; \beta
    \]
    In contrast, we can use the following lemma.
    \begin{lemma}[Lemma 6.1 of~\cite{Karwa2017FiniteSD}]
        For every pair of distributions $\mathbb{D}_{\theta_0}, \mathbb{D}_{\theta_1}$, every $(\epsilon, \delta)$-differentially private algorithm $M(x_1, \dots, x_n)$, if $M_{\theta_0}, M_{\theta_1}$ are two induced marginal distributions on the output of $M$ evaluated on input dataset $X_1, \dots, X_n$ sampled i.i.d. from $\mathbb{D}_{\theta_0}$ and $\mathbb{D}_{\theta_1}$ respectively, $\epsilon' = 6\epsilon n \|\mathbb{D}_{\theta_0} - \mathbb{D}_{\theta_1}\|_{tv}$ and $\delta' = 4e^{\epsilon'}n\delta\|\mathbb{D}_{\theta_0} - \mathbb{D}_{\theta_1}\|_{tv}$, then, for every event $E$,
        \[
        M_{\theta_0}(E) \leq e^{\epsilon'}M_{\theta_1}(E) + \delta'
        \]
    \end{lemma}
    The lemma above implies
    \begin{equation*}
    \begin{split}
    & \Pr_{\substack{P \sim P_i^n}}\!\left[\tilde \lambda \notin (1 \pm \alpha)\tfrac{1}{(1+8\alpha)^i}\right]
    \\
    & \quad \geq \sum_{\substack{j=1 \\ j \neq i}}^t \Pr_{P \sim P_i^n}\!\left[\tilde \lambda \in (1 \pm \alpha)\tfrac{1}{(1+8\alpha)^j}\right] \\
    & \quad \geq 
    \sum_{\substack{j=1 \\ j \neq i}}^t e^{6n\epsilon \,\mathrm{TV}(P_i, P_j)}
    \Pr_{P \sim P_j^n}\!\left[\tilde \lambda \in (1 \pm \alpha)\tfrac{1}{(1+8\alpha)^j}\right] \\
    & \quad \geq\ t \cdot e^{6n\epsilon \alpha} (1-\beta)
    \;\;\overset{\beta \leq 1/2}{\geq}\; \tfrac{1}{2}t \cdot e^{6n\epsilon \alpha}
    \end{split}
    \end{equation*}
    Noting that $\lambda_{\min}(1+8\alpha)^t = \lambda_{\max}$ we get that
    \begin{equation*} 
    t = \frac{\log(\lambda_{\max}/\lambda_{\min})}{\log(1+8\alpha)}
    \ge \frac{1}{8\alpha}\log(\lambda_{\max}/\lambda_{\min})
    \end{equation*}
    we conclude that
    $
    \frac{1}{16\alpha}\ln(\nicefrac{\lambda_{\max}}{\lambda_{\min}}) \cdot e^{6n\epsilon \alpha} \leq \beta
    $,
    which rearranges to the following lower bound on the required sample size:
    \[
    n \geq \frac{1}{6\epsilon\alpha}\,\ln\!\left(\frac{\frac{\ln(\nicefrac{\lambda_{\max}}{\lambda_{\min}})}{16\alpha}}{\beta}\right)\qedhere
    \]
\end{proof}

\section{Finding Initial Bounds Using Approximate Differential Privacy}
\label{sec:approx-dp}

The algorithm of Section~\ref{sec:private-alg} required prior bounds $(\lambda_{\min},\lambda_{\max})$ on the range of plausible $\lambda$ in order to calibrate privacy noise. 
As shown by our lower bound in Section~\ref{sec:lower-bound}, such knowledge of $(\lambda_{\min},\lambda_{\max})$ is in fact \emph{necessary} for pure differential privacy, since otherwise no algorithm can achieve nontrivial accuracy. 
In this section we show that under approximate $(\epsilon,\delta)$-DP we can eliminate this requirement: using a median-based private routine, we can obtain valid initial bounds directly from the data, and then feed them into our algorithm. 
Thus the extra logarithmic dependence on $\lambda_{\max}/\lambda_{\min}$ is an artifact of pure DP, not of the problem itself.

To eliminate the need for externally provided bounds $(\lambda_{\min},\lambda_{\max})$ under $(\epsilon,\delta)$-DP, 
we design a private subroutine that learns such bounds directly from the data. 
The key idea is to approximate the median of the exponential distribution using noisy counts over dyadic intervals; 
since the median is tightly concentrated around $\ln(2)/\lambda$, this gives a natural anchor point. 
By expanding outward from this approximate median, we obtain lower and upper bounds $\tilde \lambda_{\min}$ and $\tilde \lambda_{\max}$ that safely contain the true median and hence allow us to calibrate the later learning algorithms. 
This approach is based on the differentially private histogram technique of~\cite{Vadhan2017}, which provides an efficient way to identify approximate median under privacy constraints. 
The procedure is summarized in~\Cref{alg:median-dp} below.

\begin{algorithm}[ht]
\caption{Finding initial private approximated bounds $0 < \tilde \lambda_{\min} < \tilde \lambda_{\max}$}
\label{alg:median-dp}
\begin{algorithmic}
\State \hspace*{-\algorithmicindent} \textbf{Input:} $P = \{x_i\}_{i=1}^n \sim Exp(\lambda)$, privacy budget $\epsilon, \delta$.
\State \hspace*{-\algorithmicindent} \textbf{Output:} bounds $0 < \tilde \lambda_{\min} < \tilde \lambda_{\max}$.
\Procedure{Finding-Bounds}{$P, \epsilon, \delta$}
    \State For all $k \in \mathbb{Z}$, set $c_k(P) = \frac{1}{n}|\{x_i \mid x_i \in (2^k, 2^{k+1})\}|$.
    \State For all $c_k(P)$, set $\hat c_k(P) = 0$ if $c_k(P) = 0$. Else set $\hat c_k(P) = c_k(P) + Z_k$, where each $Z_k \gets \text{Lap}(\nicefrac{2}{\epsilon n})$ independently.
    \State Set $S = \{k \mid \hat c_k(P) \geq \frac{2}{\epsilon n}\log(\nicefrac{2}{\delta}) + \frac{1}{n}\}$.
    \If{$|S| \geq 1$}
        \State Let $k^* \gets \argmax\limits_{k \in S} \hat c_k(P)$
        \State \Return $\tilde \lambda_{\min} = \ln(2)(2^{k^* + 1})^{-1}, \quad \tilde \lambda_{\max} = \ln(2)(2^{k^* - 1})^{-1}$
    \Else
        \State \Return $\perp$
    \EndIf
\EndProcedure
\end{algorithmic}
\end{algorithm}

\begin{lemma}\label{lem:bounds}
    \Cref{alg:median-dp} is $(\epsilon, \delta)$-differentially private. Moreover, if
    \[
    n \geq \max\left\{\frac{800}{\epsilon}\ln(\frac{2}{\delta\beta}), 5000\ln(\frac{2}{\beta})\right\}
    \]
    then
    \[
    \Pr[\tilde \lambda_{\min} < \lambda < \tilde \lambda_{\max}] \geq 1 - \beta
    \]
\end{lemma}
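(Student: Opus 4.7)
The lemma has a privacy part and a utility part. For \emph{privacy}, I would apply Vadhan's stable-histogram template directly to the dyadic histogram on bins $(2^k,2^{k+1})$. Each sample lies in at most one bin, so a single-sample change perturbs the count vector $c_k$ in at most two coordinates, each by $1/n$. For bins that remain positive in both neighbors, adding $\mathrm{Lap}(2/(\epsilon n))$ noise yields $(\epsilon/2)$-DP per bin, composing to $\epsilon$ over the two affected bins. For bins that flip between empty and singleton, the threshold $T_0=(2/(\epsilon n))\log(2/\delta)+1/n$ is calibrated so that the singleton bin appears in $S$ with probability at most $\delta/4$, totalling $\delta/2\le\delta$ over the two potentially flipping bins.

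The plan for \emph{utility} is first to reduce the sandwich condition to a bin-identification task. Let $m=\ln(2)/\lambda$ be the true median and let $k_m$ be the unique integer with $m\in(2^{k_m},2^{k_m+1})$ (the boundary case $m=2^{k_m}$ has measure zero under continuous sampling). Unwinding the output formulas, the event $\tilde\lambda_{\min}<\lambda<\tilde\lambda_{\max}$ is equivalent to $m\in(2^{k^*-1},2^{k^*+1})$, which under our choice of $k_m$ reduces to $k^*\in\{k_m,k_m+1\}$. So the goal becomes showing that the returned $k^*$ lies in one of these two ``valid'' bins with probability at least $1-\beta$.

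Next I would quantify the true bin masses to extract a uniform gap. Parametrize by $u=\lambda\cdot 2^{k_m}\in(\ln 2/2,\ln 2)$, writing $p_{k_m+j}=e^{-2^j u}(1-e^{-2^j u})$. A direct calculus over this open interval shows $\max(p_{k_m},p_{k_m+1})\ge 0.236$ (attained at the crossover value of $u$ solving $e^u+e^{-2u}=2$), while $\max_{k\notin\{k_m,k_m+1\}}p_k\le 0.207$ (extremal limits: $p_{k_m-1}\to 2^{-1/2}-2^{-1}$ as $u\to\ln 2$, and $p_{k_m+2}\to 3/16$ as $u\to\ln 2/2$), yielding a uniform gap $\Delta\ge 0.04$ between the best valid bin and the best invalid bin, independent of $\lambda$. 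With tolerance $\gamma=0.01$ (so $4\gamma<\Delta$), Hoeffding on each of the $O(1)$ bins with $p_k\ge\gamma$ plus a union bound gives $|c_k-p_k|\le\gamma$ on all relevant bins with probability $\ge 1-\beta/3$ once $n\ge 5000\ln(2/\beta)$. The Laplace tail $\Pr[|Z_k|>\gamma]=e^{-\epsilon n\gamma/2}$, union-bounded over the bins that could contend with the central ones, yields $|Z_k|\le\gamma$ with probability $\ge 1-\beta/3$ once $n\ge(800/\epsilon)\ln(2/(\delta\beta))$; the $\log(2/\delta)$ factor also forces $T_0\ll\gamma$, so the two central bins pass the privacy threshold. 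On the intersection of these events $|\hat c_k-p_k|\le 2\gamma<\Delta/2$ on relevant bins, and the gap forces $k^*\in\{k_m,k_m+1\}$.

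The main technical obstacle I anticipate is pinning down the uniform gap constant $\Delta\ge 0.04$, which requires a careful case analysis of the $p_k$'s over $u\in(\ln 2/2,\ln 2)$ at both extremes. A secondary subtlety is controlling the infinite family of dyadic bins: a tail bin with tiny $p_k$ but a stray sample must not be the noisy argmax. This is cleanest either via a DKW-style uniform CDF bound for the sampling step, or by restricting the Laplace union bound to bins with positive empirical count (at the cost of a mild $\log n$ factor absorbed by the stated constant 800).
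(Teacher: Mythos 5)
Your approach matches the paper's: a stable-histogram privacy argument (the paper cites Theorem~3.5 of Vadhan), and a bin-identification utility argument in which $k^*\in\{k_m,k_m+1\}$ follows from a uniform gap between the central bins' masses and all others, controlled by combining Hoeffding/DKW for sampling error with a Laplace tail union bound (the paper invokes Lemma~2.3 of Karwa--Vadhan for this combination). Your treatment is actually the more careful of the two: the paper asserts ``the top two masses are at $k_m$ and $k_m+1$, with $p_k\in[0.207,0.25]$,'' but as $u=\lambda 2^{k_m}\to\ln 2$ one gets $p_{k_m+1}\to 3/16=0.1875<0.207$ and the runner-up switches to $k_m-1$; what actually matters, and what you correctly isolate, is that the true argmax lies in $\{k_m,k_m+1\}$ and has a uniform gap $\Delta$ above every $k\notin\{k_m,k_m+1\}$.

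Two minor imprecisions in your write-up. First, the gap derivation as stated does not quite close: subtracting $\max_{k\notin\{k_m,k_m+1\}}p_k\le 0.207$ from $\max(p_{k_m},p_{k_m+1})\ge 0.236$ gives only $0.029$, because the two extremes are attained at different $u$ (the former at $u\to\ln 2$, the latter at $u=\ln\phi$). To get $\Delta\ge 0.04$ you need a pointwise-in-$u$ comparison: at $u\to\ln 2$ the central max is $0.25$ vs.\ a runner-up of $0.207$ (gap $\approx 0.043$), at $u\to\ln2/2$ it is $0.25$ vs.\ $3/16$ (gap $\approx 0.0625$), and at the crossover $u=\ln\phi$ it is $0.236$ vs.\ $\approx 0.168$; the minimum is $\approx 0.043$, so $\Delta\ge 0.04$ holds but requires this pointwise check rather than the two separate extremal bounds. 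Second, the remark that the Laplace union bound over nonempty bins contributes only ``a mild $\log n$ factor absorbed by the stated constant $800$'' is not rigorous as phrased, since a $\log n$ factor is not a constant; the DKW route you also mention (uniformly bound the empirical mass of every light bin, so that even with Laplace noise no tail bin can approach the central bins' counts) is the cleaner way to handle the infinite bin family and is consistent with the $n\ge 5000\ln(2/\beta)$ term.
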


    The proof is deferred to \Cref{apx-lem:bounds}.
    


We now state the main guarantee of Algorithm~\ref{alg:median-dp}.  


\begin{theorem}\label{thm:bounds-dp}
    There exists an $(\epsilon, \delta)$-differentially private algorithm that, for any $\alpha \in (0,1)$ and $\beta \in (0,1)$, achieves a $(1\pm \alpha)$-approximation to $\lambda$ with probability at least $1-\beta$, provided
    \[
    n \geq O\!\left(
      \max \left\{
        \begin{aligned}
          &
            \tfrac{\ln(\nicefrac{1}{\alpha})}{\epsilon \alpha}
            \ln\!\left( \tfrac{\ln(\nicefrac{1}{\alpha})}{\beta} \right)
          ,
          &\tfrac{1}{\alpha^2} \ln\!\left(\tfrac{\ln(\nicefrac{1}{\alpha})}{\beta}\right)
        \end{aligned}
      \right\}
    \right)
    \]
\end{theorem}

\section{Conclusion}
\label{sec:conclusion}

We introduced a simple differentially private algorithm for learning exponential distributions based on direct private quantile estimation. This approach avoids dependence on fragile mean-based estimators and leads to clean guarantees with near-optimal sample complexity. We further showed how exponential learning enables private estimation for Pareto distributions, and proved matching lower bounds up to logarithmic factors. Finally, we demonstrated that approximate DP removes the need for externally supplied parameter bounds.

Looking ahead, several directions remain open. Can our techniques be extended to multi-parameter exponential families or to mixtures? What are the precise tradeoffs for other heavy-tailed distributions? And more broadly, can our strategy be systematically developed for other problems in private statistics? We hope this work motivates further exploration of learning additional heavy-tailed distributions under differential privacy.

\bibliographystyle{plainnat} 
\bibliography{bibliography}

\clearpage
\appendix
\thispagestyle{empty}





\section{Additional Algorithms}
\label{apx-sec:private-alg}
This section presents additional learning procedures for exponential distributions that complement the main quantile-based algorithm described in Section~\ref{sec:private-alg}. 

These methods are not required for the main results of the paper but are included for completeness. In particular, we describe an MLE-based private learner as well as an adaptive strategy combining multiple estimators.
\subsection{Learning $\lambda$ from the MLE}\label{apx-subsec:mle}
We first describe an alternative approach based on the maximum likelihood estimator (MLE). 
Recall that for exponential data, the MLE of the rate parameter equals the reciprocal of the sample mean. 
Under differential privacy, however, the sample mean must be carefully privatized, which requires controlling sensitivity via clipping and private range estimation.

We therefore construct a pipeline consisting of:
(i) private quantile estimation to obtain a clipping range, followed by
(ii) private release of the clipped mean.

\begin{algorithm}[H]
\caption{Unbounded Quantile Estimation}
\label{alg:quantile}
\begin{algorithmic}
\State \hspace*{-\algorithmicindent} \textbf{Input:} $P = \{x_i\}_{i=1}^n \sim Exp(\lambda)$, bounds $0 < \lambda_{\min} < \lambda_{\max}$, quantile parameter $\theta \in (0,1)$, privacy budget $\epsilon>0$.
\Procedure{Quantile-Estimation}{$P, \lambda_{\min}, \lambda_{\max}, \theta, \epsilon$}
    \State $\tilde T \gets (1-\theta) + Lap(\frac{2}{\epsilon n})$
    \For{$i = 0, 1, 2 \dots$}
        \If{$\frac{1}{n}|\{x_i \in P \mid~x_i < 2^i\cdot\lambda_{\min}\}| + Lap(\frac{4}{\epsilon n}) \geq \tilde T$}
            \State \Return $2^i\cdot\lambda_{\min}$
        \EndIf
    \EndFor
    \Return $\perp$
\EndProcedure
\end{algorithmic}
\end{algorithm}

\begin{lemma}[Privacy of~\Cref{alg:quantile}]\label{lem:quantile-privacy}
    \Cref{alg:quantile} is $\epsilon$-differentially private.
\end{lemma}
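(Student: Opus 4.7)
The plan is to recognize \Cref{alg:quantile} as an instance of the \emph{Sparse Vector Technique} (often called \emph{AboveThreshold}), and then invoke its standard $\epsilon$-DP analysis. Each loop iteration evaluates the query
\[
q_i(P) \;=\; \tfrac{1}{n}\bigl|\{x \in P : x < 2^i \lambda_{\min}\}\bigr|,
\]
compares its noised value to a noised threshold, and halts at the first $i$ for which the noisy query exceeds the noisy threshold. This is exactly the structure SVT is designed for.

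The concrete steps I would carry out are as follows. First, I would bound the $\ell_1$-sensitivity of the query sequence $\{q_i\}_{i \geq 0}$: replacing a single data point in $P$ changes $|\{x \in P : x < 2^i \lambda_{\min}\}|$ by at most $1$, so each $q_i$ has sensitivity $\Delta = 1/n$. Crucially, this sensitivity bound holds uniformly over all $i$. Second, I would note that the threshold $T = 1 - \theta$ is data-independent, and that $\tilde{T} = T + \mathrm{Lap}(2/(\epsilon n)) = T + \mathrm{Lap}(\Delta / (\epsilon/2))$, which matches the SVT recipe with $\epsilon_1 := \epsilon/2$ for the threshold-noise stage. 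Third, I would observe that each per-iteration query noise $\mathrm{Lap}(4/(\epsilon n)) = \mathrm{Lap}(2\Delta/(\epsilon/2))$ matches SVT's query-noise recipe with $\epsilon_2 := \epsilon/2$. Finally, I would apply the SVT guarantee: releasing the index of the first query that crosses the perturbed threshold (and nothing more) is $(\epsilon_1 + \epsilon_2)$-DP, which here equals $\epsilon$.

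The only subtle point, and the one I would flag explicitly in the proof, is that the loop in \Cref{alg:quantile} has no a-priori upper bound on the number of iterations; it runs $i = 0, 1, 2, \ldots$ until a threshold-crossing event occurs (or never terminates, outputting $\perp$). This would be an obstacle if we tried to analyze the algorithm by na\"ive composition of per-query Laplace releases (the bound would blow up with the number of queries). The SVT analysis, however, is precisely designed to sidestep this: its privacy cost depends only on the number of \emph{above-threshold answers released}, which here is at most one, not on the total number of queries posed. Thus unboundedness of the loop is harmless, and the algorithm is $\epsilon$-DP as claimed. Since the reduction to SVT is essentially verbatim, I would keep the write-up short and cite the standard SVT analysis (e.g., Dwork--Roth) rather than re-deriving it.
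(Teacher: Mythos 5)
Your proposal is correct and follows essentially the same route as the paper: both recognize the loop as an instance of AboveThreshold/Sparse Vector Technique with per-query sensitivity $1/n$, match the threshold and query noise scales to the standard SVT recipe, invoke the $\epsilon$-DP guarantee for a single above-threshold release, and note that the final output is post-processing. Your explicit flag about the unbounded loop is a nice addition but is the same observation the paper makes implicitly when it notes the guarantee is independent of the number of queries evaluated.
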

    The proof is deferred to \Cref{apx-lem:quantile-privacy}.

\begin{lemma}[Utility of~\Cref{alg:quantile}]\label{lem:quantile-utility}
    \Cref{alg:quantile} returns a $6$-approximation of the $(1-\theta)$-quantile of $P$ w.p. $\geq 1-\beta$, given that $\nicefrac{1}{10} \leq \theta \leq \nicefrac{9}{10}$ and:
    \[
    n \geq \max\left\{\frac{5}{\epsilon} \ln\left(\frac{4\log\left(\nicefrac{\lambda_{\max}}{\lambda_{\min}}\right)}{\beta}\right), 200\ln\left(\nicefrac{4}{\beta}\right)\right\}
    \]
\end{lemma}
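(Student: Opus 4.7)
The plan is to build a high-probability \emph{good event} on which the empirical CDF and every Laplace draw used by the algorithm are close to their means, and then argue deterministically about the stopping iteration on that event.

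Setup: let $K = \lceil\log_2(\lambda_{\max}/\lambda_{\min})\rceil$ upper-bound the number of loop iterations and let $q = \ln(1/\theta)/\lambda$ denote the true $(1-\theta)$-quantile of $\mathrm{Exp}(\lambda)$. I would union-bound three sources of randomness: (i) the DKW event $\sup_x |F_n(x) - F(x)| > \eta$ for $\eta = \sqrt{\ln(8/\beta)/(2n)}$, probability $\le \beta/4$; (ii) some per-iteration $\mathrm{Lap}(4/(\epsilon n))$ exceeding $L := (4/(\epsilon n))\ln(4K/\beta)$ in absolute value, with total probability $\le \beta/4$ over the $\le K$ iterations; and (iii) the threshold $\mathrm{Lap}(2/(\epsilon n))$ exceeding $L$ in absolute value, probability $\le \beta/4$. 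On the complementary event $E$, which has probability $\ge 1-\beta$, all these bounds hold simultaneously, and the two stated sample-size hypotheses translate quantitatively into $2L + \eta \le \gamma$, where $\gamma>0$ is the universal constant introduced next.

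Inside $E$ the analysis is deterministic. From $F(x) = 1 - e^{-\lambda x}$ I get $x \ge 3q \Rightarrow F(x) \ge 1 - \theta^3$ and $x \le q/3 \Rightarrow F(x) \le 1 - \theta^{1/3}$. The gaps $g_+(\theta) = \theta - \theta^3$ and $g_-(\theta) = \theta^{1/3} - \theta$ are continuous and strictly positive on $[1/10, 9/10]$, hence bounded below by a universal constant $\gamma > 0$ (numerically $\gamma \approx 0.066$). At iteration $i$ the algorithm compares $F_n(2^i\lambda_{\min}) + Z_i$ with $(1-\theta) + Z_T$ where $|Z_i|,|Z_T| \le L$; combining with $|F_n - F| \le \eta$ and $2L + \eta \le \gamma$ yields two sufficient conditions: if $2^i\lambda_{\min} \ge 3q$ then the left side exceeds the right by at least $\gamma - 2L - \eta \ge 0$ and the check succeeds, while if $2^i\lambda_{\min} \le q/3$ then the left side falls short by at least $\gamma - 2L - \eta \ge 0$ and the check fails. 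Consequently the returned $x^* = 2^{i^*}\lambda_{\min}$ satisfies $x^* > q/3$ (because all earlier iterations were rejected) and $x^* < 6q$ (because the first iteration crossing $3q$ already triggers a stop, so $2^{i^*-1}\lambda_{\min} < 3q$). In particular $x^* \in [q/6, 6q]$, a $6$-approximation.

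The main obstacle is purely quantitative: verifying that the two stated lower bounds on $n$ indeed force $2L + \eta \le \gamma$ for the universal gap $\gamma$. The hypothesis $n \ge (5/\epsilon)\ln(4\log(\lambda_{\max}/\lambda_{\min})/\beta)$ controls $L$ through the Laplace-tail bound $\Pr[|\mathrm{Lap}(s)| > t] = e^{-t/s}$, while $n \ge 200\ln(4/\beta)$ controls $\eta$ through DKW; the numerical constants are calibrated so that each of $L$ and $\eta$ sits below $\gamma/4$. A small additional piece of care is needed when converting the ``stops by the first iteration past $3q$'' statement into the upper bound $x^* < 6q$, but this is just the factor-$2$ slack inherent in dyadic scanning. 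Every remaining step is routine.
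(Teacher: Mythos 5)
Your decomposition is genuinely different from the paper's, and in one respect cleaner. The paper compares quantile levels and bounds the output by the ratio $\tfrac{\ln(1/(\theta\pm 3/20))}{\ln(1/\theta)}$, an expression that is undefined once $\theta\pm 3/20$ leaves $(0,1)$ (which happens for $\theta\in[\tfrac1{10},\tfrac{3}{20}]$ and $\theta\in[\tfrac{17}{20},\tfrac{9}{10}]$, both inside the stated range). You instead look directly at the scan points $3q$ and $q/3$, and the resulting gaps $g_+(\theta)=\theta-\theta^3$ and $g_-(\theta)=\theta^{1/3}-\theta$ are unambiguous for all $\theta\in(0,1)$. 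That structure is sound and the stopping argument is correct once the good event holds.

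The gap is in the calibration step that you yourself flagged. You claim the stated lower bounds on $n$ force $2L+\eta\le\gamma$ with ``each of $L$ and $\eta$ below $\gamma/4\approx 0.017$,'' but neither bound is close to delivering that. From $n\ge 200\ln(4/\beta)$, the tightest $\eta$ you can extract with failure probability $\beta/4$ is $\eta=\sqrt{\ln(8/\beta)/(400\ln(4/\beta))}\ge 1/20=0.05$, which already exceeds $3\gamma/4$. More seriously, from $n\ge\tfrac{5}{\epsilon}\ln(4K/\beta)$ (with $K=\lceil\log_2(\lambda_{\max}/\lambda_{\min})\rceil$), your own definition $L=\tfrac{4}{\epsilon n}\ln(4K/\beta)$ gives only $L\le\tfrac{4}{5}=0.8$, which is more than an order of magnitude above $\gamma\approx 0.066$ on its own. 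So $2L+\eta\le\gamma$ simply does not follow, and the deterministic argument on the good event never applies. Note also that this is not a matter of routing the argument around a sloppy constant: in your framework the maximal usable error budget is $\min_{\theta\in[1/10,9/10]}(\theta^{1/3}-\theta)\approx 0.066<1-\tfrac{9}{10}$, which is smaller than the $3/20$ budget the paper allocates, so the $3q$/$q/3$ thresholds need even smaller $\eta$ and $L$ than the paper's $1/20$ each. To close the gap you would need to either enlarge the constants in the lemma hypothesis substantially (roughly $\eta,L\lesssim 0.02$, hence $n\gtrsim 10^3\ln(\cdot/\beta)$ and $n\gtrsim \tfrac{200}{\epsilon}\ln(4K/\beta)$), or relax the $6$-approximation to a larger factor and redo the gap computation.
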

\begin{proof}[Proof sketch]
The analysis combines two ingredients:  
(i)~the Dvoretzky–Kiefer–Wolfowitz inequality, which bounds the deviation of the empirical CDF from the true CDF; and  
(ii)~concentration of Laplace noise, which controls the perturbations added to the CDF evaluations and the threshold.  
By ensuring both sampling error and noise error are at most $1/20$ (with probability at least $1-\beta$), we show that the returned quantile is within a constant factor of the true $(1-\theta)$-quantile.  
A detailed proof is provided in \Cref{apx-lem:quantile-utility}.
\end{proof}

\begin{corollary}\label{cor:range}
Let $\tilde Q_{1-\theta}$ be the output of \Cref{alg:quantile}. Fix a target failure probability $2\beta\in(0,1)$. There exists an absolute constant $c_0\ge 1$ from \Cref{lem:quantile-utility} such that, if we set
\[
\tilde R \;=\; C\,\tilde Q_{1-\theta}\,\ln n
\quad\text{with}\quad
C \;\ge\; \frac{c_0}{\ln(1/\theta)}\Big(1+\frac{\ln(1/\beta)}{\ln n}\Big)
\]
then with probability at least $1-2\beta$ all $n$ samples lie in $[0,\tilde R]$. In particular,
\[
\tilde R \;=\; O\!\left(\frac{1}{\lambda}\ln\!\Big(\nicefrac{1}{\theta}\Big)\,\ln n\right)
\]
\end{corollary}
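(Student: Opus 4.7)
The plan is to combine the utility guarantee of \Cref{alg:quantile} with a standard exponential-tail union bound. Let $q_{1-\theta}$ denote the true $(1-\theta)$-quantile of $\mathrm{Exp}(\lambda)$; from $F(x) = 1 - e^{-\lambda x}$ we have $q_{1-\theta} = \frac{1}{\lambda}\ln(1/\theta)$. By \Cref{lem:quantile-utility}, with probability at least $1-\beta$ the output $\tilde Q_{1-\theta}$ is a multiplicative $6$-approximation of $q_{1-\theta}$; call this event $A$. On $A$, I obtain the deterministic lower bound $\tilde Q_{1-\theta} \geq \frac{\ln(1/\theta)}{6\lambda}$, and hence $\tilde R \geq R_{\min} := \frac{C\,\ln(1/\theta)}{6\lambda}\ln n$.

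Next, I would bound the probability that any of the $n$ samples exceeds $\tilde R$. A single sample satisfies $\Pr[X_i > t] = e^{-\lambda t}$, so a union bound over the $n$ samples gives $\Pr[\max_i X_i > R_{\min}] \leq n e^{-\lambda R_{\min}}$. Forcing this to be at most $\beta$ amounts to $\lambda R_{\min} \geq \ln(n/\beta)$, i.e.\ $C \cdot \tfrac{1}{6}\ln(1/\theta)\,\ln n \geq \ln n + \ln(1/\beta)$, which rearranges exactly to the hypothesis $C \geq \frac{c_0}{\ln(1/\theta)}\bigl(1+\frac{\ln(1/\beta)}{\ln n}\bigr)$ with $c_0 = 6$ (or the constant inherited from \Cref{lem:quantile-utility}).

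Combining the two pieces via the inclusion $\{\max_i X_i > \tilde R\} \cap A \subseteq \{\max_i X_i > R_{\min}\}$, a union bound yields $\Pr[\,\bar A \ \text{or}\ \max_i X_i > \tilde R\,] \leq \Pr[\bar A] + \Pr[\max_i X_i > R_{\min}] \leq 2\beta$, so all $n$ samples lie in $[0,\tilde R]$ with probability at least $1-2\beta$. For the ``in particular'' estimate, I would use the other direction of the $6$-approximation, $\tilde Q_{1-\theta} \leq 6\,q_{1-\theta} = \frac{6\ln(1/\theta)}{\lambda}$ on event $A$, and substitute into $\tilde R = C\,\tilde Q_{1-\theta}\,\ln n$.

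The only mild subtlety is the dependence of $\tilde R$ on the very dataset $P$ whose samples we are trying to contain: because $\tilde R$ is itself a data-dependent quantity, I cannot directly apply an i.i.d.\ tail bound to $\Pr[\max_i X_i > \tilde R]$. The fix is to condition on event $A$, which converts $\tilde R$ into a deterministic floor $R_{\min}$ and makes the $n$-fold union bound applicable. Beyond this bookkeeping step, the argument reduces to the log-arithmetic already performed above.
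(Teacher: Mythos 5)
Your proof is correct and follows essentially the same route as the paper: lower-bound $\tilde Q_{1-\theta}$ by the $c_0$-approximation guarantee, convert $\tilde R$ into a deterministic floor $R_{\min}$, apply an exponential tail bound plus a union bound over the $n$ samples, and then union-bound with the failure of the quantile estimator. In fact your explicit event inclusion $\{\max_i X_i>\tilde R\}\cap A\subseteq\{\max_i X_i>R_{\min}\}$ handles the data-dependence of $\tilde R$ somewhat more carefully than the paper, which phrases the same step as an informal conditional probability $\Pr[\max_i X_i>\tilde R\mid\mathcal E]$.
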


The proof is deferred to \Cref{apx-cor:range}.


Having obtained $\tilde R$, a private estimate of the range of the datapoints, we next turn to estimating $\lambda$ via the maximum likelihood approach. Since the exponential MLE is simply the inverse of the sample mean, our strategy is to clip the data to the estimated range and then release a noisy mean under the Laplace mechanism. This yields  \Cref{alg:private-mle}.

\begin{algorithm}[H]
\caption{Private Exponential Distribution MLE}
\label{alg:private-mle}
\begin{algorithmic}
\State \hspace*{-\algorithmicindent} \textbf{Input:} $P = \{x_i\}_{i=1}^n \sim Exp(\lambda)$, range $R$, privacy budget $\epsilon>0$.
\Procedure{Private-MLE}{$P, R, \epsilon$}
    \State $\forall i:~\bar{x}_i \gets \min\{x_i, R\}$ \Comment{Clipping}
    \State $\tilde s \gets \frac{1}{n}\sum\limits_{i=1}^n \bar{x}_i + Lap(\frac{R}{\epsilon n})$
    \State $\tilde \lambda \gets \frac{1}{\tilde s}$
    \State \Return $\tilde \lambda$
\EndProcedure
\end{algorithmic}
\end{algorithm}

\begin{lemma}[Privacy of~\Cref{alg:private-mle}]\label{lem:mle-privacy}
    \Cref{alg:private-mle} is $\epsilon$-differentially private.
\end{lemma}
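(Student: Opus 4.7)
The plan is to invoke the Laplace mechanism directly on the clipped sample mean, and then close out the algorithm via post-processing. The only randomness in the procedure comes from the single Laplace draw added to $\frac{1}{n}\sum_i \bar{x}_i$; everything else (clipping to $[0,R]$ and the inversion $\tilde\lambda = 1/\tilde s$) is deterministic. Therefore the entire privacy argument reduces to (i) showing that the clipped mean is a function of the dataset with small $\ell_1$-sensitivity, and (ii) invoking post-processing.

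First I would compute the $\ell_1$-sensitivity of the function $g(P) = \frac{1}{n}\sum_{i=1}^n \min\{x_i,R\}$. On two neighboring datasets $P,P'$ differing in one entry, say the $j$-th coordinate, the clipped values $\bar x_j,\bar x_j'$ both lie in $[0,R]$, so $|\bar x_j - \bar x_j'|\le R$, while all other clipped values are identical. Hence
\[
|g(P)-g(P')| \;=\; \tfrac{1}{n}\,|\bar x_j - \bar x_j'| \;\le\; \tfrac{R}{n},
\]
so $\Delta g \le R/n$. Note that because $R$ is supplied as an external input (not computed from the data inside this subroutine), this sensitivity bound holds uniformly over all neighboring datasets without any additional care.

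Next I would apply the Laplace mechanism at scale $R/(\epsilon n) = \Delta g/\epsilon$ to conclude that the release of $\tilde s = g(P) + \mathrm{Lap}(R/(\epsilon n))$ is $\epsilon$-DP. Finally, since $\tilde\lambda = 1/\tilde s$ is a deterministic function of $\tilde s$ alone (and does not re-access the dataset), the post-processing property of differential privacy immediately gives that the output $\tilde\lambda$ is also $\epsilon$-DP.

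There is no real obstacle in this proof; the only subtlety worth flagging is that the clipping threshold $R$ must not depend on the data at the point where it is applied here, which is guaranteed by the fact that \Cref{alg:private-mle} receives $R$ as a fixed input. In the downstream best-of-both algorithm, where $R$ will be instantiated as the privately computed $\tilde R$ from \Cref{cor:range}, the overall privacy bound will then follow by basic composition with the $\epsilon$-DP range-estimation step, but that accounting belongs to the combined algorithm rather than to this lemma.
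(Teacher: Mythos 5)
Your proof is correct and follows essentially the same route as the paper: bound the $\ell_1$-sensitivity of the clipped mean by $R/n$, invoke the Laplace mechanism at scale $R/(\epsilon n)$, and apply post-processing for the inversion $\tilde\lambda = 1/\tilde s$. The added remark that $R$ is a data-independent input to this subroutine (with composition deferred to the combined pipeline) is a useful clarification, but the core argument is identical to the paper's.
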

The proof is deferred to \Cref{apx-lem:mle-privacy}.

\begin{lemma}[Utility of~\Cref{alg:private-mle}]\label{lem:mle-utility}
    \Cref{alg:private-mle} returns an $(1 \pm \alpha)$-approximation of $\lambda$ w.p. $\geq 1-\beta$, given that:
    \[
    n \geq O\left(\max\left\{\frac{R\ln(\frac{1}{\beta})}{\epsilon(\alpha - e^{-\lambda R})}, \frac{\ln(\frac{1}{\beta})}{\alpha^2}\right\}\right)
    \]
\end{lemma}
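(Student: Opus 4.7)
The plan is to show that the noisy clipped mean $\tilde s$ output by the algorithm satisfies $\tilde s \in (1\pm\alpha)/\lambda$ with probability $\geq 1-\beta$; inverting then yields $\tilde\lambda = 1/\tilde s \in (1\pm O(\alpha))\lambda$, which combined with \Cref{cor:tv-exp} gives the desired approximation. I would decompose the error as
\[
\tilde s - \tfrac{1}{\lambda} \;=\; \underbrace{\bigl(\mathbb{E}[\bar s]-\tfrac{1}{\lambda}\bigr)}_{\text{clipping bias}} \;+\; \underbrace{\bigl(\bar s-\mathbb{E}[\bar s]\bigr)}_{\text{sample error}} \;+\; \underbrace{Z}_{\text{Laplace noise}},
\]
with $Z\sim\mathrm{Lap}(R/(\epsilon n))$, and treat each term separately.

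The clipping bias is handled by direct integration: $\mathbb{E}[\bar x_i] = \int_0^R \lambda x e^{-\lambda x}\,dx + R\cdot e^{-\lambda R} = (1-e^{-\lambda R})/\lambda$, so the bias equals exactly $e^{-\lambda R}/\lambda$. We must therefore ensure $e^{-\lambda R}<\alpha$ and reserve a residual budget of $(\alpha-e^{-\lambda R})/\lambda$ for the two stochastic terms; this is precisely what produces the $(\alpha - e^{-\lambda R})$ factor in the denominator of the first sample-complexity term.

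For the Laplace noise I would apply the standard tail bound $\Pr[|Z|>t]\leq e^{-\epsilon n t/R}$, so $|Z|=O(R\ln(1/\beta)/(\epsilon n))$ with probability $\geq 1-\beta/2$, and requiring this to be at most half of the residual budget yields the first term of $n$. For the sample error I would invoke Bernstein's inequality on the clipped variables (whose variance is at most $\mathrm{Var}(X)=1/\lambda^2$ and whose range is $R$), giving $|\bar s-\mathbb{E}[\bar s]|=O((1/\lambda)\sqrt{\ln(1/\beta)/n}+R\ln(1/\beta)/n)$ with probability $\geq 1-\beta/2$; bounding the leading $1/\lambda$-scaled term by a constant fraction of the residual budget yields the second term $n=\Omega(\ln(1/\beta)/\alpha^2)$, while the lower-order Bernstein term is dominated by the Laplace contribution already accounted for.

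A union bound then gives $|\tilde s-1/\lambda|\leq\alpha/\lambda$ with probability $\geq 1-\beta$, and a one-line Taylor expansion $1/\tilde s = \lambda/(1\pm\alpha) = \lambda(1\mp\alpha+O(\alpha^2))$ for $\alpha\leq 1/2$ converts this into the claimed multiplicative $(1\pm O(\alpha))$-approximation of $\lambda$. The main delicate point will be the coupling between the clipping range $R$ (inherited from \Cref{cor:range}) and the target accuracy $\alpha$: one must choose $R$ large enough that $e^{-\lambda R}$ is strictly smaller than $\alpha$ so that the residual budget $\alpha-e^{-\lambda R}$ is bounded away from zero, and then partition this budget across the Laplace noise and the sample concentration in a balanced way. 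This joint balancing, rather than any single concentration inequality, is what shapes the form of the stated sample complexity.
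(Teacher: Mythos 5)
Your proposal is correct and follows the same high-level decomposition as the paper: clipping bias $+$ sampling error $+$ Laplace noise, each budgeted to a fraction of $\alpha/\lambda$. The one genuine difference is the choice of intermediate anchor and concentration tool. The paper anchors the sampling error at the \emph{unclipped} empirical mean $s=\tfrac1n\sum X_i$ and applies a multiplicative Chernoff bound to $s$, then treats the deviation $|\bar s - s|$ between the clipped and unclipped empirical means as though it deterministically equals the expected clipping bias $e^{-\lambda R}/\lambda$ — a small gloss, since $\tfrac1n\sum (x_i-R)_+$ is a random quantity. You instead anchor at $\mathbb{E}[\bar s]$, which makes the clipping bias an exact deterministic constant, and then apply Bernstein to the bounded clipped variables $\bar x_i$; this is the cleaner decomposition since it avoids having to say anything about concentration of the clipping excess, and your observation that the Bernstein lower-order term $R\ln(1/\beta)/n$ is absorbed by the Laplace term $R\ln(1/\beta)/(\epsilon n)$ is exactly right. (Your variance claim $\mathrm{Var}(\bar X)\le \mathrm{Var}(X)=1/\lambda^2$ is also correct, since $x\mapsto\min\{x,R\}$ is $1$-Lipschitz.) Both routes produce the same sample-complexity bound; yours is arguably the more rigorous presentation of what the paper intends.
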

\begin{proof}[Proof sketch]
The estimator $\tilde{\lambda}$ is defined as the reciprocal of the privatized clipped mean. 
Its error arises from three sources: (i) sampling error of the empirical mean, controlled by a multiplicative Chernoff bound for exponentials; (ii) clipping bias, which is at most $e^{-\lambda R}/\lambda$ since the exponential tail mass beyond $R$ is $\exp(-\lambda R)$; and (iii) Laplace noise added for privacy, which with probability $1-\frac \beta 2$ is bounded by $\tfrac{R}{\epsilon n}\ln(\tfrac 2 \beta)$. 
Combining these bounds shows that $\tilde{s}$, the noisy clipped mean, is within an additive $\alpha/\lambda$ of the true mean provided
\[
n \;\ge\; O\!\left(\max\!\Big\{\tfrac{R\ln(1/\beta)}{\epsilon(\alpha - e^{-\lambda R})}, \;\tfrac{\ln(1/\beta)}{\alpha^2}\Big\}\right)
\]
under the assumption\footnote{$\alpha = O(1)$ and $e^{-\lambda R}=O(1/poly(n))$} $\alpha > e^{-\lambda R}$.
Taking reciprocals then yields that $\tilde{\lambda}$ is an $(1\pm \alpha)$-approximation to $\lambda$ with probability at least $1-\beta$. 
Full details are deferred to \Cref{apx-lem:mle-utility}.
\end{proof}

We now combine the two ingredients: the quantile-based range estimator and the private MLE. The resulting end-to-end algorithm takes raw samples, first derives a safe clipping range, and then computes a private estimate of $\lambda$ using the clipped noisy mean. The complete pipeline is summarized in \Cref{alg:exp-mle} below.

\begin{algorithm}[H]
\caption{Private Exponential Learning via MLE}
\label{alg:exp-mle}
\begin{algorithmic}
\State \hspace*{-\algorithmicindent} \textbf{Input:} $P = \{x_i\}_{i=1}^n \sim Exp(\lambda)$, bounds $0 < \lambda_{\min} < \lambda_{\max}$, target error $\alpha\in(0,1)$, privacy budget $\epsilon>0$.
\State \hspace*{-\algorithmicindent} \textbf{Output:} Parameter estimator $\tilde \lambda$.
\Procedure{MLE-Learning}{$P, \lambda_{\min}, \lambda_{\max}, \alpha, \epsilon$}
    \State $\tilde R \gets \Call{Quantile-Estimation}{P, \lambda_{\min}, \lambda_{\max}, \theta = \nicefrac{1}{10}, \nicefrac{\epsilon}{2}}$ \Comment{Range estimation}
    \State $\tilde \lambda \gets \Call{Private-MLE}{P, \tilde R \log (n), \nicefrac{\epsilon}{2}}$
    \State \Return $\tilde \lambda$
\EndProcedure
\end{algorithmic}
\end{algorithm}


\begin{theorem}\label{thm:main-theorem}
    \Cref{alg:exp-mle} is $\epsilon$-differentially private, and w.p. $\geq 1-\beta$ it returns a $(1 \pm \alpha)$-approximation of $\lambda$, given that:
    \begin{equation*}
    \begin{split}
    n \;\geq\; O\Bigg(\max\Bigg\{
        \frac{\ln(\tfrac{1}{\alpha})\ln(\tfrac{1}{\beta})\ln(n)}{\lambda\epsilon\alpha},
        \frac{\ln(\tfrac{1}{\beta})}{\alpha^2},\;
        \frac{\ln(\tfrac{\ln(\nicefrac{\lambda_{\max}}{\lambda_{\min}})}{\beta})}{\epsilon}\Big)
    \Bigg\}\Bigg)
    \end{split}
    \end{equation*}
\end{theorem}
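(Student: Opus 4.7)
The plan is to decompose Algorithm~\ref{alg:exp-mle} as the sequential composition of two $(\epsilon/2)$-DP subroutines---Quantile-Estimation and Private-MLE---argue privacy via basic composition, and establish utility via a conditional two-stage analysis with a single union bound at the end. For privacy, I would invoke Lemma~\ref{lem:quantile-privacy} and Lemma~\ref{lem:mle-privacy} to conclude each subroutine is $(\epsilon/2)$-DP, and then basic composition immediately yields the claimed $\epsilon$-DP guarantee for the full pipeline.

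For utility, I would proceed in two stages. First, I would apply Corollary~\ref{cor:range} with failure parameter $\beta/2$ and $\theta = 1/10$ to conclude that, conditional on the success of Quantile-Estimation, the range $R := \tilde R \log n$ fed into Private-MLE satisfies two properties simultaneously: every sample lies in $[0, R]$ (so clipping is a no-op and does not bias the analysis through tail truncation), and $R = \Theta(\log n / \lambda)$, which forces $\lambda R = \Theta(\log n)$ and hence $e^{-\lambda R}$ to be polynomially small in $n$. Second, conditioning on this event and viewing $R$ as a (nearly deterministic) input, I would apply Lemma~\ref{lem:mle-utility} with failure probability $\beta/2$ and budget $\epsilon/2$ to conclude $\tilde\lambda \in (1\pm\alpha)\lambda$. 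A union bound over the two failure events closes the overall failure probability at $\beta$.

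For the sample complexity, I would collect the three requirements: (i) Lemma~\ref{lem:quantile-utility}'s $\tfrac{1}{\epsilon}\log\!\big(\tfrac{\log(\lambda_{\max}/\lambda_{\min})}{\beta}\big)$ term, which survives verbatim into the final bound (after adjusting for the budget $\epsilon/2$ and failure $\beta/2$, absorbing constants); (ii) the $\tfrac{\log(1/\beta)}{\alpha^2}$ Chernoff term from Lemma~\ref{lem:mle-utility}; and (iii) substituting $R = \Theta(\log n/\lambda)$ into the $\tfrac{R\log(1/\beta)}{\epsilon(\alpha - e^{-\lambda R})}$ term. The main technical point---and what I expect to be the only real subtlety---is bounding the denominator $\alpha - e^{-\lambda R}$ away from zero: since $\lambda R = \Theta(\log n)$, this is automatic once $n$ is moderately large, and the resulting $\log n$ in the numerator together with a $\log(1/\alpha)$ overhead (absorbing the regime where we need $\lambda R$ to be at least of order $\log(1/\alpha)$ to dominate the tolerance $\alpha$) yields the stated $\tfrac{\log(1/\alpha)\log(1/\beta)\log n}{\lambda\epsilon\alpha}$ contribution. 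Beyond this balance between clipping bias and noise scale, the rest of the argument is a bookkeeping exercise in combining the two subroutines' guarantees.
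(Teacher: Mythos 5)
Your proposal is correct and follows essentially the same approach as the paper's proof: basic composition for privacy, then Corollary~\ref{cor:range} with $\theta=1/10$ and failure $\beta/2$ to bound the clipping level as $\Theta(\ln n/\lambda)$, followed by Lemma~\ref{lem:mle-utility} with budget $\epsilon/2$ and confidence $\beta/2$, and a union bound over the two stages. Your added observation that all samples fall in $[0,R]$ (so clipping is effectively a no-op) is a minor tangent the paper does not make --- the paper relies instead on Lemma~\ref{lem:mle-utility}'s explicit clipping-bias term $e^{-\lambda R}/\lambda$ being polynomially small --- but since you also invoke the latter fact, the two views converge and the bookkeeping for the $\tfrac{\ln(1/\alpha)\ln(1/\beta)\ln n}{\lambda\epsilon\alpha}$ term goes through the same way; if anything, your explanation of where the $\ln(1/\alpha)$ overhead enters is slightly more explicit than the paper's terse ``collecting the dominant terms.''
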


\begin{proof}
    \emph{Privacy.} The algorithm is a sequential composition of the quantile/range subroutine run with privacy budget $\epsilon/2$ (\Cref{lem:quantile-privacy}) and the private MLE subroutine run with privacy budget $\epsilon/2$ (\Cref{lem:mle-privacy}). By basic composition, the overall privacy loss is at most $\epsilon$.

    \emph{Utility.} With the choice $\theta=1/10$ in the range routine and by its utility guarantee (\Cref{lem:quantile-utility}), with probability $\ge 1-\beta/2$ we obtain $\tilde Q_{1-\theta}$ within a constant factor of $Q_{1-\theta}=\frac{1}{\lambda}\ln(1/\theta)$, and hence (by \Cref{cor:range}) a clipping level $\Theta\!\Big(\frac{1}{\lambda}\cdot \ln n\Big)$.
    Consequently $e^{-\lambda \tilde R} = O(1/poly(n)) \le \alpha/2$. Plugging this $\tilde R$ into the MLE analysis (\Cref{lem:mle-utility} with budget $\epsilon/2$ and confidence $\beta/2$) yields the utility of~\Cref{alg:exp-mle}. Union-bounding the two stages gives success probability $\ge 1-\beta$. Collecting the dominant terms proves the stated big-$O$ bound.
\end{proof}

\subsection{Best of Both Algorithms}
\label{apx-subsec:best-of-both}
We now describe an adaptive strategy combining the MLE-based learner and the quantile-based learner. 
Since the two approaches have complementary sample-complexity behavior depending on the magnitude of $\lambda$, we can first obtain a coarse private estimate and then select the more efficient estimator accordingly.

\begin{algorithm}[ht]
\caption{Adaptive Private Exponential Learning (Best-of-Both)}
\label{alg:best-of-both}
\begin{algorithmic}
\State \textbf{Input:} $P=\{x_i\}_{i=1}^n$, bounds $0<\lambda_{\min}<\lambda_{\max}$, target error $\alpha\in(0,1)$, privacy budget $\epsilon>0$.
\State \textbf{Output:} $\tilde\lambda$.
\Procedure{Best-of-Both}{$P, \lambda_{\min}, \lambda_{\max}, \alpha, \epsilon$}
\smallskip
\State $\tilde\lambda_0 \gets \Call{Quantile-Learning}{P,\ \lambda_{\min},\ \lambda_{\max},\alpha=\nicefrac{1}{2},\ \nicefrac{\epsilon}{3}}$ \hfill (Alg.~\ref{alg:exp-quantile})
\smallskip
\If{$\tilde\lambda_0 \ge 2$} \Comment{confidently ``large''}
    \State $\tilde\lambda \gets \Call{MLE-Learning}{P,\ \lambda_{\min},\ \lambda_{\max},\ \alpha,\ \nicefrac{2\epsilon}{3}}$ \hfill (Alg.~\ref{alg:exp-mle})
\Else
    \State $\tilde\lambda \gets \Call{Quantile-Learning}{P,\ \lambda_{\min},\ \lambda_{\max},\ \alpha,\ \nicefrac{2\epsilon}{3}}$ \hfill (Alg.~\ref{alg:exp-quantile})
\EndIf
\State \Return $\tilde\lambda$
\EndProcedure
\end{algorithmic}
\end{algorithm}

Algorithm~\ref{alg:exp-quantile} can produce a constant-factor estimate
$\tilde\lambda_0$ of $\lambda$ using $O(1)$ accuracy. With such a coarse estimate, the events \(\tilde\lambda_0\ge 2\) and \(\tilde\lambda_0\le \tfrac12\) imply, respectively, that \(\lambda>1\) and \(\lambda<1\). The middle region \(\tilde\lambda_0\in(\tfrac12,2)\) corresponds to \(\lambda\in(\tfrac12,2)\), where both routes have linear terms within a factor at most~2, and either choice is acceptable up to constants; we default to the quantile route.

\begin{lemma}[Privacy of Alg.~\ref{alg:best-of-both}]
\label{lem:best-privacy}
Algorithm~\ref{alg:best-of-both} is $\epsilon$-differentially private.
\end{lemma}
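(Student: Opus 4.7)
The plan is to treat \Cref{alg:best-of-both} as an adaptive sequential composition of two data-dependent mechanisms and then sum their privacy budgets via the basic composition lemma from the preliminaries. Specifically, I would first observe that the opening call to \textsc{Quantile-Learning} on $P$ with budget $\epsilon/3$ is $\epsilon/3$-DP by \Cref{lem:exp-quantile-privacy}, so the coarse estimate $\tilde\lambda_0$ is a release under a budget of $\epsilon/3$. The subsequent branching---the \textbf{if}/\textbf{else} comparison of $\tilde\lambda_0$ to the threshold $2$---is a deterministic function of $\tilde\lambda_0$ together with the public parameters $(\lambda_{\min},\lambda_{\max},\alpha,\epsilon)$, and does not itself touch $P$; hence by post-processing it consumes no additional privacy.

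The second (and only other) access to $P$ occurs inside the chosen branch, which invokes either \textsc{MLE-Learning} or \textsc{Quantile-Learning} on $P$ with privacy parameter $2\epsilon/3$. By \Cref{lem:mle-privacy} (applied through \Cref{thm:main-theorem}) and \Cref{lem:exp-quantile-privacy} respectively, each of these branches is $(2\epsilon/3)$-DP as a standalone mechanism for every fixed value of the auxiliary input $\tilde\lambda_0$. In particular, whichever branch the data-dependent switch selects, the resulting second-stage mechanism is $(2\epsilon/3)$-DP. Applying basic (adaptive) composition then gives an overall privacy loss of $\epsilon/3+2\epsilon/3=\epsilon$, which is exactly the claimed bound.

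I do not expect any real obstacle here: the argument is a routine composition calculation, and the only subtlety worth spelling out is that the second mechanism is chosen adaptively as a function of the first mechanism's output. This is neutralized by noting that both candidate branches satisfy the same $(2\epsilon/3)$-DP guarantee on every input, so the worst-case privacy loss of the second stage over the adaptive choice matches the fixed-choice case and basic composition applies without modification.
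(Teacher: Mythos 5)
Your proposal is correct and matches the paper's own proof essentially step for step: both decompose the algorithm into a first stage with budget $\epsilon/3$, invoke post-processing for the data-independent branching on the released $\tilde\lambda_0$, and apply adaptive sequential composition to add the $2\epsilon/3$ budget of whichever second-stage mechanism is selected. Your additional remark that both candidate branches satisfy the same $(2\epsilon/3)$-DP guarantee for every fixed auxiliary input is a nice explicit justification of why the adaptive choice does not cost extra, but it is the same underlying argument.
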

The proof is deferred to \Cref{apx-lem:best-privacy}.


\begin{theorem}
\label{thm:best-main}
Fix $\alpha\in(0,1)$, $\beta\in(0,1)$ and $\epsilon>0$.
Algorithm~\ref{alg:best-of-both} is $\epsilon$-differentially private and,
with probability at least $1-\beta$, outputs $\tilde\lambda$ satisfying
\(\tilde\lambda\in[(1-\alpha)\lambda,(1+\alpha)\lambda]\),
provided that
\[
n \geq \tilde O\left(\max\left\{\min\left\{\frac{1}{\epsilon\alpha}, \frac{1}{\lambda\epsilon\alpha}\right\}, \frac{1}{\alpha^2}, \frac{1}{\epsilon}\right\}\right)
\]
Equivalently, up to polylogarithmic factors, the adaptive procedure matches the
best of the two options:
\begin{align*}
\text{if }\lambda \ge 1 \quad &\Rightarrow\quad 
  \tilde{O}\!\left(
    \max\left\{
      \tfrac{1}{\lambda\epsilon\alpha},\;
      \tfrac{1}{\epsilon},\;
      \tfrac{1}{\alpha^2}
    \right\}
  \right), \\[6pt]
\text{if }\lambda \le 1 \quad &\Rightarrow\quad 
  \tilde{O}\!\left(
    \max\left\{
      \tfrac{1}{\epsilon\alpha},\;
      \tfrac{1}{\alpha^2}
    \right\}
  \right)
\end{align*}
\end{theorem}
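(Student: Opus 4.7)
The plan is to reuse the privacy guarantee from \Cref{lem:best-privacy} (which follows by basic composition of the coarse quantile call with budget $\epsilon/3$ and the subsequent estimator call with budget $2\epsilon/3$), and then argue utility through a short conditional analysis on the coarse estimate $\tilde\lambda_0$. The goal is to show that, under the stated sample-size bound, whichever branch the algorithm takes succeeds with the guarantees of the underlying subroutine.

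First I would apply \Cref{thm:exp-quantile-main} to the opening call of \textsc{Quantile-Learning} with target accuracy $\alpha_0=1/2$, confidence $\beta/2$, and budget $\epsilon/3$. Because $\alpha_0$ is a constant, the resulting sample-size requirement collapses to $\tilde O(1/\epsilon)$, which is dominated by the bound assumed in the theorem. With probability at least $1-\beta/2$ this yields $\tilde\lambda_0\in[\tfrac12\lambda,\tfrac32\lambda]$. Conditioning on this event, I would split on the algorithm's decision $\tilde\lambda_0\gtrless 2$.

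In the branch $\tilde\lambda_0\ge 2$, the lower bound $\lambda\ge \tfrac{2}{3}\tilde\lambda_0\ge \tfrac{4}{3}$ holds. Invoking \Cref{thm:main-theorem} with budget $2\epsilon/3$ and confidence $\beta/2$, \textsc{MLE-Learning} returns a $(1\pm\alpha)$-approximation once $n\ge \tilde O(\max\{1/(\lambda\epsilon\alpha),\,1/\alpha^2,\,1/\epsilon\})$. Since $\lambda\ge 4/3>1$, we have $\min\{1/(\epsilon\alpha),\,1/(\lambda\epsilon\alpha)\}=1/(\lambda\epsilon\alpha)$, so the hypothesis on $n$ supplies enough data. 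In the branch $\tilde\lambda_0<2$, the upper bound $\lambda\le 2\tilde\lambda_0\le 4$ holds. Applying \Cref{thm:exp-quantile-main} with budget $2\epsilon/3$, confidence $\beta/2$, and the full target $\alpha$, \textsc{Quantile-Learning} succeeds with probability $\ge 1-\beta/2$ once $n\ge \tilde O(\max\{1/(\epsilon\alpha),\,1/\alpha^2\})$. Because $\lambda\le 4$, we have $1/(\lambda\epsilon\alpha)\ge 1/(4\epsilon\alpha)$, so $\min\{1/(\epsilon\alpha),\,1/(\lambda\epsilon\alpha)\}=\Theta(1/(\epsilon\alpha))$ and the assumed sample bound again suffices. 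A final union bound over the two failure events gives overall success probability $\ge 1-\beta$, and reading off the two cases $\lambda\ge 1$ vs.\ $\lambda\le 1$ yields the equivalent piecewise form stated in the theorem.

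The main obstacle is purely notational rather than conceptual: one must verify that the two constant-factor ``buffer zones''---the gap between the factor-$2$ accuracy of $\tilde\lambda_0$ and the fixed threshold $2$ used to branch, and the gap between the stated cutoff $\lambda\gtrless 1$ in the equivalent form and the actual branch triggered by $\tilde\lambda_0\gtrless 2$---are swallowed by the $\tilde O(\cdot)$ notation. Apart from that bookkeeping, the argument is a direct composition of \Cref{thm:main-theorem} and \Cref{thm:exp-quantile-main} on top of the coarse-estimate guarantee.
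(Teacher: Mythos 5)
Your proposal is correct and follows essentially the same route as the paper's proof: privacy by \Cref{lem:best-privacy}, a conditional utility argument that first establishes the coarse estimate $\tilde\lambda_0$ is a constant-factor approximation (via \textsc{Quantile-Learning} with target error $1/2$ and budget $\epsilon/3$), then invokes \Cref{thm:main-theorem} or \Cref{thm:exp-quantile-main} on the chosen branch with the remaining $2\epsilon/3$, finishing with a union bound. The only cosmetic difference is your $\beta/2$--$\beta/2$ split versus the paper's $\beta/3$--$2\beta/3$; your bookkeeping of the buffer zones ($\lambda \ge 4/3$ in the large branch, $\lambda \le 4$ in the small branch) is, if anything, slightly more explicit than the paper's.
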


\begin{proof}
\emph{Privacy} follows from \Cref{lem:best-privacy}. For \emph{utility}, let
\(\mathcal{E}_{\text{est}}\) be the event that the coarse step attains a constant-factor estimate with probability at least \(1-\nicefrac{\beta}{3}\) (\Cref{lem:exp-quantile-utility} with target error $=\tfrac12$ and privacy budget \(\nicefrac{\epsilon}{3}\)). Under \(\mathcal{E}_{\text{est}}\), the decision matches the oracle choice whenever \(\lambda\not\in(\tfrac12,2)\), and in the middle region the two routes are within a constant factor anyway. Conditioned on the chosen branch, the accuracy \(\tilde\lambda\in[(1-\alpha)\lambda,(1+\alpha)\lambda]\) holds with probability at least \(1-\nicefrac{2\beta}{3}\) by \Cref{lem:mle-utility} for the MLE route (using budget \(\nicefrac{2\epsilon}{3}\)) or by \Cref{lem:exp-quantile-utility} for the quantile route (again with budget \(\nicefrac{2\epsilon}{3}\)). A union bound over the two stages yields overall success probability at least \(1-\beta\). The stated sample-size bound aggregates the dominating terms of the chosen branch.
\end{proof}


\section{Missing Proofs}
\label{apx-sec:missing-proofs}

\begin{fact}[\Cref{fact:tv-exp} restated]\label{apx-fact:tv-exp}
    Let $\mathrm{Exp}(\lambda_1), \mathrm{Exp}(\lambda_2)$ be two exponential distributions and assume $\lambda_1 \geq \lambda_2$. Denote $a = \frac{\ln(\lambda_1) - \ln(\lambda_2)}{\lambda_1 - \lambda_2}$, then:
    \[
    TV(\mathrm{Exp}(\lambda_1), \mathrm{Exp}(\lambda_2)) = e^{-\lambda_2 \cdot a} - e^{-\lambda_1 \cdot a}
    \]
\end{fact}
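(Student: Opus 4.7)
The plan is to invoke the standard Scheffé identity $\mathrm{TV}(P_1,P_2) = P_1(A^*) - P_2(A^*)$, where $A^* = \{x : f_1(x) > f_2(x)\}$ is the set on which the density of $P_1$ dominates that of $P_2$, and then to identify this set explicitly in terms of the crossing point $a$.

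First I would locate the crossing of the two densities $f_i(x) = \lambda_i e^{-\lambda_i x}$. Setting $f_1(x) = f_2(x)$ and taking logs gives the affine equation $\ln \lambda_1 - \lambda_1 x = \ln \lambda_2 - \lambda_2 x$, whose unique solution on $\mathbb{R}$ is exactly $x = a = \frac{\ln \lambda_1 - \ln \lambda_2}{\lambda_1 - \lambda_2}$ (well-defined and positive because $\lambda_1 \geq \lambda_2 > 0$ implies both numerator and denominator are nonnegative). Next I would check the sign of $f_1 - f_2$ on either side of $a$: at $x = 0$ we have $f_1(0) - f_2(0) = \lambda_1 - \lambda_2 \geq 0$, and since $g(x) := \ln f_1(x) - \ln f_2(x)$ is affine with slope $-(\lambda_1 - \lambda_2) \leq 0$ and has its unique zero at $a$, one concludes that $f_1 \geq f_2$ on $[0, a]$ and $f_1 \leq f_2$ on $[a, \infty)$. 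Hence $A^* = [0, a)$ up to a null set.

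Having identified $A^*$, the remainder is a direct CDF computation: $P_i([0,a)) = 1 - e^{-\lambda_i a}$, so
\[
\mathrm{TV}(P_1, P_2) \;=\; P_1([0,a)) - P_2([0,a)) \;=\; (1 - e^{-\lambda_1 a}) - (1 - e^{-\lambda_2 a}) \;=\; e^{-\lambda_2 a} - e^{-\lambda_1 a},
\]
which matches the claimed expression (and is nonnegative since $\lambda_1 \geq \lambda_2$ gives $e^{-\lambda_1 a} \leq e^{-\lambda_2 a}$).

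There is essentially no hard step here; the only subtlety is justifying that the crossing at $a$ is unique and that the densities do not switch order more than once, and this is handled cleanly by the observation that $\ln f_1 - \ln f_2$ is affine in $x$. The edge case $\lambda_1 = \lambda_2$ would require interpreting $a$ as a limit (yielding $a = 1/\lambda_1$ and $\mathrm{TV} = 0$), but since the statement's hypothesis gives only $\lambda_1 \geq \lambda_2$ and the formula for $a$ presupposes $\lambda_1 \neq \lambda_2$, I would simply assume strict inequality, noting that the $\lambda_1 = \lambda_2$ case is trivial.
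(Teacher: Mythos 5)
Your proof is correct and takes essentially the same approach as the paper: both identify the unique crossing point $a$ of the two densities and use the fact that $f_1 \geq f_2$ on $[0,a]$ and $f_1 \leq f_2$ on $[a,\infty)$. The only cosmetic difference is that you evaluate $P_1(A^*)-P_2(A^*)$ directly via the CDFs (Scheff\'e's form of total variation), whereas the paper computes $\tfrac12\int |f_1-f_2|\,dx$ by splitting the integral at $a$ and then dividing by $2$; the two routes are algebraically equivalent, and yours avoids the trailing factor-of-two bookkeeping.
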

\begin{proof}
    By definition of total variation and the PDF of exponential distribution:
    \[
    TV(\mathrm{Exp}(\lambda_1), \mathrm{Exp}(\lambda_2)) = \frac{1}{2}\int_0^\infty |\lambda_1e^{-\lambda_1 x} - \lambda_2e^{-\lambda_2 x}|~ dx
    \]
    Recall that $\lambda_1 \geq \lambda_2$ without loss of generality, then let's compute the unique point where the two PDFs meet:
    \[
    \lambda_1e^{-\lambda_1 a} = \lambda_2e^{-\lambda_2a} \iff \ln(\lambda_1) - \lambda_1 a = \ln(\lambda_2) - \lambda_2 a \iff a = \frac{\ln(\lambda_1) - \ln(\lambda_2)}{\lambda_1 - \lambda_2}
    \]
    Now we compute:
    \begin{equation*}
    \begin{split}
    \int_0^\infty |\lambda_1e^{-\lambda_1 x} & - \lambda_2e^{-\lambda_2 x}|~ dx = \int_0^a (\lambda_1e^{-\lambda_1 x} - \lambda_2e^{-\lambda_2 x})~ dx~ - \int_a^\infty (\lambda_1e^{-\lambda_1 x} - \lambda_2e^{-\lambda_2 x})~ dx  \\
    & = (e^{-\lambda_2 x} - e^{-\lambda_1 x})\big|_0^a - (e^{-\lambda_2 x} - e^{-\lambda_1 x})\big|_a^\infty = 2(e^{-\lambda_2 a} - e^{-\lambda_1 a})
    \end{split}
    \end{equation*}
    Hence we get:
    \[
    TV(\mathrm{Exp}(\lambda_1), \mathrm{Exp}(\lambda_2)) = e^{-\lambda_2 \cdot a} - e^{-\lambda_1 \cdot a}
    \]
\end{proof}

\begin{corollary}[\Cref{cor:tv-exp} restated]\label{apx-cor:tv-exp}
    If the rate parameters satisfy $\lambda_1 \in (1 \pm \alpha)\lambda_2$, then
    \[
    TV\!\left(\mathrm{Exp}(\lambda_1), \mathrm{Exp}(\lambda_2)\right) \leq \alpha
    \]
\end{corollary}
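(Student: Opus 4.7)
The plan is to reduce directly to the closed-form expression from Fact~\ref{fact:tv-exp} and bound it by $\alpha$. Since total variation distance is symmetric in its two arguments, I will assume without loss of generality that $\lambda_1\ge\lambda_2$. The hypothesis $\lambda_1\in(1\pm\alpha)\lambda_2$ then gives $r:=\lambda_1/\lambda_2\in[1,1+\alpha]$, which is the regime in which Fact~\ref{fact:tv-exp} applies.

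Next I would substitute $\lambda_1=r\lambda_2$ into the formula for $a$ and simplify: a brief calculation gives $\lambda_2 a=\frac{\ln r}{r-1}$ and $\lambda_1 a=\frac{r\ln r}{r-1}$, so that $e^{-\lambda_2 a}=r^{-1/(r-1)}$ and $e^{-\lambda_1 a}=r^{-r/(r-1)}$. Therefore
\[
TV(\mathrm{Exp}(\lambda_1),\mathrm{Exp}(\lambda_2))
= r^{-1/(r-1)}-r^{-r/(r-1)}
= r^{-1/(r-1)}\!\left(1-\tfrac{1}{r}\right).
\]

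The core estimate is to show $r^{-1/(r-1)}\le 1$ whenever $r\ge 1$. This follows because the exponent $-1/(r-1)$ is negative for $r>1$ while the base satisfies $r\ge 1$, so $r^{-1/(r-1)}=1/r^{1/(r-1)}\le 1$; the $r=1$ case is trivial since both exponentials coincide and the TV vanishes. Consequently $TV\le 1-1/r$, and the hypothesis $r\le 1+\alpha$ yields $1/r\ge 1/(1+\alpha)\ge 1-\alpha$, whence $TV\le \alpha$.

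The proof is essentially routine once Fact~\ref{fact:tv-exp} is in hand; the only subtlety is the reduction to a one-variable bound in $r$ and the observation that the prefactor $r^{-1/(r-1)}$ can be cleanly discarded as at most $1$. I expect no serious obstacle, only care in handling the symmetry step (so the argument also covers the case $\lambda_1\le\lambda_2$) and the boundary $r=1$.
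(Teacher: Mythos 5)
Your proof is correct in substance and takes a genuinely different, and cleaner, route than the paper's. The paper first splits into two sub-cases ($\lambda_1=(1+\delta)\lambda_2$ vs.\ $\lambda_2=(1-\delta)\lambda_1$), analyses the monotonicity of the full expression $(1+\delta)^{-1/\delta}(1-\tfrac1{1+\delta})$ (resp.\ $(1-\delta)^{1/\delta}(\tfrac1{1-\delta}-1)$) in $\delta$, and passes to limits; the presentation there is somewhat awkward. You instead observe that the prefactor $r^{-1/(r-1)}$ is at most~$1$ for all $r\ge 1$, discard it, and reduce to the one-line estimate $TV\le 1-1/r\le \alpha$. That is sharper in spirit and avoids all the limiting arguments. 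What each approach buys: the paper's computation incidentally shows the tighter bound $TV\le e^{-1}\alpha$, whereas yours gives the cleaner and fully sufficient $TV\le 1-1/r$.

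One imprecision worth fixing: the ``WLOG $\lambda_1\ge\lambda_2$'' step is not airtight as stated, because the hypothesis $\lambda_1\in(1\pm\alpha)\lambda_2$ is not symmetric in $(\lambda_1,\lambda_2)$. When $\lambda_1<\lambda_2$, relabelling to put the larger rate first gives a ratio $r=\lambda_2/\lambda_1$ that the hypothesis only bounds by $1/(1-\alpha)$, which exceeds $1+\alpha$; so your claim that $r\in[1,1+\alpha]$ does not hold in that sub-case, and the chain $1/r\ge 1/(1+\alpha)\ge 1-\alpha$ does not apply there. The fix is small and keeps your argument intact: observe that in both sub-cases the hypothesis directly yields $\min\{\lambda_1,\lambda_2\}/\max\{\lambda_1,\lambda_2\}\ge 1-\alpha$ (either from $\lambda_2/\lambda_1\ge 1/(1+\alpha)\ge 1-\alpha$, or from $\lambda_1/\lambda_2\ge 1-\alpha$), so $1/r\ge 1-\alpha$ and $TV\le 1-1/r\le\alpha$ holds unconditionally.
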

\begin{proof}
    Assume without loss of generality that $\lambda_1 \geq \lambda_2$.
    We first consider the case where $\lambda_1 = (1+\delta)\lambda_2$ with $|\delta| \leq \alpha$.  
    From the fact above, the total variation distance is:
    \[
    TV(\mathrm{Exp}(\lambda_1), \mathrm{Exp}(\lambda_2)) 
    = e^{-\lambda_2 a} - e^{-\lambda_1 a}
    \]
    where 
    \[
    a = \frac{\ln(\lambda_1) - \ln(\lambda_2)}{\lambda_1 - \lambda_2}
      = \frac{\ln(1+\delta)}{\delta \lambda_2}
    \]
    Substituting into the formula, we get:
    \begin{align*}
        TV(\mathrm{Exp}(\lambda_1), \mathrm{Exp}(\lambda_2)) &= e^{ - \lambda_2 \cdot \frac{\ln(1+\delta)}{\delta \lambda_2} } 
             - e^{ -(1+\delta)\lambda_2 \cdot \frac{\ln(1+\delta)}{\delta \lambda_2} } \\
           &= e^{ - \frac{\ln(1+\delta)}{\delta} } 
             - e^{ -(1+\delta)\frac{\ln(1+\delta)}{\delta} } \\
           &= (1+\delta)^{-1/\delta} - (1+\delta)^{-(1+\delta)/\delta} \\
           &= (1+\delta)^{-1/\delta}\cdot(1 - \frac{1}{(1+\delta)})
    \end{align*}
    Since $\delta \leq \alpha$ and the expression is monotonically increasing in $\delta > 0$, we can upper bound:
    \[
    TV(\mathrm{Exp}(\lambda_1), \mathrm{Exp}(\lambda_2)) \leq (1+\alpha)^{-1/\alpha}\cdot(1 - \frac{1}{(1+\alpha)})
    \]
    Finally
    \[
    (1+\alpha)^{1/\alpha}\cdot(1 - \frac{1}{(1+\alpha)}) = \alpha \cdot (1+\alpha)^{-(1+\alpha)/\alpha}
    \]
    Now compute:
    \begin{align*}
        (1+\alpha)^{-(1+\alpha)/\alpha} &= e^{\ln((1+\alpha)^{-(1+\alpha)\alpha})} = e^{-\frac{1+\alpha}{\alpha}\ln(1+\alpha)} \\
           &\overset{(\ast)}{\leq} e^{-\frac{1+\alpha}{\alpha}\cdot\frac{\alpha}{1+\alpha}} = e^{-1} < 1
    \end{align*}
    where inequality $(\ast)$ follows from $\ln(1+\alpha) \geq \frac{\alpha}{1+\alpha}$.

    Now consider $\lambda_2 = (1-\delta)\lambda_1$ with $|\delta| \leq \alpha$. then:
    \[
    a = \frac{-\ln(1-\delta)}{\delta \lambda_1}
    \]
    Following the same way, substituting into the formula, we get:
    \begin{align*}
        TV(\mathrm{Exp}(\lambda_1), \mathrm{Exp}(\lambda_2)) &= e^{ (1-\delta)\lambda_1 \cdot \frac{\ln(1-\delta)}{\delta \lambda_1} } 
             - e^{ \lambda_1 \cdot \frac{\ln(1-\delta)}{\delta \lambda_1} } \\
           &= e^{ (1-\delta)\frac{\ln(1-\delta)}{\delta} } 
             - e^{ \frac{\ln(1-\delta)}{\delta} } \\
           &= (1-\delta)^{(1-\delta)/\delta} - (1-\delta)^{1/\delta} \\
           &= (1-\delta)^{1/\delta}\cdot(\frac{1}{1-\delta} - 1)
    \end{align*}
    Since $\delta \leq \alpha$ and the expression is monotonically decreasing in $\delta > 0$, we can upper bound:
    \begin{align*}
        TV(\mathrm{Exp}(\lambda_1), \mathrm{Exp}(\lambda_2)) & \leq \lim\limits_{\delta \to 0} (1-\delta)^{1/\delta}\cdot(\frac{1}{1-\delta} - 1) \\
           &= \lim\limits_{\delta \to 0} (1-\delta)^{(1-\delta)/\delta}\cdot\delta = e^{-1}\cdot\delta \leq \alpha
    \end{align*}
    
    Overall we conclude:
    \[
    TV(\mathrm{Exp}(\lambda_1), \mathrm{Exp}(\lambda_2)) \leq \alpha.
    \]
\end{proof}

\begin{lemma}[\Cref{lem:quantile-privacy} restated]\label{apx-lem:quantile-privacy}
    \Cref{alg:quantile} is $\epsilon$-differentially private.
\end{lemma}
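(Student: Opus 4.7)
The plan is to observe that Algorithm~\ref{alg:quantile} is a textbook instantiation of the Sparse Vector Technique (AboveThreshold) with stopping count $c=1$, and to invoke its standard $\epsilon$-DP guarantee. The data-dependent queries are the empirical-CDF evaluations $f_i(P) = \tfrac{1}{n}|\{x\in P : x < 2^i\lambda_{\min}\}|$ for $i=0,1,2,\dots$, each compared against a single noisy version of the data-independent scalar $1-\theta$, and the algorithm halts immediately on the first noisy query that exceeds the noisy threshold.

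First I would verify the sensitivities. Under the replace-one-sample neighbor relation on $\mathcal{X}^n$, swapping a single point can change the count $|\{x\in P:x<t\}|$ by at most one for any fixed $t$, so each query $f_i$ has $\ell_1$-sensitivity $\Delta = 1/n$; the threshold $1-\theta$ is constant in the data. The noise scales in the algorithm then match exactly what SVT prescribes for $c=1$: $\mathrm{Lap}(2\Delta/\epsilon) = \mathrm{Lap}(2/(\epsilon n))$ added once to the threshold, and $\mathrm{Lap}(4\Delta/\epsilon) = \mathrm{Lap}(4/(\epsilon n))$ added independently to each query.

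The main subtlety, and the only place care is needed, is that the loop index is a priori unbounded, so one might worry about paying for infinitely many comparisons. However, SVT's privacy accounting charges only for \emph{above}-threshold releases, of which there is at most one here, and is insensitive to the number of below-threshold comparisons. Concretely, the coupling proof maps each noise configuration $(z^T, z_0,\dots, z_{i^*})$ that causes the algorithm to halt at iteration $i^*$ on $D$ to the configuration $(z^T+\Delta,\ z_0,\dots,z_{i^*-1},\ z_{i^*}+2\Delta)$ on a neighbor $D'$: since $|f_i(D)-f_i(D')|\leq \Delta$ for every $i$, every below-threshold event is preserved by the threshold shift and the single above-threshold event is preserved by the query shift, while the Laplace density ratios contribute $e^{\Delta/(2\Delta/\epsilon)}\cdot e^{2\Delta/(4\Delta/\epsilon)} = e^{\epsilon/2}\cdot e^{\epsilon/2} = e^\epsilon$, yielding the claimed $\epsilon$-DP guarantee.
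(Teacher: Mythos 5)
Your proposal is correct and takes essentially the same route as the paper: identify Algorithm~\ref{alg:quantile} as AboveThreshold/SVT with per-query sensitivity $\Delta=1/n$, a data-independent threshold, noise scales $2\Delta/\epsilon$ and $4\Delta/\epsilon$, and a single above-threshold report, then conclude $\epsilon$-DP. The only difference is that you additionally sketch the standard SVT coupling argument (shift the threshold noise by $\Delta$ and the halting query's noise by $2\Delta$, giving density ratio $e^{\epsilon/2}\cdot e^{\epsilon/2}=e^\epsilon$), whereas the paper simply cites the textbook SVT guarantee; both correctly note that the unbounded loop incurs no extra cost since only below-threshold comparisons occur beyond the first hit.
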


\begin{proof}
Let neighboring datasets $P,P'$ differ in one record. For each $i\ge 0$ define the
1/n–sensitive counting query
\[
q_i(P)\;=\;\frac{1}{n}\,\big|\{x\in P:\ x<2^i\lambda_{\min}\}\big|
\qquad(\Delta=1/n)
\]
The algorithm draws a noisy threshold $\tilde T=(1-\theta)+\mathrm{Lap}(2\Delta/\epsilon)$
and, scanning $i=0,1,2,\ldots$, draws independent noisy query values
$\tilde q_i=q_i(P)+\mathrm{Lap}(4\Delta/\epsilon)$ and halts at the first index
with $\tilde q_i\ge \tilde T$, returning that index (equivalently, $2^i\lambda_{\min}$),
or $\perp$ if none occurs.

This is exactly the \emph{AboveThreshold} (a.k.a.\ the \emph{Sparse Vector Technique})
with sensitivity $\Delta=1/n$, a single positive report, and noise scales
$2\Delta/\epsilon$ for the threshold and $4\Delta/\epsilon$ for the queries.
By the standard SVT privacy guarantee, the mechanism that outputs only the first
index $i$ whose noisy query exceeds the noisy threshold is $\epsilon$-differentially
private, independent of the number of queries evaluated
(see, e.g., \citealp[Theorem~3.23]{Dwork2006}). The final reported value $2^i\lambda_{\min}$ is a
deterministic post-processing of this index and therefore preserves $\epsilon$-DP.
\end{proof}

\begin{lemma}[\Cref{lem:quantile-utility} restated]\label{apx-lem:quantile-utility}
    \Cref{alg:quantile} returns a $6$-approximation of the $(1-\theta)$-quantile of $P$ w.p. $\geq 1-\beta$, given that $\nicefrac{1}{10} \leq \theta \leq \nicefrac{9}{10}$ and:
    \[
    n \geq \max\left\{\frac{20}{\epsilon} \ln\left(\frac{4\log\left(\nicefrac{\lambda_{\max}}{\lambda_{\min}}\right)}{\beta}\right), 200\ln\left(\nicefrac{4}{\beta}\right)\right\}
    \]
\end{lemma}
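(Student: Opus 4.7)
The plan is to control two independent error budgets—sampling error in the empirical CDF and Laplace noise on the threshold and the queries—each kept below $\eta := 1/20$ with probability at least $1-\beta/2$. First I would invoke the Dvoretzky--Kiefer--Wolfowitz inequality to guarantee $\sup_x |F_n(x) - F(x)| \le \eta$ with probability at least $1-\beta/2$, which requires $n \ge \ln(4/\beta)/(2\eta^2) = 200\ln(4/\beta)$, matching the second term of the stated sample bound.

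Next, although the for-loop has no a priori upper bound on $i$, only the first $k := O(\log(\lambda_{\max}/\lambda_{\min}))$ iterations can matter: the assumption $\lambda\in[\lambda_{\min},\lambda_{\max}]$ caps the true quantile $Q_{1-\theta}=\ln(1/\theta)/\lambda$ by a known multiple of $1/\lambda_{\min}$, so once $2^i\lambda_{\min}$ exceeds that cap by a constant factor the stopping condition is forced even under worst-case noise. I would then apply the Laplace tail bound together with a union bound over the single threshold draw $\mathrm{Lap}(2/\epsilon n)$ and the at most $k$ query draws $\mathrm{Lap}(4/\epsilon n)$ to conclude that every noise has magnitude at most $\eta$ with probability $\ge 1-\beta/2$; solving the corresponding exponential tail condition gives $n \ge O\bigl(\tfrac{1}{\epsilon}\ln(k/\beta)\bigr)$, matching the first term of the bound.

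Conditioning on both events, each comparison satisfies $\tilde q_i \in F(2^i\lambda_{\min})\pm 2\eta$ and $\tilde T \in (1-\theta)\pm\eta$, so the algorithm is certain to accept once $F(2^i\lambda_{\min}) \ge 1-\theta+3\eta$ and certain to reject while $F(2^i\lambda_{\min}) \le 1-\theta-3\eta$. Because $\theta\in[1/10,9/10]$, both perturbed levels stay strictly interior to $(0,1)$, so their preimages under $F^{-1}$ lie within a constant multiplicative factor of $Q_{1-\theta}$. Combining this continuous constant-factor window with the factor-of-$2$ discretization inherent to the dyadic grid shows that the returned index $\hat i$ differs by a bounded constant from the unique $i^\star$ with $2^{i^\star-1}\lambda_{\min} < Q_{1-\theta} \le 2^{i^\star}\lambda_{\min}$, and a careful accounting of those constants yields the claimed $6$-approximation.

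The main obstacle is the last step: converting the additive $O(\eta)$ slack on the CDF level into a multiplicative slack on the quantile and then verifying that, after the further factor of $2$ introduced by the dyadic grid, the total multiplicative error fits inside $6$. This is precisely where the assumption $\theta\in[1/10,9/10]$ enters, since it keeps $F^{-1}$ well-conditioned at the level $1-\theta$; I would tune $\eta$ and the three thresholds $1-\theta-3\eta,\ 1-\theta,\ 1-\theta+3\eta$ tightly enough that each of the allowed dyadic indices around $i^\star$ produces a ratio within $[1/6,6]$. A final union bound over the DKW and Laplace events then caps the overall failure probability at $\beta$.
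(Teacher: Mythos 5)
Your proposal follows essentially the same route as the paper's proof: DKW for the empirical-CDF deviation with $\eta = 1/20$ (giving the $200\ln(4/\beta)$ term), a Laplace tail bound plus union bound over the $O(\log(\lambda_{\max}/\lambda_{\min}))$ queries and the threshold draw (giving the first term), and then a conversion of the resulting $\pm 3\eta$ slack at level $1-\theta$ into a multiplicative window via $F^{-1}$, multiplied by the factor-of-$2$ dyadic discretization, with $\theta\in[1/10,9/10]$ keeping the constant below $6$. The only thing you leave schematic is the final constant bookkeeping (the paper explicitly bounds $2\max\{\ln(1/(\theta-3/20))/\ln(1/\theta),\ \ln(1/\theta)/\ln(1/(\theta+3/20))\}$ at the interval endpoints), but the structure and the role of each assumption match.
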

\begin{proof}
    The algorithm performs a search over dyadic multiples of the input parameter $\lambda_{\min}$, and outputs the smallest value $2^i \cdot \lambda_{\min}$ such that the noisy empirical cumulative distribution function (CDF) at that point exceeds a noisy threshold $\tilde{T} = (1-\theta) + \textrm{Lap}(\frac{2}{\epsilon n})$.

    Write $F(x)=1-e^{-\lambda x}$ for the true CDF and $F_n(x)$ for the empirical CDF. Two error sources affect each comparison with the threshold $1-\theta$:
    \begin{enumerate}
        \item \textbf{Sampling error.} By the Dvoretzky–Kiefer–Wolfowitz (DKW) inequality,
        \[
            \Pr\!\Big[ \sup_x \big|F_n(x)-F(x)\big| > \tfrac{1}{20} \Big]
            \ \le\ 2e^{-2n(1/20)^2}.
        \]
        Requiring this to be $\le \beta/2$ yields
        \[
            n \ \ge\ \frac{1}{2(1/20)^2}\ln\!\Big(\tfrac{4}{\beta}\Big)
            \ =\ 200 \ln\!\Big(\tfrac{4}{\beta}\Big).
        \]
        This implies that the algorithm’s comparison of the empirical CDF to the threshold $(1 - \theta)$ is accurate up to $\nicefrac{1}{20}$ due to sampling error.
    
        \item \textbf{Noise error.} Each comparison uses (i) a Laplace perturbation of the empirical CDF and (ii) a Laplace perturbation of the threshold.
        Let $b=4/(\epsilon n)$ be the Laplace scale. For any $t>0$,
        \(
            \Pr\big[|\mathrm{Lap}(b)|>t\big] = e^{-t/b}.
        \)
        Note that the threshold noise uses scale $2/(\epsilon n)$, which is smaller; thus, if we ensure that the Laplace noise with scale $4/(\epsilon n)$ is bounded, the threshold noise is automatically bounded as well.
        We require every noise draw among at most $2\log(\nicefrac{\lambda_{\max}}{\lambda_{\min}})$ independent draws (one for the CDF at each of $\log(\nicefrac{\lambda_{\max}}{\lambda_{\min}})$ checkpoints and one for the threshold; the factor $2$ is a safe constant) to be bounded by $1/20$. By a union bound,
        \[
            2\log(\nicefrac{\lambda_{\max}}{\lambda_{\min}})\cdot e^{-(1/20)/b} \ \le\ \beta/2
            \quad\Longleftrightarrow\quad
            n \ \ge\ \frac{5}{\epsilon}\,\ln\!\Big(\tfrac{4\log(\nicefrac{\lambda_{\max}}{\lambda_{\min}})}{\beta}\Big).
        \]
    \end{enumerate}
    
    Under these two events, we have with probability at least $1 - \beta$, both the noisy empirical CDF and the threshold are within $\nicefrac{1}{20}$ of their true values. Consequently, since $q_{1-\theta} = \frac{1}{\lambda}\ln(\frac{1}{\theta})$ for the exponential distribution, the decision to return a given $2^i \cdot \lambda_{\min}$ is correct up to a deviation of $2\cdot \max\{\frac{\ln(\frac{1}{\theta - 3/20})}{\ln{\frac{1}{\theta}}}, \frac{\ln(\frac{1}{\theta})}{\ln(\frac{1}{\theta + 3/20})}\}$. In particular, if $\nicefrac{1}{10} \leq \theta \leq \nicefrac{9}{10}$ then the quantile scales by at most
    \[
    \text{max}
    \begin{cases}
    \displaystyle
    2\,\frac{\ln\!\left(10\right)}{\ln\!\left(4\right)}
    \approx 3.322, & \theta = \nicefrac{1}{4},\\[10pt]
    \displaystyle
    2\,\frac{\ln\!\left(\frac{4}{3}\right)}{\ln\!\left(\frac{10}{9}\right)}
    \approx 5.461, & \theta = \nicefrac{9}{10}.
    \end{cases}
    \]
    then a clean uniform bound is $\leq 6$.
    Thus, with probability at least $1-\beta$, the output $\tilde Q_{1-\theta}$ satisfies
    \[
    \frac{1}{6}q_{1-\theta} \;\le\; \tilde Q_{1-\theta} \;\le\; 6\,q_{1-\theta},
    \]
    i.e., Algorithm~\ref{alg:quantile} achieves a constant $6$-approximation to the $(1 - \theta)$-quantile of $P$, provided
    \[
    n \geq \max\left\{\frac{5}{\epsilon} \ln\left(\frac{4\log\left(\nicefrac{\lambda_{\max}}{\lambda_{\min}}\right)}{\beta}\right), 200\ln\left(\nicefrac{4}{\beta}\right)\right\}
    \]

\end{proof}

\begin{corollary}[\Cref{cor:range} restated]\label{apx-cor:range}
Let $\tilde Q_{1-\theta}$ be the output of \Cref{alg:quantile}. Fix a target failure probability $2\beta\in(0,1)$. There exists an absolute constant $c_0\ge 1$ from \Cref{lem:quantile-utility} such that, if we set
\[
\tilde R \;=\; C\,\tilde Q_{1-\theta}\,\ln n
\quad\text{with}\quad
C \;\ge\; \frac{c_0}{\ln(1/\theta)}\Big(1+\frac{\ln(1/\beta)}{\ln n}\Big),
\]
then with probability at least $1-2\beta$ all $n$ samples lie in $[0,\tilde R]$. In particular,
\[
\tilde R \;=\; O\!\left(\frac{1}{\lambda}\ln\!\Big(\nicefrac{1}{\theta}\Big)\,\ln n\right)
\]
\end{corollary}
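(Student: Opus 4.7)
The plan is to reduce everything to a straightforward exponential tail bound, combined with the utility of \Cref{alg:quantile}. The idea is that $\tilde Q_{1-\theta}$ is, with probability $\ge 1-\beta$, a constant-factor proxy for $1/\lambda$ (up to the $\ln(1/\theta)$ factor), so if we scale it by $\Theta(\ln n)$ we obtain a clipping threshold that dominates all $n$ draws from $\mathrm{Exp}(\lambda)$ with high probability.

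First, I would condition on the "good event" $\mathcal{E}_1$ that \Cref{alg:quantile} returns $\tilde Q_{1-\theta}$ satisfying $\tilde Q_{1-\theta} \ge \tfrac{1}{6}Q_{1-\theta}$, which holds with probability $\ge 1-\beta$ by \Cref{lem:quantile-utility}. Since $Q_{1-\theta} = \tfrac{1}{\lambda}\ln(1/\theta)$ for the exponential distribution, this inequality inverts to
\[
\frac{1}{\lambda} \;\le\; \frac{6\,\tilde Q_{1-\theta}}{\ln(1/\theta)}
\]
Next, I would apply the standard exponential tail bound together with a union bound over the $n$ samples: for any threshold $T>0$,
\[
\Pr\!\Big[\max_{i\in[n]} X_i > T\Big] \;\le\; n\,e^{-\lambda T}
\]
Setting this $\le \beta$ requires $T \ge \tfrac{1}{\lambda}(\ln n + \ln(1/\beta))$; call this event $\mathcal{E}_2$.

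Finally, I would verify that the prescribed $\tilde R = C\,\tilde Q_{1-\theta}\,\ln n$ satisfies this sufficient condition on $\mathcal{E}_1$. Substituting the inverted bound for $1/\lambda$, it suffices that
\[
C\,\tilde Q_{1-\theta}\,\ln n \;\ge\; \frac{6\,\tilde Q_{1-\theta}}{\ln(1/\theta)}\bigl(\ln n + \ln(1/\beta)\bigr),
\]
which, after dividing through by $\tilde Q_{1-\theta}\ln n$, is exactly
\[
C \;\ge\; \frac{6}{\ln(1/\theta)}\Big(1 + \frac{\ln(1/\beta)}{\ln n}\Big)
\]
so $c_0=6$ suffices (matching the absolute constant from \Cref{lem:quantile-utility}). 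A union bound over $\mathcal{E}_1$ and $\mathcal{E}_2$ yields the failure probability $\le 2\beta$, and the asymptotic form $\tilde R = O\bigl(\tfrac{1}{\lambda}\ln(1/\theta)\ln n\bigr)$ follows from plugging the upper side of the 6-approximation back in.

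There is essentially no hard step here: the only non-mechanical observation is that one must use the \emph{lower} side of the quantile approximation (not the upper side) to upper bound $1/\lambda$, which is what makes $\tilde R$ provably a valid clipping threshold. The $\ln n$ factor is exactly the price of the union bound over $n$ samples, and the $\ln(1/\beta)/\ln n$ correction in $C$ absorbs the confidence parameter for the tail.
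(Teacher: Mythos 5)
Your proof is correct and follows essentially the same route as the paper's: condition on the quantile-utility event, use the lower side of the constant-factor approximation to turn $\tilde Q_{1-\theta}$ into an upper bound on $1/\lambda$, apply the exponential tail bound with a union bound over the $n$ samples, and check that the prescribed $C$ makes $\tilde R$ exceed the resulting threshold. The only cosmetic difference is that you first derive the requirement on $T$ and then verify $\tilde R\ge T$, whereas the paper substitutes $\tilde R$ directly into $e^{-\lambda\tilde R}$ and solves for $C$; these are algebraically identical.
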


\begin{proof}
Let $Q_{1-\theta}=\frac{1}{\lambda}\ln(\nicefrac{1}{\theta})$ be the $(1-\theta)$-quantile of $\mathrm{Exp}(\lambda)$. By \Cref{lem:quantile-utility}, there exists $c_0$ such that with probability at least $1-\beta$:
\[
\frac{1}{c_0}\,Q_{1-\theta}\ \le\ \tilde Q_{1-\theta}\ \le\ c_0\,Q_{1-\theta}
\]
Denote this good event as $\mathcal{E}$ and condition on it. For any sample $X\sim\mathrm{Exp}(\lambda)$,
\[
\Pr[X>\tilde R]
\;\le\; e^{-\lambda\tilde R}
\;\le\; \exp\!\Big(-\lambda\,C\ln n\cdot \tfrac{Q_{1-\theta}}{c_0}\Big)
\;=\; \exp\!\Big(-\tfrac{C}{c_0}\ln n\cdot \ln\tfrac{1}{\theta}\Big)
\;=\; n^{-\frac{C}{c_0}\ln(1/\theta)}
\]
A union bound over the $n$ samples gives
\[
\Pr\!\big[\max_i X_i > \tilde R\ \big|\ \mathcal{E}\big]
\ \le\ n^{\,1-\frac{C}{c_0}\ln(1/\theta)}
\]
Choosing
\(
C \ge \frac{c_0}{\ln(1/\theta)}\big(1+\frac{\ln(1/\beta)}{\ln n}\big)
\)
ensures the right-hand side is at most $\beta$. Unconditioning, the overall failure probability is at most $2\beta$.

Finally, under $\mathcal{E}$:
\[
\tilde R \;=\; C\,\tilde Q_{1-\theta}\,\ln n
\ \le\ C\,c_0\,Q_{1-\theta}\,\ln n
\ =\ O\!\left(\frac{1}{\lambda}\ln\!\Big(\nicefrac{1}{\theta}\Big)\,\ln n\right)
\]
which proves the claimed order bound.
\end{proof}

\begin{lemma}[\Cref{lem:mle-privacy} restated]\label{apx-lem:mle-privacy}
    \Cref{alg:private-mle} is $\epsilon$-differentially private.
\end{lemma}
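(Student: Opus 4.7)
The plan is to reduce the analysis to a single application of the Laplace mechanism together with post-processing. The only data-dependent quantity released in Algorithm~\ref{alg:private-mle} is $\tilde s = \frac{1}{n}\sum_{i=1}^n \bar x_i + \mathrm{Lap}(R/(\epsilon n))$, where $\bar x_i = \min\{x_i, R\}$. The final output $\tilde \lambda = 1/\tilde s$ is a deterministic function of $\tilde s$, so by post-processing it inherits whatever DP guarantee we establish for $\tilde s$.

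First I would compute the $\ell_1$-sensitivity of the clipped-mean function $f(P) = \frac{1}{n}\sum_i \min\{x_i, R\}$. Fix neighboring datasets $P, P'$ differing in a single record, say $x_j$ versus $x_j'$. Since each clipped value lies in $[0, R]$, the contribution of the differing index changes by at most $R$, and all other terms cancel, giving $|f(P) - f(P')| \le R/n$. Hence $\Delta f \le R/n$.

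Next I would invoke the Laplace mechanism: adding $\mathrm{Lap}(\Delta f / \epsilon) = \mathrm{Lap}(R/(\epsilon n))$ to $f(P)$ yields an $\epsilon$-DP release of $\tilde s$. Finally, since $\tilde \lambda = 1/\tilde s$ is obtained from $\tilde s$ without further access to $P$, post-processing preserves $\epsilon$-DP and the algorithm is $\epsilon$-differentially private as claimed.

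The argument is essentially routine; the only conceptual point worth stating explicitly is that clipping is what makes the sensitivity bounded (without it, an exponential sample can be arbitrarily large and the mean would have unbounded sensitivity). I do not expect any real obstacle here.
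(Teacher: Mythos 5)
Your proof is correct and follows the same route as the paper: bound the $\ell_1$-sensitivity of the clipped mean by $R/n$, invoke the Laplace mechanism to get $\epsilon$-DP for $\tilde s$, and note that $\tilde\lambda = 1/\tilde s$ is post-processing. No differences worth remarking on.
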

\begin{proof}
    Let $f(P)=\frac{1}{n}\sum_{i=1}^n \min\{x_i,R\}$. For neighboring datasets $P,P'$ differing in one record, the $\ell_1$-sensitivity satisfies
    \[
    \Delta f\;\le\;\frac{1}{n}\,\big|\min\{x,R\}-\min\{x',R\}\big|\;\le\;\frac{R}{n}.
    \]
    Adding $\mathrm{Lap}(R/(\epsilon n))$ to $f(P)$ therefore yields an $\epsilon$-DP release by the Laplace mechanism. The final estimator $\tilde\lambda=1/\tilde s$ is a deterministic function (post-processing) of the noisy mean $\tilde s$, so the overall procedure is $\epsilon$-DP. 
\end{proof}

\begin{lemma}[\Cref{lem:mle-utility} restated]\label{apx-lem:mle-utility}
    \Cref{alg:private-mle} returns an $(1 \pm \alpha)$-approximation of $\lambda$ w.p. $\geq 1-\beta$, given that:
    \[
    n \geq O\left(\max\left\{\frac{R\ln(\frac{1}{\beta})}{\epsilon(\alpha - e^{-\lambda R})}, \frac{\ln(\frac{1}{\beta})}{\alpha^2}\right\}\right)
    \]
\end{lemma}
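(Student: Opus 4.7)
The plan is to write $\tilde{s}$ as a clipped empirical mean plus an independent Laplace noise, and to decompose the error $\tilde{s} - 1/\lambda$ into three sources: a deterministic clipping bias, a stochastic sampling fluctuation, and the Laplace perturbation. I would bound each of the three pieces, combine via the triangle inequality to obtain an additive bound on $\tilde{s}$, and finally invert to translate that bound into a multiplicative $(1\pm\alpha)$ bound on $\tilde{\lambda} = 1/\tilde{s}$.

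First, I would compute the clipping bias exactly: integration by parts gives $\mathbb{E}[\min\{X,R\}] = \tfrac{1-e^{-\lambda R}}{\lambda}$, so the bias of the clipped mean with respect to the target $1/\lambda$ is exactly $e^{-\lambda R}/\lambda$. Next, for the sampling fluctuation I would invoke a multiplicative Chernoff-type bound for exponential random variables---the clipped variables $\bar{x}_i$ still concentrate around $\mathbb{E}[\bar{x}_1] \approx 1/\lambda$ to within a multiplicative factor---to obtain that for $n = \Omega(\ln(1/\beta)/\alpha^2)$,
\[
\Bigl|\tfrac{1}{n}\textstyle\sum_i \bar{x}_i - \mathbb{E}[\bar{x}_1]\Bigr| \;\leq\; \tfrac{\alpha}{3\lambda}
\]
with probability at least $1-\beta/2$. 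Finally, the Laplace noise $Z \sim \mathrm{Lap}(R/(\epsilon n))$ satisfies $|Z| \leq \tfrac{R\ln(2/\beta)}{\epsilon n}$ with probability at least $1-\beta/2$ by the standard Laplace tail bound.

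Combining via the triangle inequality gives $|\tilde{s} - 1/\lambda| \leq \tfrac{e^{-\lambda R}+\alpha/3}{\lambda} + \tfrac{R\ln(2/\beta)}{\epsilon n}$, and I would then force the right-hand side to be at most $\alpha/\lambda$ by requiring the Laplace contribution to be at most $(\alpha-e^{-\lambda R})/(O(1)\cdot\lambda)$. This enforces the stated condition $n = \Omega(R\ln(1/\beta)/(\epsilon(\alpha-e^{-\lambda R})))$, absorbing a residual $\lambda$ factor into the big-$O$ constant (consistent with the regime $\lambda R \asymp \ln n$ used when this lemma is plugged into the main theorem). Inverting the additive bound yields $\tilde{\lambda} = 1/\tilde{s} \in [(1-O(\alpha))\lambda, (1+O(\alpha))\lambda]$, and a union bound over the two failure events gives the claimed $1-\beta$ success probability. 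The main obstacle will be invoking the right multiplicative Chernoff bound for (clipped) exponentials---the preliminaries state only the $\{0,1\}$ version---and then tracking constants carefully enough that the three error contributions sum to exactly $\alpha/\lambda$ under the stated sample-size bound, rather than overshooting by an extra $\lambda$ or $\log n$ factor.
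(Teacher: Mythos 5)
Your proposal follows essentially the same route as the paper's proof: decompose the error in $\tilde{s}$ into the clipping bias $e^{-\lambda R}/\lambda$, a sampling-fluctuation term controlled by a multiplicative Chernoff-type bound for exponentials, and the Laplace perturbation bounded by its standard tail, then combine via the triangle inequality and invert to get a multiplicative bound on $\tilde{\lambda}=1/\tilde{s}$. The two concerns you flag at the end --- that the preliminaries state Chernoff only for $\{0,1\}$ variables while the proof needs it for (clipped) exponential sums, and that a residual $\lambda$ factor appears when solving the privacy-noise inequality and must be absorbed in the $O(\cdot)$ --- are both present, and similarly glossed over, in the paper's own proof, so your outline is a faithful reconstruction rather than a genuine departure.
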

\begin{proof}
Let $X_1, \dots, X_n \overset{\text{i.i.d.}}{\sim} \text{Exp}(\lambda)$. The maximum likelihood estimate (MLE) of the rate parameter $\lambda$ is given by:
\[
    \hat{\lambda} = \left(\frac{1}{n} \sum_{i=1}^n X_i\right)^{-1}
\]

Let $\bar{s} = \frac{1}{n} \sum_{i=1}^n \bar{x}_i$ denote the mean of the clipped values, and $\tilde{s}$ the privatized version:
\[
    \tilde{s} = \bar{s} + \mathrm{Lap}\left(\frac{R}{\epsilon n}\right)
\]
Then, the private estimator is:
\[
    \tilde{\lambda} = \frac{1}{\tilde{s}}
\]

We now analyze the error in $\tilde{\lambda}$. First, we bound the deviation of $\tilde{s}$ from the true mean $s = 1/\hat\lambda$. The total error in $\tilde{s}$ arises from two sources:
\begin{enumerate}
    \item \textbf{Clipping Bias:} Define $\delta_{\text{clip}} = |\mathbb{E}[\bar{x}_i] - \mu|$. Since $X_i \sim \text{Exp}(\lambda)$, the clipping bias satisfies:
    \[
        \delta_{\text{clip}} = \int_R^\infty (x - R) \lambda e^{-\lambda x} dx = e^{-\lambda R} \left(\frac{1}{\lambda}\right)
    \]
    \item \textbf{Laplace Noise:} With probability at least $1 - \nicefrac{\beta}{2}$, the Laplace noise is bounded by:
    \[
        \left|\mathrm{Lap}\left(\frac{R}{\epsilon n}\right)\right| \leq \frac{R}{\epsilon n} \ln\left(\frac{2}{\beta}\right)
    \]
\end{enumerate}

Therefore, the total deviation of $\tilde{s}$ from $\mu$ is:
\[
    |\tilde{s} - s| \leq \delta_{\text{clip}} + \left|\mathrm{Lap}\left(\frac{R}{\epsilon n}\right)\right| \leq \frac{e^{-\lambda R}}{\lambda} + \frac{R}{\epsilon n} \ln\left(\frac{2}{\beta}\right)
\]

We now analyze the error in $\hat{\lambda}$. We apply a multiplicative Chernoff bound for the sample mean of i.i.d.\ exponential variables. For any \( \alpha \in (0,1) \), we have:
\[
\Pr\left[ \left| s - \frac{1}{\lambda} \right| \geq \frac{\alpha}{\lambda} \right] 
\leq 2 \exp\left( -\frac{n\alpha^2}{3} \right)
\]

Equivalently, since \( \hat{\lambda} = 1/s \), this implies:
\[
\Pr\left[ \left| \frac{\hat{\lambda}}{\lambda} - 1 \right| \geq \alpha \right] 
\leq 2 \exp\left( -\frac{n\alpha^2}{3} \right)
\]

To ensure this probability is at least \( 1 - \nicefrac{\beta}{2} \), it suffices to set:
\[
2 \exp\left( -\frac{n\alpha^2}{3} \right) \leq \nicefrac{\beta}{2} \implies n \geq \frac{3}{\alpha^2} \ln\left( \frac{4}{\beta} \right)
\]

Finally, to guarantee an $(1 \pm \alpha)$-multiplicative approximation to $\lambda$, it suffices to require:
\[
    \left| \tilde{s} - \frac{1}{\lambda} \right| \leq |\tilde s - s| + |s - \frac{1}{\lambda}| \leq \frac{\alpha}{\lambda}
\]
Thus, we set:
\[
\frac{e^{-\lambda R}}{\lambda} + \frac{R}{\epsilon n} \ln\left(\frac{2}{\beta}\right) \leq \frac{\alpha}{2\lambda}
\]

This gives the sample complexity condition:
\[
    n \geq O\left(\max\left\{\frac{R\ln(\frac{1}{\beta})}{\epsilon(\alpha - e^{-\lambda R})}, \frac{\ln(\frac{1}{\beta})}{\alpha^2}\right\}\right)
\]
under the assumption that $\alpha > e^{-\lambda R}$. Note that since $R = O\!\left(\tfrac{1}{\lambda}\ln(\tfrac{1}{\theta})\ln(n)\right)$ (see \Cref{cor:range}), this condition simplifies to $\alpha > 1/n$, which is a mild and natural requirement as typically $n > 1/\alpha$.

Hence, under this condition, the algorithm returns an $(1 \pm \alpha)$-approximation to $\lambda$ with probability at least $1 - \beta$.
\end{proof}

\begin{lemma}[\Cref{lem:exp-quantile-privacy} restated]\label{apx-lem:exp-quantile-privacy}
    Algorithm~\ref{alg:exp-quantile} is $\epsilon$-differentially private.
\end{lemma}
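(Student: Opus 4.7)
The plan is to view Algorithm~\ref{alg:exp-quantile} as an adaptive composition of at most $T=\lceil \log_2(\log(\lambda_{\max}/\lambda_{\min})/\alpha)\rceil$ Laplace mechanisms, and then invoke basic composition together with post-processing. First I would observe that the grid of checkpoints $p_1,\dots,p_T$, as well as the schedule of indices $m$ visited by the binary search on $(L,U)$, is determined only by $\lambda_{\min},\lambda_{\max},\alpha$ and the previously released noisy counts; in particular, the candidate midpoints $p_m$ are themselves data-independent conditioned on past outputs.

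Next I would analyze the sensitivity of a single query. In iteration $t$ the algorithm releases
\[
\tilde q \;=\; f_m(P) + \mathrm{Lap}\!\left(\tfrac{T}{\epsilon n}\right),
\qquad
f_m(P) \;=\; \tfrac{1}{n}\bigl|\{x_i\in P: x_i<p_m\}\bigr|
\]
For neighboring datasets $P,P'$ differing in one record, a single record can change the indicator $\mathbf{1}[x_i<p_m]$ by at most one, so $|f_m(P)-f_m(P')|\le 1/n$. Hence $f_m$ has $\ell_1$-sensitivity $\Delta f_m = 1/n$, and adding Laplace noise of scale $T/(\epsilon n)=\Delta f_m\cdot T/\epsilon$ makes each individual release $(\epsilon/T)$-differentially private by the Laplace mechanism.

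The branching decisions (updating $L\gets m$, $U\gets m$, or returning $1/\tilde q$) are deterministic functions of the released noisy counts and therefore post-processing, contributing no additional privacy loss. Since the loop executes at most $T$ iterations on any input, the whole transcript of releases is the adaptive composition of at most $T$ mechanisms, each $(\epsilon/T)$-DP. By basic (sequential) composition this gives $T\cdot(\epsilon/T)=\epsilon$-DP overall, and a final post-processing step converts the transcript into the output $\tilde\lambda$.

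The only subtle point I would be careful about is the data-dependent stopping time: the loop may exit early via the \textsc{Else} branch or run to completion. Standard adaptive composition handles this, because we may pessimistically charge the full $T$ rounds of privacy budget regardless of when the algorithm halts; early termination only reveals information already contained in the released noisy counts, and so is absorbed by post-processing. This will be the main thing to state cleanly, but it is not a genuine obstacle.
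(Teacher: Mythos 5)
Your proposal is correct and follows essentially the same argument as the paper: each of the at most $T$ iterations releases a counting query of sensitivity $1/n$ with Laplace noise of scale $T/(\epsilon n)$, giving $(\epsilon/T)$-DP per release, and basic composition plus post-processing yields $\epsilon$-DP overall. The only difference is that you spell out the adaptive-stopping and data-independence-of-midpoints subtleties, which the paper's proof leaves implicit.
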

\begin{proof}
    Let $T=\left\lceil \log_2\!\left(\frac{\log(\lambda_{\max}/\lambda_{\min})}{\alpha}\right)\right\rceil$ be the maximal number of (binary-search) iterations. Each iteration computes one counting query of sensitivity $1/n$ and releases it with independent Laplace noise of scale $b=T/(\epsilon n)$. By the Laplace mechanism, each such release is $(\epsilon/T)$-DP; by sequential composition over at most $T$ iterations, the full vector of noisy comparisons is $\epsilon$-DP. The final output is a deterministic function of these noisy values, hence by post-processing the algorithm is $\epsilon$-DP. 
\end{proof}

\begin{lemma}[\Cref{lem:exp-quantile-utility} restated]\label{apx-lem:exp-quantile-utility}
    Fix target accuracy $\alpha \in (0,1)$ and failure probability $\beta \in (0,1)$. Let $T = \left\lceil \log_2\!\left(\frac{\log(\lambda_{\max}/\lambda_{\min})}{\alpha}\right) \right\rceil$. If
    \[
        n \ \ge\ \max\left\{
            \frac{2eT}{\epsilon \alpha} \ln\left( \frac{2T}{\beta} \right),
            \frac{2}{\alpha^2} \ln\left( \frac{2T}{\beta} \right)
        \right\}
    \]
    then with probability at least $1 - \beta$ Algorithm~\ref{alg:exp-quantile} outputs $\tilde{\lambda}$ satisfying
    \[
        \tilde{\lambda} \in \left[ (1-\alpha)\lambda,\ (1+\alpha)\lambda \right]
    \]
\end{lemma}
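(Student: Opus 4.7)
The plan is to reduce the utility guarantee to two high-probability concentration events---sampling and privacy noise---and then leverage the local linearity of the exponential CDF near its $(1-1/e)$-quantile to translate CDF proximity into multiplicative proximity for the reciprocal parameter.

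I would first set up the good event. For each of the at most $T$ binary-search iterations, write $\tilde q_m - F(p_m) = (F_n(p_m) - F(p_m)) + Z_m$ where $Z_m \sim \mathrm{Lap}(T/(\epsilon n))$. A Hoeffding (or multiplicative Chernoff) bound at each fixed $p_m$ gives a sampling deviation bounded by $\alpha/(4e)$ whenever $n = \Omega(\alpha^{-2}\log(T/\beta))$; a Laplace tail bound together with a union bound over the $T$ iterations gives $|Z_m| \le \alpha/(4e)$ whenever $n = \Omega((T/(\epsilon\alpha))\log(T/\beta))$. Taking a union bound over both error types and all iterations, with probability at least $1-\beta$ every queried noisy CDF value $\tilde q_m$ is within $\alpha/(2e)$ of the true CDF $F(p_m)$; this matches the two terms in the stated sample-size requirement.

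Conditioning on this event, I would trace the binary search. Using $F(x) = 1 - e^{-\lambda x}$ and the first-order expansion around $x = 1/\lambda$, whenever $p_m \ge (1+\alpha)/\lambda$ one has $F(p_m) - (1 - 1/e) \ge \Omega(\alpha/e)$, so $\tilde q_m$ exceeds the upper threshold $1 - 1/e + \alpha/(2e)$ and the search correctly executes $U \gets m$; symmetrically, $p_m \le (1-\alpha)/\lambda$ forces the $L \gets m$ branch. Thus the algorithm never discards an index that already lies inside the target multiplicative band. Because the grid has geometric ratio $1/(1 - \alpha/2)$, at least one index $m^\star$ satisfies $p_{m^\star} \in [(1-\alpha/2)/\lambda,(1+\alpha/2)/\lambda]$; the binary search (which halves $U - L$ at each step) reaches such an index within $T$ iterations, at which point $F(p_{m^\star})$ lies strictly inside the accept band with margin at least $\alpha/(2e)$, so $\tilde q_{m^\star}$ falls into the band as well and the algorithm returns an estimate within $(1 \pm \alpha)$ of $\lambda$.

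The main obstacle is the simultaneous calibration of the accept band, the grid spacing, and the concentration slack: the accept band of half-width $\alpha/(2e)$ must be wide enough that some grid point produces a noisy $\tilde q$ inside it (guaranteeing termination), yet narrow enough that termination forces the accepted $p_m$ to lie in $(1 \pm \alpha)/\lambda$ (guaranteeing accuracy). This balance is enabled by the identity $F'(1/\lambda) = \lambda/e$, which converts a $\Theta(\alpha)$ multiplicative perturbation of the quantile into a $\Theta(\alpha/e)$ additive perturbation of the CDF; this is exactly why the scale $\alpha/(2e)$ appears throughout the algorithm and dictates both the Hoeffding slack and the Laplace scale $T/(\epsilon n)$.
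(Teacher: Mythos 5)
Your strategy is essentially the paper's: decompose the CDF error into a sampling term and a Laplace-noise term, control each with a high-probability tail bound and a union bound over the $T$ iterations, and then convert additive CDF accuracy near the level $1-1/e$ into multiplicative accuracy on $1/\lambda$ using the local slope $F'(1/\lambda)=\lambda/e$. You add a termination/grid-density argument for the binary search that the paper's own proof merely asserts in one line; that is a genuine completeness improvement.

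There are, however, two calibration gaps. First, your key margin claim does not carry the constant you need: at $p_m=(1+\alpha)/\lambda$ one has $F(p_m)-(1-1/e)=e^{-1}\big(1-e^{-\alpha}\big)$, and since $1-e^{-\alpha}<\alpha$ this is strictly \emph{less} than $\alpha/e$. Combined with the $\alpha/(2e)$ total CDF error you allow under the good event, you can only conclude $\tilde q_m>1-1/e$, not $\tilde q_m\ge 1-1/e+\alpha/(2e)$, so the $U\gets m$ branch is not forced; the paper instead measures the margin at $1/((1\pm\alpha)\lambda)$ and obtains $\ge \alpha/(2e)$, but even that only \emph{barely} balances against the $\alpha/(2e)$ noise budget and the $\alpha/(2e)$ accept half-width, so the accept-implies-accuracy implication is delicate and needs a cleaner budget split (e.g.\ widen the effective margin or relate accept to a slightly larger multiplicative window) to close. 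Second, your error-budget split $\alpha/(4e)+\alpha/(4e)$ does not reproduce the lemma's exact constants: with a $\alpha/(4e)$ sampling tolerance the DKW/Hoeffding term becomes $\Theta\!\big(e^2\alpha^{-2}\ln(T/\beta)\big)$, not $\tfrac{2}{\alpha^2}\ln(\tfrac{2T}{\beta})$, and the Laplace term gets an extra factor of~$2$. The paper recovers the smaller sampling constant by using an $\alpha/2$ tolerance, but that choice is itself inconsistent with its own claim that $\alpha/(2e)$ total CDF accuracy is needed; since the lemma is stated with explicit constants, the budget split needs to be made consistent on one side or the other.
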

\begin{proof}
    The binary search compares $\tilde{p}$ with $1 - 1/e$ at each step. Two error sources affect $\tilde{p}$:

    \begin{enumerate}
        \item \textbf{Sampling error.} By the DKW inequality,
        \[
            \Pr\!\left[ \sup_x |F_n(x) - F(x)| > \frac{\alpha}{2} \right] \le 2 e^{-2n(\alpha/2)^2}
        \]
        Setting this $\le \beta/(2T)$ and union-bounding over $T$ steps gives the sampling term
        \[
            n \ge \frac{2}{\alpha^2} \ln\left( \frac{2T}{\beta} \right)
        \]

        \item \textbf{Noise error.}
        Recall that the objective of the quantile estimation step is to output a value within the interval $\left[ \frac{1}{(1+\alpha)\lambda},\ \frac{1}{(1-\alpha)\lambda} \right]$.
        It suffices to estimate the $1/e$-quantile to within an additive error of $\frac{\alpha}{2e}$ in the CDF domain. Indeed, consider the upper deviation:
        \[
            \int_{1/\lambda}^{1/((1-\alpha)\lambda)} \lambda e^{-\lambda x} \, dx
            = e^{-1} - e^{-1/(1-\alpha)}
            = e^{-1}\!\left(1 - e^{-\frac{\alpha}{1-\alpha}}\right)
        \]
        Using the inequality $e^{-t} \le 1 - t/2$ for $t \in [0,1]$  and noting that $\alpha < \nicefrac{1}{2}$, we obtain
        \[
            e^{-1}\!\left(1 - e^{-\frac{\alpha}{1-\alpha}}\right)
            \ge e^{-1}\!\left(1 - \left(1 - \frac{\alpha}{2(1-\alpha)}\right)\right)
            = \frac{\alpha}{2e(1-\alpha)}
            \ge \frac{\alpha}{2e}
        \]
        
        Similarly, for the lower deviation we have
        \[
            \int_{1/((1+\alpha)\lambda)}^{1/\lambda} \lambda e^{-\lambda x} \, dx
            = e^{-1/(1+\alpha)} - e^{-1}
            = e^{-1}\!\left(e^{\frac{\alpha}{1+\alpha}} - 1\right)
        \]
        Applying the bound $e^{t} \ge 1 + t$ for $t \in [0,1]$, we obtain
        \[
            e^{-1}\!\left(e^{\frac{\alpha}{1+\alpha}} - 1\right)
            \ge e^{-1}\!\left(\frac{\alpha}{1+\alpha}\right)
            \ge \frac{\alpha}{2e}
        \]
        
        Therefore, an additive $\frac{\alpha}{2e}$-accuracy in the estimated CDF value is sufficient to guarantee that the resulting quantile lies in the desired multiplicative interval for $1/\lambda$.
        
        Laplace noise with scale $b = T/(\epsilon n)$ satisfies $\Pr[|\mathrm{Lap}(b)| > \nicefrac{\alpha}{2e}] \le e^{-(\nicefrac{\alpha}{2e})/b}$. Setting this $\le \beta/(2T)$ yields
        \[
            n \ge \frac{2eT}{\epsilon \alpha} \ln\left( \frac{2T}{\beta} \right)
        \]
    \end{enumerate}
    
    When both bounds hold, at each step the comparison with $1 - 1/e$ is correct, ensuring that after $T$ steps the returned $\tilde{q}$ lies in
    \[
        \left[ \frac{1}{(1+\alpha)\lambda},\ \frac{1}{(1-\alpha)\lambda} \right]
    \]
    Inverting yields $\tilde{\lambda} \in [(1-\alpha)\lambda, (1+\alpha)\lambda]$.
\end{proof}

\begin{lemma}[\Cref{lem:best-privacy} restated]\label{apx-lem:best-privacy}
Algorithm~\ref{alg:best-of-both} is $\epsilon$-differentially private.
\end{lemma}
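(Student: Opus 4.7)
The plan is to view Algorithm~\ref{alg:best-of-both} as an adaptive sequential composition of two private mechanisms whose budgets sum to $\epsilon$. The first mechanism is the coarse quantile call with budget $\epsilon/3$, and the second is one of two possible mechanisms (MLE-based or quantile-based), each of which is run with budget $2\epsilon/3$ regardless of which branch is selected.

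First, I would invoke \Cref{lem:exp-quantile-privacy} to conclude that the initial call producing $\tilde\lambda_0$ is $(\epsilon/3)$-DP. Next, I would observe that the branch test ``$\tilde\lambda_0\ge 2$'' is a deterministic function of the output of that first call; since the raw dataset is never re-inspected by this test, the branch choice is pure post-processing of a private release and introduces no additional privacy cost. Then, on the chosen branch, I would invoke either \Cref{thm:main-theorem} (for the MLE subroutine \textsc{MLE-Learning}, whose privacy is established via \Cref{lem:quantile-privacy} and \Cref{lem:mle-privacy}) or \Cref{lem:exp-quantile-privacy} (for \textsc{Quantile-Learning}), each instantiated with budget $2\epsilon/3$, to conclude that the second mechanism is $(2\epsilon/3)$-DP.

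Finally, I would combine the two stages via basic (adaptive) composition, Lemma~2.5 of \Cref{subsec:dp}: since the first mechanism is $(\epsilon/3)$-DP and, conditional on its output, the second mechanism is $(2\epsilon/3)$-DP, their sequential composition is $\epsilon$-DP. The returned value $\tilde\lambda$ is a deterministic function of the transcripts of the two calls, so post-processing preserves the $\epsilon$-DP guarantee.

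The only subtle point I would want to make explicit is the data-adaptive nature of the branch: even though the branch choice depends on $\tilde\lambda_0$, the second mechanism is defined on the same underlying dataset, so we must argue adaptive rather than non-adaptive composition. This is handled by the standard formulation of composition for adaptively chosen mechanisms, in which the bound $(\epsilon/3)+(2\epsilon/3)=\epsilon$ holds as long as each branch \emph{individually} satisfies its claimed $(2\epsilon/3)$-DP guarantee on any input dataset, which is exactly what \Cref{thm:main-theorem} and \Cref{lem:exp-quantile-privacy} give us.
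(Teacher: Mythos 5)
Your proposal is correct and mirrors the paper's proof: both split the budget as $\epsilon/3 + 2\epsilon/3$, treat the branch decision on $\tilde\lambda_0$ as post-processing of the first release, and conclude by adaptive sequential composition. The additional detail you supply (explicitly citing \Cref{lem:exp-quantile-privacy}, \Cref{thm:main-theorem} for each branch, and flagging the adaptive-composition subtlety) is a faithful elaboration of the same argument rather than a different route.
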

\begin{proof}
The algorithm proceeds in two stages:
(i) it runs the quantile-based estimator with budget $\epsilon/3$ and obtains $\tilde\lambda_0$; (ii) conditioned on this \emph{released} value, it runs either the MLE route or the quantile route with fresh budget $2\epsilon/3$. By the post-processing property, choosing the branch based on $\tilde\lambda_0$ does not incur additional privacy loss, and by adaptive sequential composition the total privacy loss is at most $\epsilon/3+2\epsilon/3=\epsilon$. 
\end{proof}

\begin{lemma}[\Cref{lem:bounds} restated]\label{apx-lem:bounds}
    \Cref{alg:median-dp} is $(\epsilon, \delta)$-differentially private. Moreover, if
    \[
    n \geq \max\left\{\frac{800}{\epsilon}\ln(\frac{2}{\delta\beta}), 5000\ln(\frac{2}{\beta})\right\}
    \]
    then
    \[
    \Pr[2^{k^* - 1} < \nicefrac{\ln(2)}{\lambda} < 2^{k^* + 1}] \geq 1 - \beta
    \]
\end{lemma}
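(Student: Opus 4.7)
I would recognize the procedure as a direct instance of Vadhan's \emph{stable histogram} release: per-bin counts have sensitivity $1/n$, bins with zero count are suppressed deterministically, and the threshold $\tau=\tfrac{2}{\epsilon n}\log(2/\delta)+\tfrac{1}{n}$ ensures that a \emph{phantom} bin (present in $P'$ but not $P$) survives the cut with probability at most $\delta/4$. Combined with the standard $\epsilon$-DP of the Laplace mechanism on jointly present bins, this yields $(\epsilon,\delta)$-DP.

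\textbf{Utility.} The plan is to reduce the goal to $k^*\in G:=\{k_0,k_0+1\}$, where $k_0=\lfloor\log_2 m\rfloor$ and $m=\ln(2)/\lambda$ is the true median; indeed, $\tilde\lambda_{\min}<\lambda<\tilde\lambda_{\max}$ unfolds to $2^{k^*-1}<m<2^{k^*+1}$, which forces $k^*\in G$ since $m\in(2^{k_0},2^{k_0+1})$. First I would parameterize by $u=\lambda 2^{k_0}\in(\tfrac{\ln 2}{2},\ln 2]$ and observe that $p_k=f(u\cdot 2^{k-k_0})$ for $f(t)=e^{-t}-e^{-2t}$; since $f$ is strictly unimodal with peak $1/4$ at $\ln 2$, a short case analysis yields $\max_{k\in G}p_k\ge 2^{-1/2}-\tfrac{1}{2}\approx 0.207$ and, more delicately, a uniform separation $\max_{k\in G}p_k-\max_{k\notin G}p_k\ge \gamma$ for some absolute constant $\gamma>0$. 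Let $k_{\max}\in G$ attain this maximum.

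Next I would combine two concentration ingredients. The DKW inequality on the empirical CDF gives $\max_k|c_k-p_k|\le\eta$ with probability at least $1-\beta/3$ whenever $n=\Omega(\log(1/\beta)/\eta^2)$; taking $\eta=\gamma/10$ absorbs the $5000\ln(2/\beta)$ term. Standard $\mathrm{Exp}(\lambda)$ tail bounds confine all samples to $O(\log(n/\beta))$ dyadic bins with probability at least $1-\beta/3$, so a Laplace tail bound union-bounded over those bins yields $\max_k|Z_k|\le\nu=\gamma/10$ with probability at least $1-\beta/3$ whenever $n=\Omega(\epsilon^{-1}\log(1/(\delta\beta)))$, covering the $(800/\epsilon)\ln(2/(\delta\beta))$ term. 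On the intersection of these events I would verify (i) $\hat c_{k_0}\ge p_{k_0}-\eta-\nu>\tau$, so $k_0\in S$; and (ii) for every $k\notin G$, $\hat c_k\le p_k+\eta+\nu<\hat c_{k_{\max}}$. Together these force $k^*\in G$, establishing the claim via a union bound.

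\textbf{Main obstacle.} The hard part will be the uniform gap bound above: when the median sits near a dyadic boundary ($u\to\ln 2$ or $u\to\ln 2/2$), two adjacent bins each carry roughly $0.2$ probability mass and the gap $\min\{p_{k_0}-p_{k_0-1},\,p_{k_0+1}-p_{k_0+2}\}$ contracts to about $0.04$. Establishing that this constant is bounded away from zero uniformly over $u\in(\tfrac{\ln 2}{2},\ln 2]$ via the required case analysis is what ultimately pins down the numerical constants $5000$ and $800$ in the sample-size requirement.
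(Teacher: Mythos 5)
Your proof is essentially correct and follows the same high-level strategy as the paper: reduce the claim to showing that the private argmax $k^*$ lands in $G=\{k_0,k_0+1\}$ where $k_0=\lfloor\log_2 m\rfloor$, and then combine a uniform bound on the empirical error with Laplace tail bounds. The paper routes the concentration through Lemma~2.3 of Karwa--Vadhan (which packages the empirical-plus-Laplace error for a bin), and argues for $k^*\in G$ via the threshold plus a brief unimodality remark; you instead use DKW plus explicit Laplace tails with a union bound over the $O(\log(n/\beta))$ occupied dyadic bins, and make the gap argument explicit. Both are valid routes to the stated bound.

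One place where your treatment is actually more careful than the paper's: the paper asserts ``the top two masses are at $k_m$ and $k_m+1$, with $p_k\in[0.207,0.25]$,'' but this is not literally true. Writing $u=\lambda 2^{k_0}\in(\tfrac{\ln 2}{2},\ln 2]$ and $p_k=f(u\,2^{k-k_0})$ with $f(t)=e^{-t}-e^{-2t}$, when $u$ is close to $\ln 2$ one has $p_{k_0+1}\approx f(2\ln 2)=3/16\approx 0.1875$ while $p_{k_0-1}\approx f(\ln 2/2)\approx 0.207$, so the runner-up lies \emph{outside} $G$ and the claimed interval is violated. What is actually true (and what both proofs need) is exactly what you state: the \emph{argmax} of $p_k$ is always in $G$ because $u$ and $2u$ bracket the unimodal peak $\ln 2$, and the quantity that must be bounded away from zero is the gap $\gamma=\max_{k\in G}p_k-\max_{k\notin G}p_k$, which by a direct calculation is $\ge 1-\tfrac14-\tfrac{1}{\sqrt 2}\approx 0.043$ over all $u$. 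You correctly identify this as the main obstacle and estimate the constant. Two small caveats: with $\eta=\gamma/10\approx 0.004$ your DKW bound $n=\Omega(\ln(1/\beta)/\eta^2)$ yields a constant larger than the paper's $5000$ (which corresponds to a $1/100$-level error via Karwa--Vadhan's lemma), so the stated numerical constants are not reproduced verbatim; and in your step (i) it would be cleaner to verify $k_{\max}\in S$ rather than $k_0\in S$, since $k_{\max}$ may equal $k_0+1$ and it is $\hat c_{k_{\max}}$ that you use to dominate the out-of-$G$ bins.
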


\begin{proof}
    A proof of $(\epsilon, \delta)$-differential privacy of this algorithm can be found in Theorem 3.5 of~\cite{Vadhan2017}. For utility, we show that if
    \[
    n \geq \max\left\{\frac{800}{\epsilon}\ln(\frac{2}{\delta\beta}), 5000\ln(\frac{2}{\beta})\right\}
    \]
    then $\Pr[2^{k^* - 1} < \nicefrac{\ln(2)}{\lambda} < 2^{k^* + 1}] \geq 1 - \beta$. Let $m=\nicefrac{\ln(2)}{\lambda}$ be the population median and let $p_k=\Pr[X_i\in(2^k,2^{k+1})]=e^{-\lambda 2^k}-e^{-\lambda 2^{k+1}}$. The map
    $k\mapsto p_k$ is unimodal and maximized at indices $k$ with $\lambda 2^k$ closest
    to $\ln(2)$, hence the top two masses are at $k_m=\lfloor\log_2 m\rfloor$ and $k_m+1$,
    with $p_k\in[0.207,0.25]$ by a direct calculation. Lemma 2.3 of~\cite{Karwa2017FiniteSD} gives that if $n \geq \max\{\frac{800}{\epsilon}\ln(\frac{2}{\delta\beta}), 5000\ln(\frac{2}{\beta})\}$ then $\Pr[|\tilde p_{k_m} - p_{k_m}| > \nicefrac{1}{100}] \leq \beta$. Laplace noise with scale
    $2/(\varepsilon n)$ and the threshold $\frac{2}{\epsilon n}\ln(\nicefrac{2}{\delta}) + \frac{1}{n}$ ensure that (i) no empty bin enters $S$ with probability $1-\delta$, and (ii) the perturbation cannot change the winner
    outside $\{k_m,k_m+1\}$. Therefore $k^*\in\{k_m,k_m+1\}$ with probability at least $1 - \beta$, and the expanded interval
    $(2^{k^*-1},2^{k^*+1})$ contains $(2^{k_m},2^{k_m+1})\ni m$. 
    Rearranging we have:
    \begin{equation*}
    \begin{split}
    & 2^{k^* - 1} < \nicefrac{\ln(2)}{\lambda} < 2^{k^* + 1} \iff (2^{k^* + 1})^{-1} < \nicefrac{\lambda}{\ln(2)} < (2^{k^* - 1})^{-1} \\
    & \quad \iff \ln(2)(2^{k^* + 1})^{-1} < \lambda < \ln(2)(2^{k^* - 1})^{-1} \iff \tilde \lambda_{\min} < \lambda < \tilde \lambda_{\max}
    \end{split}
    \end{equation*}
\end{proof}



\begin{lemma}[\Cref{lem:tv-alpha-exp} restated]\label{apx-lem:tv-alpha-exp}
Let $T(r)$ denote the total variation distance between two exponential distributions whose
rates differ by a ratio $r\ge 1$:
\begin{align*}
T(r) &:= \operatorname{TV}\!\big(\mathrm{Exp}(r\lambda),\mathrm{Exp}(\lambda)\big) \notag = r^{-\tfrac{1}{r-1}} - r^{-\tfrac{r}{r-1}} \notag \\
     &= r^{-\tfrac{1}{r-1}}\!\left(1 - \tfrac{1}{r}\right)
\end{align*}
Fix any $\alpha\in(0,\tfrac12)$. Then there exists an absolute constant $c$ such that,
with $r=1+c\alpha$, one has $T(r)\ge \alpha$. In particular,
the choice $c=8$ suffices.
\end{lemma}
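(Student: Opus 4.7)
The goal is to show that $T(1+c\alpha)\ge \alpha$ for $\alpha\in(0,\tfrac12)$ and $c=8$. Substituting $r=1+8\alpha$ into $T(r)=r^{-1/(r-1)}(1-1/r)$, the inequality to prove becomes
\[
(1+8\alpha)^{-1/(8\alpha)}\cdot \frac{8\alpha}{1+8\alpha}\ \ge\ \alpha,
\]
which, after canceling $\alpha$ and rearranging multiplicatively, is equivalent to
\[
8\ \ge\ (1+8\alpha)\cdot (1+8\alpha)^{1/(8\alpha)}.
\]
The plan is to take logarithms and reduce the statement to a clean one-variable inequality. Writing $u=1+8\alpha\in(1,5)$ and taking $\ln$ on both sides, the claim becomes
\[
\ln(8)\cdot\frac{u-1}{u}\ \ge\ \ln(u) \qquad \text{for all } u\in[1,5].
\]
So the bulk of the proof reduces to analyzing a single real-valued function on a compact interval.

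The next step is to define $f(u):=\ln(8)\cdot\frac{u-1}{u}-\ln(u)$ and show $f\ge 0$ on $[1,5]$. Differentiating,
\[
f'(u)\ =\ \frac{\ln(8)}{u^2}-\frac{1}{u}\ =\ \frac{\ln(8)-u}{u^2},
\]
so $f$ is strictly increasing on $[1,\ln(8)]$ and strictly decreasing on $[\ln(8),5]$ (note $\ln(8)\approx 2.08\in(1,5)$). In particular $f$ is unimodal on $[1,5]$, so its minimum on this interval is attained at one of the endpoints. A direct check gives $f(1)=0$, and the endpoint inequality $f(5)\ge 0$ is equivalent to $4\ln(8)\ge 5\ln(5)$, i.e.\ $8^4\ge 5^5$; this holds because $8^4=4096>3125=5^5$. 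Combining unimodality with non-negativity at both endpoints yields $f(u)\ge 0$ throughout $[1,5]$, which is exactly the claim.

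The only delicate point is that the endpoint estimate $f(5)\ge 0$ is somewhat tight (the gap is only $\ln(4096/3125)\approx 0.27$ divided by the natural scaling), so the constant $c=8$ is near the smallest clean integer that makes the argument go through—this is the main place any slack in the proof is spent. Everything else is a routine reduction: algebraic rearrangement, a one-line derivative computation, and a comparison of two integers. No further estimates on the Laplace or DKW sides enter, since the lemma is a purely analytic statement about $T(r)$.
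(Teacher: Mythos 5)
Your proof is correct, and it takes a genuinely different (though equally clean) route from the paper's. You reduce the claim $T(1+8\alpha)\ge\alpha$ to the single-variable inequality $\ln 8\cdot\frac{u-1}{u}\ge\ln u$ on $u\in[1,5]$, then argue via the derivative $f'(u)=\frac{\ln 8 - u}{u^2}$ that $f$ rises until $u=\ln 8$ and falls thereafter, so its minimum over $[1,5]$ is at an endpoint; $f(1)=0$ and $f(5)\ge 0\iff 8^4\ge 5^5$ close the argument. The paper instead sets $\delta=r-1$, defines $R(\delta)=T(1+\delta)/\delta$, proves $(\ln R)'<0$ by a separate auxiliary sign analysis (their $\psi(\delta)=(1+\delta)\ln(1+\delta)-\delta-\delta^2$), concludes $R$ is decreasing on all of $(0,\infty)$, and then lower-bounds $T(1+8\alpha)=8\alpha\,R(8\alpha)\ge 8\alpha\,R(4)$, reaching the same final comparison $8\ge 5^{5/4}$, i.e., $8^4\ge 5^5$. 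The paper's approach isolates a reusable monotonicity fact about $T(r)/(r-1)$; your approach is shorter and requires only one sign check of a very simple derivative plus one endpoint evaluation, since the other endpoint is trivially zero. Both are valid, and both spend essentially all their slack on the same final integer comparison $4096>3125$.
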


\begin{proof}
Write $r=1+\delta$ with $\delta>0$ and define
\begin{align*}
f(\delta) &:= T(1+\delta)
   = \frac{\delta}{1+\delta}\,(1+\delta)^{-1/\delta}, \\[6pt]
R(\delta) &:= \frac{f(\delta)}{\delta}
   = \frac{(1+\delta)^{-1/\delta}}{1+\delta}
\end{align*}
We prove that $R$ is strictly decreasing on $(0,\infty)$. Indeed,
\[
\ln R(\delta)\;=\;-\frac{\ln(1+\delta)}{\delta}-\ln(1+\delta)
\]
so
\begin{align*}
\frac{\mathrm d}{\mathrm d\delta}\ln R(\delta)
&= -\Bigg(\frac{1}{\delta(1+\delta)}
    - \frac{\ln(1+\delta)}{\delta^2}\Bigg)
    - \frac{1}{1+\delta} \\[6pt]
&= \frac{(1+\delta)\ln(1+\delta)-\delta-\delta^2}
        {\delta^2(1+\delta)}
\end{align*}
Let $\psi(\delta):=(1+\delta)\ln(1+\delta)-\delta-\delta^2$. Then
\[
\psi'(\delta)=\ln(1+\delta)-2\delta<\delta-2\delta=-\delta<0\quad(\delta>0)
\]
because $\ln(1+x)<x$ for $x>0$. Since $\psi(0)=0$ and $\psi'$ is negative on $(0,\infty)$, we have $\psi(\delta)<0$ for all $\delta>0$. Hence $(\ln R)'(\delta)<0$, so $R$ is strictly decreasing.

Consequently, for every $\delta\in(0,4]$,
\[
R(\delta)\;\ge\;R(4)
=\frac{(1+4)^{-1/4}}{1+4}
=\frac{1}{5^{5/4}}
\]
Now fix $c=8$ and take any $\alpha\in(0,\tfrac12)$. Then $\delta=c\alpha\in(0,4)$ and
\[
T(1+c\alpha)\;=\;f(\delta)\;=\;\delta\,R(\delta)
\;\ge\;\delta\,R(4)
\;=\;\frac{c}{5^{5/4}}\;\alpha
\]
Since $8\ge 5^{5/4}$, we obtain
$T(1+8\alpha)\ge \alpha$ for all $\alpha\in(0,\tfrac12)$, as required.
\end{proof}

\section{Applications to Pareto Distributions}
\label{apx-sec:pareto}

The exponential distribution is not only fundamental in its own right but also serves as a key building block for other families. 
In particular, the \emph{Pareto distribution}---one of the most widely studied heavy-tailed laws in economics, finance, and networks---admits a simple logarithmic reduction to the exponential. 
This reduction allows us to transfer our private exponential learning algorithms directly to the Pareto setting.

\subsection{Learning a Pareto Distribution}
\label{subsec:pareto}

The techniques developed for exponential distribution can also be adapted to privately learn the parameters of a \emph{Pareto distribution}, which models heavy-tailed phenomena. Recall that the standard Pareto distribution, that is $\text{Pareto}(x_m, \alpha_p)$, is defined over $x \geq x_m > 0$ with probability density function:
\[
f(x) = \alpha_p \cdot \frac{x_m^{\alpha_p}}{x^{\alpha_p+1}} \qquad \text{for } x \geq x_m,
\]
where $x_m$ is the minimum value and $\alpha_p > 0$ is the shape parameter.

We consider the case where $x_m$ is \emph{known and fixed}, and the goal is to estimate the shape parameter $\alpha_p$. The MLE estimator is:
\[
\hat{\alpha_p} = \left(\frac{1}{n} \sum_{i=1}^n \ln\left(\frac{x_i}{x_m}\right) \right)^{-1}
\]
This estimator closely resembles the exponential MLE, since $\ln(x/x_m) \sim \text{Exp}(\alpha_p)$ when $x \sim \text{Pareto}(x_m, \alpha_p)$. This logarithmic transformation allows us to reduce the problem of learning a Pareto distribution to that of learning an exponential distribution, which we already handle.

To learn $\alpha_p$ under differential privacy:
\begin{enumerate}
    \item \textbf{Log-transform the data:} For each $x_i \in P$, define $y_i = \ln(x_i / x_m)$. Then $y_i \sim \text{Exp}(\alpha_p)$.
    \item \textbf{Apply \Cref{alg:exp-mle}:} Use our private exponential learning algorithm on the transformed dataset $\{y_i\}_{i=1}^n$ to obtain a private estimate $\tilde \alpha_p$.
\end{enumerate}

\begin{lemma}\label{lem:pareto-tv-gap}
Let $P(x_m,\alpha_p)$ denote the Pareto distribution.
For two parameter pairs $(x_m,\alpha_p)$ and $(\tilde x_m,\tilde\alpha_p)$ set
\[
\Delta_s:=\big|\ln(\tilde x_m/x_m)\big|,\qquad
\alpha_{\min}:=\min\{\alpha_p,\tilde\alpha_p\},\ \ \alpha_{\max}:=\max\{\alpha_p,\tilde\alpha_p\}
\]
Then the total variation distance satisfies
\[
\mathrm{TV}\!\left(P(x_m,\alpha_p),\,P(\tilde x_m,\tilde\alpha_p)\right)
\ \le\
1-e^{-\alpha_{\max}\Delta_s}
\ +\
\sqrt{\tfrac12\!\left(\tfrac{\alpha_{\max}}{\alpha_{\min}}-1-\ln\tfrac{\alpha_{\max}}{\alpha_{\min}}\right)}\!
\]
and, for small gaps, the linearized bound
$\mathrm{TV}\ \le\ \alpha_{\max}\Delta_s+\frac{|\tilde\alpha_p-\alpha_p|}{2\alpha_{\min}}$ also holds.
\end{lemma}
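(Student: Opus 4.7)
The plan is to exploit the log-transform $y=\ln x$ which, being a bijection on $(0,\infty)$, preserves total variation. Under this map, $P(x_m,\alpha_p)$ becomes the law of $\ln x_m + E$ with $E\sim\mathrm{Exp}(\alpha_p)$, and similarly for $(\tilde x_m,\tilde\alpha_p)$. By the symmetry of TV in its two arguments I may assume without loss of generality that $\tilde x_m \geq x_m$ and $\tilde\alpha_p\geq \alpha_p$, so that $\alpha_{\max}=\tilde\alpha_p$, $\alpha_{\min}=\alpha_p$, and $\Delta_s=\ln\tilde x_m-\ln x_m$. The goal becomes bounding the TV between two shifted exponentials, which I split via the triangle inequality against the intermediate law $\ln x_m + \mathrm{Exp}(\alpha_{\max})$.

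The two resulting pieces are handled independently. For the shift piece, both laws share rate $\alpha_{\max}$ but are supported on $[\ln x_m,\infty)$ and $[\ln\tilde x_m,\infty)$ respectively; a direct computation—splitting the absolute density difference into the window $[\ln x_m,\ln\tilde x_m)$ where only the first density is nonzero, and the common tail $[\ln\tilde x_m,\infty)$ where the second density dominates by a factor $e^{\alpha_{\max}\Delta_s}$—yields exactly $1-e^{-\alpha_{\max}\Delta_s}$. For the rate piece, both laws share the shift $\ln x_m$ and differ only in rate, so by shift-invariance this TV equals $\mathrm{TV}(\mathrm{Exp}(\alpha_{\min}),\mathrm{Exp}(\alpha_{\max}))$, which I bound via Pinsker's inequality. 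A short calculation against the exponential densities gives $\mathrm{KL}(\mathrm{Exp}(\alpha_{\min})\,\|\,\mathrm{Exp}(\alpha_{\max})) = \alpha_{\max}/\alpha_{\min} - 1 - \ln(\alpha_{\max}/\alpha_{\min})$, and Pinsker delivers the advertised $\sqrt{\tfrac12(\cdot)}$ term. Summing the two pieces yields the main bound.

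The linearized bound is then a two-step routine: $1-e^{-u}\leq u$ converts the shift term to $\alpha_{\max}\Delta_s$, and the elementary inequality $t-\ln(1+t)\leq t^{2}/2$ for $t\geq 0$ (applied at $t=\alpha_{\max}/\alpha_{\min}-1$) yields $\sqrt{\mathrm{KL}/2}\leq (\alpha_{\max}-\alpha_{\min})/(2\alpha_{\min}) = |\tilde\alpha_p-\alpha_p|/(2\alpha_{\min})$. The only nontrivial design choice is picking the intermediate law so that the bound emerges in the advertised $\alpha_{\max}/\alpha_{\min}$ form (selecting the intermediate to carry the larger rate trades tightness for a clean expression); the underlying computations—the shift integral, the exponential KL, and the two one-line inequalities—are entirely standard, so I anticipate no real obstacle.
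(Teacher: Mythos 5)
Your overall strategy — triangle inequality, an exact total-variation computation for the pure shift (scale) perturbation, and Pinsker's inequality for the pure rate perturbation — is the same as the paper's, only reparametrized through the log-transform and with the triangle inequality applied in the opposite order (paper: scale then rate, via intermediate $P(\tilde x_m,\alpha_p)$; you: rate then shift, via intermediate $\ln x_m + \mathrm{Exp}(\alpha_{\max})$). The log-transform is a pleasant way to see the shift-TV formula cleanly, and your exact computation of $1-e^{-\alpha_{\max}\Delta_s}$ and your KL/Pinsker calculation are both correct.

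However, there is a genuine flaw in your reduction step. The claim that ``by symmetry of TV I may assume WLOG that $\tilde x_m\ge x_m$ \emph{and} $\tilde\alpha_p\ge\alpha_p$'' is not valid: swapping the two arguments of TV swaps the pairs $(x_m,\alpha_p)\leftrightarrow(\tilde x_m,\tilde\alpha_p)$ simultaneously, so at most one of the two inequalities can be normalized. The case $\tilde x_m>x_m$ with $\tilde\alpha_p<\alpha_p$ (so $\alpha_{\max}=\alpha_p$) is a genuine orbit that your normalization misses; there your chosen intermediate $\ln x_m+\mathrm{Exp}(\alpha_{\max})$ \emph{is} the first distribution, the first triangle piece collapses to $0$, and the second piece is the whole thing — the decomposition gives nothing. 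The fix is light: normalize only $\tilde\alpha_p\ge\alpha_p$ (so that $\alpha_{\max}$ sits on the second distribution) and keep the absolute value $\Delta_s=|\ln(\tilde x_m/x_m)|$ in the shift piece, which is symmetric in the two shifts so the ordering of $x_m,\tilde x_m$ never matters; or, equivalently, pick the intermediate to share the larger rate with one endpoint and the smaller shift with the other in a case-dependent way. Once that is patched, your proof closes; the linearization $1-e^{-u}\le u$ and $u-1-\ln u\le\tfrac12(u-1)^2$ (for $u\ge1$) is correct as you describe.
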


\begin{proof}
By the triangle inequality in total variation,
\[
\mathrm{TV}\big(P(x_m,\alpha_p),P(\tilde x_m,\tilde\alpha_p)\big)
\le \mathrm{TV}\big(P(x_m,\alpha_p),P(\tilde x_m,\alpha_p)\big)
   +\mathrm{TV}\big(P(\tilde x_m,\alpha_p),P(\tilde x_m,\tilde\alpha_p)\big)
\]
The first term compares equal-tail-index Paretos and depends only on scale.
For a fixed tail index $\bar\alpha$, the exact formula is
$\mathrm{TV}\big(P(x_m,\bar\alpha),P(\tilde x_m,\bar\alpha)\big)
=1-\exp(-\bar\alpha\,|\ln(\tilde x_m/x_m)|)$;
taking $\bar\alpha=\alpha_{\max}$ gives the stated upper bound
$1-e^{-\alpha_{\max}\Delta_s}$.

For the second term (equal scale $\tilde x_m$), apply Pinsker’s inequality
$\mathrm{TV}\le\sqrt{\tfrac12\,\mathrm{KL}}$ together with the closed-form KL
between equal-scale Paretos:
\[
\mathrm{KL}\!\left(P(\tilde x_m,\alpha_p)\,\Vert\,P(\tilde x_m,\tilde\alpha_p)\right)
=\frac{\alpha_p}{\tilde\alpha_p}-1-\ln\frac{\alpha_p}{\tilde\alpha_p}
\]
By symmetry in $(\alpha_p,\tilde\alpha_p)$ this equals
$\frac{\alpha_{\max}}{\alpha_{\min}}-1-\ln\frac{\alpha_{\max}}{\alpha_{\min}}$,
and the stated bound follows. The linearized bound is the first-order expansion
$1-e^{-t}\le t$ and $u-1-\ln u\le \tfrac12(u-1)^2$ near $t=0$, $u=1$.
\end{proof}

\begin{corollary}\label{cor:pareto-tv}
Let $P(x_m, \alpha_p)$ and $P(\tilde x_m, \tilde\alpha_p)$ be two Pareto distributions. If
\[
\tilde\alpha_p \in \big[(1-\gamma)\alpha_p,\ (1+\gamma)\alpha_p\big] \quad \text{and} \quad \frac{\tilde x_m}{x_m} \leq e^{\frac{\gamma}{2\alpha_p}}
\]
then $\mathrm{TV}\!\left(P(x_m,\alpha_p),\,P(\tilde x_m,\tilde\alpha_p)\right)\ \le\ \gamma$.

\end{corollary}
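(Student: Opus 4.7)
The plan is to invoke \Cref{lem:pareto-tv-gap} and bound each of its two summands by roughly $\gamma/2$ under the stated hypotheses.

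First, I would translate the hypotheses into the quantities appearing in the lemma. By assumption, $\alpha_{\max} \leq (1+\gamma)\alpha_p$, $\alpha_{\min} \geq (1-\gamma)\alpha_p$, and $\Delta_s := |\ln(\tilde x_m/x_m)| \leq \gamma/(2\alpha_p)$ (implicitly taking $\tilde x_m \geq x_m$; the symmetric case is analogous). The ratio $\alpha_{\max}/\alpha_{\min}$ is thus at most $(1+\gamma)/(1-\gamma)$, so for $\gamma$ bounded away from $1$ we are in the regime where the linearized form of \Cref{lem:pareto-tv-gap} applies.

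Second, I would bound the two summands separately. For the scale term, the elementary inequality $1 - e^{-t} \leq t$ gives $1 - e^{-\alpha_{\max}\Delta_s} \leq \alpha_{\max}\Delta_s \leq (1+\gamma)\alpha_p \cdot \gamma/(2\alpha_p) = \gamma(1+\gamma)/2$. For the shape (Pinsker) term, using $u - 1 - \ln u \leq (u-1)^2/2$ for $u - 1 \in (0,1)$ yields $\sqrt{\tfrac{1}{2}(u - 1 - \ln u)} \leq (u-1)/2 = |\tilde\alpha_p - \alpha_p|/(2\alpha_{\min}) \leq \gamma/(2(1-\gamma))$. Summing the two contributions gives $\mathrm{TV} \leq \gamma(1+\gamma)/2 + \gamma/(2(1-\gamma)) = \gamma + O(\gamma^2)$.

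The main obstacle is quantitative: the raw bookkeeping above produces $\gamma + O(\gamma^2)$ rather than a clean $\gamma$, so the stated bound is tight only in the limit $\gamma \to 0$. To close this cleanly, I would either restrict the statement to a fixed threshold (e.g., $\gamma \leq 1/4$) and slightly tighten the scale hypothesis (for instance $\tilde x_m/x_m \leq e^{(1-\gamma)\gamma/(2\alpha_p)}$) so that the two contributions sum exactly to $\gamma$, or interpret the conclusion as $\mathrm{TV} = O(\gamma)$. An alternative route that yields the same order avoids Pinsker altogether: apply the triangle inequality through the intermediate distribution $P(\tilde x_m, \alpha_p)$, use \Cref{cor:tv-exp} on the rate change via the log-reduction to exponentials, and compute the scale shift directly as $1 - e^{-\alpha_p \Delta_s} \leq \gamma/2$, again summing to $O(\gamma)$.
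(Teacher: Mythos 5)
The paper provides no explicit proof of \Cref{cor:pareto-tv}; it is simply asserted after \Cref{lem:pareto-tv-gap}, so the intended argument is precisely what you have reconstructed --- apply the lemma's linearized bound and plug in the hypotheses. Your bookkeeping is correct: the scale term contributes at most $(1+\gamma)\gamma/2$ and the Pinsker term at most $\gamma/(2(1-\gamma))$, summing to $\gamma + O(\gamma^2)$ rather than $\gamma$ exactly. You have therefore identified a genuine (though mild) quantitative imprecision in the corollary as stated: under the given hypotheses the lemma yields only $\mathrm{TV} \le \gamma + O(\gamma^2)$, not $\mathrm{TV} \le \gamma$ on the nose. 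Your proposed repairs are all reasonable; the cleanest is to either absorb the slack into a constant (state $\mathrm{TV}\le C\gamma$, which is all the downstream sample-complexity arguments actually need) or to tighten the scale hypothesis slightly (e.g.\ $\tilde x_m/x_m \le e^{(1-\gamma)\gamma/(2\alpha_p)}$ together with $\alpha_{\max}\Delta_s$ bounded using $\alpha_p$ rather than $(1+\gamma)\alpha_p$, at the cost of a restriction such as $\gamma\le 1/2$).

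Two further points worth flagging, both of which you touch on only lightly. First, the corollary's scale condition $\tilde x_m/x_m \le e^{\gamma/(2\alpha_p)}$ is one-sided, but $\Delta_s = |\ln(\tilde x_m/x_m)|$ in \Cref{lem:pareto-tv-gap} is two-sided; your parenthetical ``the symmetric case is analogous'' glosses over the fact that the stated hypothesis does not actually constrain $\tilde x_m \ll x_m$, so either a two-sided assumption or a justification that the estimator satisfies $\tilde x_m \ge x_m$ is needed (the latter is what the algorithm produces in practice, since $\hat q_\tau \ge x_m$). Second, your alternative triangle-inequality route through $P(\tilde x_m,\alpha_p)$ is a legitimate and slightly more elementary path --- it avoids Pinsker entirely by using the exact equal-scale TV formula for the rate change (via the log-reduction and \Cref{cor:tv-exp}) together with the exact scale-shift formula $1-e^{-\alpha_p\Delta_s}$ --- but it lands in the same $\gamma + O(\gamma^2)$ regime and does not rescue the precise constant either.
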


The reduction from Pareto to exponential distributions allows us to directly transfer our best-of-both exponential learner into the Pareto setting. 
By applying the logarithmic transform to the input samples, running the exponential algorithm, and then inverting the transform, we obtain an efficient private learner for Pareto distributions. Recall that our private exponential learner requires known bounds $0 < \lambda_{\min} < \lambda_{\max}$ on the exponential rate parameter $\lambda$. Under the logarithmic reduction $y=\ln(x/x_m)$, the Pareto shape parameter $\alpha_p$ appears \emph{exactly} as this exponential rate. Consequently, throughout this section the quantities $\alpha_{p_{\min}}$ and $\alpha_{p_{\max}}$ should be interpreted as known
bounds on the Pareto parameter $\alpha_p$.
The following corollary summarizes the resulting guarantees.

\begin{corollary}
    Given $n$ samples from $\text{Pareto}(x_m, \alpha_p)$, the composition of the log-transform with \Cref{alg:exp-mle} yields an $\epsilon$-differentially private estimate $\tilde \alpha_p$ such that, with probability at least $1 - \beta$,
    \[
        |\tilde \alpha_p - \alpha_p| \leq \gamma\alpha_p
    \]
    for any $\gamma \in (0,1)$, provided that
    \[
    \begin{aligned}
    n \geq O\Bigg(\max\Bigg\{&
    \min\Bigg\{\frac{\ln\!\left(\frac{\log(\nicefrac{\alpha_{p_{\max}}}{\alpha_{p_{\min}}})}{\alpha}\right)}{\epsilon\gamma}
    \ln\!\left(\frac{\ln(\nicefrac{\log(\nicefrac{\alpha_{p_{\max}}}{\alpha_{p_{\min}}})}{\alpha})}{\beta}\right),
    \frac{\ln(\frac{1}{\alpha})\ln(\frac{1}{\beta})}{\alpha_p\epsilon\gamma}
    \Bigg\}, \\
    &\frac{\ln\!\left(\frac{\ln(\nicefrac{\log(\nicefrac{\alpha_{p_{\max}}}{\alpha_{p_{\min}}})}{\alpha})}{\beta}\right)}{\gamma^2},
    \frac{\ln(\frac{\log(\nicefrac{\alpha_{p_{\max}}}{\alpha_{p_{\min}}})}{\beta})}{\epsilon}
    \Bigg\}\Bigg)
    \end{aligned}
    \]
    which yields $\mathrm{TV}\!\left(P(x_m,\alpha_p),\,P(x_m,\tilde\alpha_p)\right)\ \le\ \gamma$ by \Cref{cor:pareto-tv}.
\end{corollary}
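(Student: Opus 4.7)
The plan is to reduce Pareto learning to exponential learning via the logarithmic change of variables and then invoke the best-of-both exponential learner together with Corollary~\ref{cor:pareto-tv}. Given data $\{x_i\}_{i=1}^n$, define $y_i = \ln(x_i/x_m)$; since $x_m$ is known and fixed (rather than estimated), this is a deterministic, record-wise transformation. By the standard change-of-variables calculation, if $X_i \sim \mathrm{Pareto}(x_m,\alpha_p)$ then $Y_i \sim \mathrm{Exp}(\alpha_p)$, so I can feed $\{y_i\}$ into the $\epsilon$-DP best-of-both exponential learner (Theorem~\ref{thm:best-main}), which plays the role of ``\Cref{alg:exp-mle}'' in the Pareto pipeline, and output its estimate $\tilde\alpha_p$ unchanged.

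For privacy, I would observe that neighboring $x$-datasets (differing in a single record) map to neighboring $y$-datasets under the fixed bijection $x\mapsto \ln(x/x_m)$. Because this preprocessing consumes no privacy budget, the $\epsilon$-DP guarantee of the exponential learner transfers verbatim, with no extra composition. For utility, I would apply Theorem~\ref{thm:best-main} with target multiplicative accuracy $\gamma$, failure probability $\beta$, rate parameter $\alpha_p$, and initialization bounds $\lambda_{\min},\lambda_{\max}$ now interpreted as bounds on the shape parameter. Relabeling $\lambda\to\alpha_p$ and $\alpha\to\gamma$ in the statement of Theorem~\ref{thm:best-main} yields precisely the displayed sample complexity: the inner $\min$ reflects the adaptive choice between the $(1-1/e)$-quantile route (giving $\tilde O(1/(\epsilon\gamma))$) and the MLE route (giving $\tilde O(1/(\alpha_p\epsilon\gamma))$), while the remaining terms $\tilde O(1/\gamma^2)$ and $\tilde O(1/\epsilon)$ come from sampling error and from the dyadic search over the initialization interval.

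The final step is the TV translation. Because the output distribution keeps the known scale $x_m$, the two Paretos being compared share a common scale, so $\Delta_s=0$ in Lemma~\ref{lem:pareto-tv-gap}, and the entire ``scale mismatch'' term $1-e^{-\alpha_{\max}\Delta_s}$ disappears. Combined with the event $\tilde\alpha_p\in(1\pm\gamma)\alpha_p$, Corollary~\ref{cor:pareto-tv} immediately gives $\mathrm{TV}(P(x_m,\alpha_p),P(x_m,\tilde\alpha_p))\le \gamma$, as claimed.

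I do not anticipate a real technical obstacle: the reduction is essentially ``free'' since the log-transform is data-independent, and all work has already been done in the exponential case. The only bookkeeping to watch is that the logarithmic factors inside the best-of-both bound are faithfully translated after the relabeling (in particular, that the doubly-logarithmic dependence on $\lambda_{\max}/\lambda_{\min}$ is inherited by the analogous bounds on $\alpha_p$, and that the accuracy parameter $\gamma$ is consistently threaded through the nested logarithms in place of $\alpha$).
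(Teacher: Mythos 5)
Your proof is correct and matches the paper's intended reasoning, which is left informal here (the paragraph preceding the corollary merely sketches the log-transform reduction and the paper gives no explicit proof). You also correctly observed that the displayed sample complexity — in particular the $\min$ between a $\tilde O\!\big(\tfrac{1}{\epsilon\gamma}\big)$ quantile term and a $\tilde O\!\big(\tfrac{1}{\alpha_p\epsilon\gamma}\big)$ MLE term — in fact corresponds to Theorem~\ref{thm:best-main} (the best-of-both learner) rather than \Cref{alg:exp-mle} as the corollary's prose says, and your reading in favor of best-of-both is the one consistent with the stated bound; the remaining steps (privacy via a data-independent bijective preprocessing, relabeling $\lambda\to\alpha_p$, $\alpha\to\gamma$, and concluding the TV bound from Corollary~\ref{cor:pareto-tv} with $\Delta_s=0$ since the scale $x_m$ is kept fixed) are exactly right.
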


\subsubsection{Estimating the Scale Parameter $x_m$ Under Differential Privacy}
\label{subsec:pareto-xm}

We now address the case where both the shape $\alpha_p$ and the scale $x_m$ of a Pareto distribution $\mathrm{Pareto}(x_m,\alpha_p)$ are unknown. A direct private release of the sample minimum is problematic due to its large global sensitivity. Instead, we leverage a \emph{quantile-tail} reduction that preserves both privacy and statistical efficiency.

\paragraph{Key observation (scale invariance).}
If $X \sim \mathrm{Pareto}(x_m,\alpha_p)$ and $t \ge x_m$, then conditional on $X \ge t$ we have the exact scaling law
\[
\left.\ln\!\left(\frac{X}{t}\right)\ \right|\ \{X \ge t\}\ \sim\ \mathrm{Exp}(\alpha_p)
\]
\begin{lemma}[Tail-log transform of Pareto is exponential]\label{lem:pareto-tail-exp}
Let $X\sim \mathrm{Pareto}(x_m,\alpha_p)$ with CDF $F(x)=1-(x_m/x)^{\alpha_p}$ for $x\ge x_m$, and fix any $t\ge x_m$. Then
\[
\left.\ln\!\Big(\frac{X}{t}\Big)\ \right|\ \{X\ge t\}\ \sim\ \mathrm{Exp}(\alpha_p)
\]
\end{lemma}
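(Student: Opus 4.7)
My plan is to prove this by direct CDF manipulation, exploiting the fact that the Pareto survival function has the separable form $\Pr[X \ge x] = (x_m/x)^{\alpha_p}$ for $x \ge x_m$, which will turn into a clean exponential tail once we take logs and condition on $X \ge t$.

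Concretely, I would fix $y \ge 0$, set $Y := \ln(X/t)$, and compute the conditional CDF
\[
\Pr[Y \le y \mid X \ge t] \;=\; \Pr[X \le t e^y \mid X \ge t] \;=\; \frac{F(t e^y) - F(t)}{1 - F(t)}
\]
Substituting $F(x) = 1 - (x_m/x)^{\alpha_p}$ gives $1 - F(t) = (x_m/t)^{\alpha_p}$ and $F(t e^y) - F(t) = (x_m/t)^{\alpha_p} - (x_m/(t e^y))^{\alpha_p} = (x_m/t)^{\alpha_p}\bigl(1 - e^{-\alpha_p y}\bigr)$. The $(x_m/t)^{\alpha_p}$ factor cancels exactly, leaving $\Pr[Y \le y \mid X \ge t] = 1 - e^{-\alpha_p y}$, which is precisely the CDF of $\mathrm{Exp}(\alpha_p)$. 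The $y < 0$ case is trivial since $X \ge t$ forces $Y \ge 0$.

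There is essentially no obstacle here: the argument is a one-line algebraic cancellation once the conditional CDF is written out, and it is really just a restatement of the well-known scale-invariance (log-memorylessness) of the Pareto law. The only subtlety worth flagging is the hypothesis $t \ge x_m$, which ensures the support condition $\{X \ge t\}$ has positive mass $(x_m/t)^{\alpha_p} > 0$ so the conditioning is well-defined; this is exactly why the algorithm must pick its threshold $t$ no smaller than $x_m$ (or, in practice, no smaller than a private upper estimate of $x_m$ derived from a lower quantile).
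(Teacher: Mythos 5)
Your proof is correct and uses essentially the same argument as the paper: the paper works with the conditional survival function $\Pr[\ln(X/t) > y \mid X \ge t] = \Pr[X > te^y]/\Pr[X \ge t] = e^{-\alpha_p y}$, while you compute the conditional CDF, but both reduce to the same one-line cancellation of the $(x_m/t)^{\alpha_p}$ factor. Your added remark about why $t \ge x_m$ is needed (positive conditioning mass) is a sensible clarification the paper leaves implicit.
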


\begin{proof}
For $y\ge 0$,
\[
\Pr\!\left[\left.\ln\!\Big(\frac{X}{t}\Big)>y\ \right|\ X\ge t\right]
\;=\;
\frac{\Pr\big[\ln(X/t)>y,\ X\ge t\big]}{\Pr[X\ge t]}
\;=\;
\frac{\Pr\big[X>t e^{y}\big]}{\Pr[X\ge t]}
\]
Since $X$ is Pareto,
\[
\Pr[X>u]=\left(\frac{x_m}{u}\right)^{\alpha_p}\qquad \text{for }u\ge x_m.
\]
Applying this with $u=t e^{y}$ and with $u=t$ gives
\[
\Pr\!\left[\left.\ln\!\Big(\frac{X}{t}\Big)>y\ \right|\ X\ge t\right]
\;=\;
\frac{(x_m/(t e^{y}))^{\alpha_p}}{(x_m/t)^{\alpha_p}}
\;=\;
e^{-\alpha_p y}
\]
Thus the conditional tail function is $e^{-\alpha_p y}$ for $y\ge 0$, which is exactly the tail of $\mathrm{Exp}(\alpha_p)$. Hence 
\(
\left.\ln(X/t)\ \right| \{X\ge t\}\sim \mathrm{Exp}(\alpha_p)
\)
\end{proof}

Thus, if we can privately estimate a low quantile $q_\tau$ (for a small fixed $\tau \in (0,1)$) and then look only at exceedances above $q_\tau$, the log-exceedances behave as i.i.d.\ exponentials with rate $\alpha_p$, independently of the unknown $x_m$.

\begin{lemma}[Pareto Quantile and Scale Identity]\label{lem:pareto-quantile}
Let $X \sim \mathrm{Pareto}(x_m,\alpha_p)$ with CDF $F(x)=1-(x_m/x)^{\alpha_p}$ for $x\ge x_m$, where $x_m>0$ and $\alpha_p>0$. For any $\tau\in(0,1)$, the $\tau$-quantile $q_\tau$ satisfies
\[
q_\tau \;=\; \frac{x_m}{(1-\tau)^{1/\alpha_p}}.
\]
Equivalently,
\[
x_m \;=\; q_\tau\,(1-\tau)^{1/\alpha_p}.
\]
\end{lemma}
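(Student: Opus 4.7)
The plan is to prove this by direct inversion of the Pareto CDF, which is monotone and continuous on $[x_m,\infty)$ and hence admits a well-defined quantile function for every $\tau\in(0,1)$. Concretely, I would begin from the defining equation $F(q_\tau)=\tau$, substitute the closed form $F(x)=1-(x_m/x)^{\alpha_p}$, and solve the resulting equation for $q_\tau$.

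Carrying out the algebra: from $1-(x_m/q_\tau)^{\alpha_p}=\tau$ I would isolate $(x_m/q_\tau)^{\alpha_p}=1-\tau$, take both sides to the power $1/\alpha_p$ (valid since $\alpha_p>0$ and both sides are positive) to get $x_m/q_\tau=(1-\tau)^{1/\alpha_p}$, and finally rearrange to obtain $q_\tau=x_m/(1-\tau)^{1/\alpha_p}$. The equivalent formulation $x_m=q_\tau(1-\tau)^{1/\alpha_p}$ follows by multiplying both sides by $(1-\tau)^{1/\alpha_p}$. Before closing, I would briefly note that $q_\tau\ge x_m$ as required, since $(1-\tau)^{1/\alpha_p}\le 1$, so the solution lies in the support of the distribution and thus is genuinely the $\tau$-quantile.

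There is no real obstacle here: the identity is an immediate consequence of the closed-form CDF, and the only things to watch are (i) that $\alpha_p>0$ so the fractional exponent is well-defined, and (ii) that $1-\tau\in(0,1)$ so no sign or domain issues arise. I would therefore keep the proof to a handful of lines, presenting it as a direct computation rather than invoking any deeper machinery.
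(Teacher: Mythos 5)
Your proposal is correct and follows essentially the same direct CDF-inversion argument as the paper: set $F(q_\tau)=\tau$, isolate $(x_m/q_\tau)^{\alpha_p}=1-\tau$, take $1/\alpha_p$ powers, and rearrange. The only (minor, welcome) addition is your explicit check that $q_\tau\ge x_m$, which the paper assumes without comment.
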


\begin{proof}
By definition of the $\tau$-quantile, $q_\tau$ is any value such that $F(q_\tau)=\tau$. For $X\sim\mathrm{Pareto}(x_m,\alpha_p)$ and $q_\tau\ge x_m$,
\[
F(q_\tau) \;=\; 1 - \Big(\frac{x_m}{q_\tau}\Big)^{\alpha_p} \;=\; \tau
\implies
\Big(\frac{x_m}{q_\tau}\Big)^{\alpha_p} \;=\; 1-\tau.
\]
Taking $1/\alpha_p$ powers (noting $\alpha_p>0$) yields
\[
\frac{x_m}{q_\tau} \;=\; (1-\tau)^{1/\alpha_p}
\implies
q_\tau \;=\; \frac{x_m}{(1-\tau)^{1/\alpha_p}}.
\]
Rearranging gives the equivalent identity $x_m=q_\tau(1-\tau)^{1/\alpha_p}$. 
\end{proof}

\begin{remark}
Lemma~\ref{lem:pareto-quantile} is just the closed-form quantile function $F^{-1}(\tau)$ of the Pareto distribution. It underlies our scale recovery step: given (private) estimates $(\hat q_\tau,\tilde\alpha_p)$, we set $\tilde x_m=\hat q_\tau (1-\tau)^{1/\tilde\alpha_p}$.
\end{remark}

Having established the logarithmic reduction from Pareto to exponential distributions, we can now lift our adaptive best-of-both exponential learner directly to the Pareto setting. 
The idea is simple: first apply the $\log$-transform to convert Pareto samples into exponential ones, then run the best-of-both exponential learner under differential privacy, and finally invert the transformation to recover a private estimate of the Pareto parameter. 
The complete procedure is summarized in \Cref{alg:dp-pareto-both}.

\begin{algorithm}[H]
\caption{DP Learning of $\alpha_p$ and $x_m$ for Pareto$(x_m,\alpha_p)$}
\label{alg:dp-pareto-both}
\begin{algorithmic}
\State \hspace*{-\algorithmicindent} \textbf{Input:} $P=\{x_i\}_{i=1}^n$, bounds $0 < \alpha_{p_{\min}} < \alpha_{p_{\max}}$, target error $\alpha\in(0,1)$, privacy budget $\epsilon>0$.
\State \hspace*{-\algorithmicindent} \textbf{Output:} $(\tilde{\alpha}_p,\tilde{x}_m)$.
\State $\tau \gets \frac{1}{4\ln(7)}$
\State $\hat{q}_\tau \gets \Call{Quantile-Estimation}{P,\ \lambda_{\min} = \alpha_{p_{\min}},\ \lambda_{\max} = \alpha_{p_{\max}},\ \theta=1-\tau,\ \nicefrac{\epsilon}{2}}$ \Comment{DP low-quantile}
\State Let $I=\{i: x_i\ge \hat{q}_\tau\}$ and define $y_i=\ln(x_i/\hat{q}_\tau)$ for $i\in I$. \Comment{Tail transform}
\State $\tilde \alpha_p \gets \Call{Best-of-Both}{\{y_i\}_{i\in I},\ \lambda_{\min} = \hat{q}_\tau,\ \lambda_{\max} = \alpha_{p_{\max}},\ \alpha,\ \nicefrac{\epsilon}{2}}$
\State \textbf{Scale recovery:}
\[
\tilde{x}_m \ \gets\ \hat{q}_\tau\cdot (1-\tau)^{1/\tilde{\alpha}_p}
\qquad\text{since}\qquad
q_\tau \ =\ \frac{x_m}{(1-\tau)^{1/\alpha_p}}.
\]
\State \Return $(\tilde{\alpha}_p,\tilde{x}_m)$
\end{algorithmic}
\end{algorithm}

\begin{lemma}[Privacy of~\Cref{alg:dp-pareto-both}]\label{lem:pareto-both-privacy}
    \Cref{alg:dp-pareto-both} is $\epsilon$-differentially private.
\end{lemma}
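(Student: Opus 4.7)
The plan is to establish privacy by decomposing \Cref{alg:dp-pareto-both} into two DP primitives separated by a data-dependent but neighbor-preserving transformation, and then invoking adaptive sequential composition plus post-processing. First I would identify the two randomized phases: (i) the call to \textsc{Quantile-Estimation} with budget $\epsilon/2$, whose only released quantity is $\hat{q}_\tau$; and (ii) the call to \textsc{Best-of-Both} with budget $\epsilon/2$, whose only released quantity is $\tilde\alpha_p$. By \Cref{lem:quantile-privacy} the first phase is $\epsilon/2$-DP on $P$, and by \Cref{lem:best-privacy} the second phase is $\epsilon/2$-DP on whatever dataset it is applied to.

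The key step is to verify that the tail transform
\[
P \;\longmapsto\; \big\{y_i = \ln(x_i/\hat q_\tau) : x_i \in P,\; x_i \ge \hat q_\tau\big\}
\]
is neighbor-preserving for \emph{every fixed} threshold $\hat q_\tau$. Concretely, for neighboring $P,P'$ differing in a single record (say $x_j$ vs.\ $x_j'$), one of three cases occurs: both records fall below $\hat q_\tau$ and the transformed multisets are identical; both fall at or above $\hat q_\tau$ and the transformed multisets differ in exactly one entry; or exactly one of them is at or above $\hat q_\tau$, in which case the transformed multisets differ by the insertion/deletion of a single entry. In every case the transformed datasets are neighbors under the standard add/remove/replace relation, so \textsc{Best-of-Both} applied to the transformed tail data remains $\epsilon/2$-DP with respect to the original dataset $P$, conditional on $\hat q_\tau$.

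Given neighbor-preservation, adaptive sequential composition of the two $\epsilon/2$-DP mechanisms yields that the joint release $(\hat q_\tau,\tilde\alpha_p)$ is $\epsilon$-DP on $P$. The final scale-recovery step $\tilde x_m = \hat q_\tau (1-\tau)^{1/\tilde\alpha_p}$ is a deterministic function of quantities already released, so by post-processing the full output $(\tilde\alpha_p,\tilde x_m)$ is $\epsilon$-DP, completing the proof.

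I do not expect any real obstacle here beyond the bookkeeping in the neighbor-preservation claim; the main thing to be careful about is to state that the threshold $\hat q_\tau$ used inside the transform is itself the output of a DP mechanism, so the conditioning is handled by the adaptive (not merely parallel) composition theorem, and that \textsc{Best-of-Both} is DP on inputs of \emph{any} size, since $|I|$ is random and may differ from $n$.
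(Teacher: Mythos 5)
Your argument matches the paper's proof of this lemma: adaptive composition of the $\epsilon/2$-DP quantile step and the conditionally $\epsilon/2$-DP \textsc{Best-of-Both} step, followed by post-processing for the scale recovery. The three-case neighbor-preservation analysis you give for the tail transform makes explicit a step the paper asserts without elaboration, and your closing caveat---that the inner learner must be DP on variable-size inputs, since exactly one of two neighboring records may cross the threshold $\hat q_\tau$ (giving an add/remove rather than a replace pair)---correctly identifies the one point that both your proof and the paper's leave implicit.
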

\begin{proof}
    The algorithm first runs the DP low-quantile routine with budget $\epsilon/2$; this is $\epsilon/2$-DP. It then forms the (deterministic) transformed dataset $\{y_i=\ln(x_i/\hat q_\tau): x_i\ge \hat q_\tau\}$ using the already released $\hat q_\tau$ and the raw data, and applies a second DP estimator (the \textsc{Best-of-Both} exponential learner) with budget $\epsilon/2$ to $\{y_i\}$. For any fixed (public) $\hat q_\tau$, this second mechanism is $\epsilon/2$-DP with respect to the underlying dataset. Hence, by adaptive sequential composition the pair of releases is $\epsilon$-DP. The final scale estimate $\tilde x_m=\hat q_\tau (1-\tau)^{1/\tilde\alpha_p}$ is a deterministic function of the two released values, so by post-processing the overall algorithm is $\epsilon$-DP. 
\end{proof}

\begin{lemma}[Utility of~\Cref{alg:dp-pareto-both}]\label{lem:pareto-both-utility} Fix $\tau\in(0,1/4]$ and target error $\gamma\in(0,1/2]$ and $\beta\in(0,1)$. If
\[
\begin{aligned}
n \ge \frac{1}{1-\tau} \cdot O\Bigg(\max\Bigg\{&
\min\Bigg\{\frac{\ln\!\left(\frac{\log(\nicefrac{\alpha_{p_{\max}}}{\alpha_{p_{\min}}})}{\alpha}\right)}{\epsilon\gamma}
\ln\!\left(\frac{\ln(\nicefrac{\log(\nicefrac{\alpha_{p_{\max}}}{\alpha_{p_{\min}}})}{\alpha})}{\beta}\right),
\frac{\ln(\frac{1}{\alpha})\ln(\frac{1}{\beta})}{\alpha_p\epsilon\gamma}
\Bigg\}, \\
&\frac{\ln\!\left(\frac{\ln(\nicefrac{\log(\nicefrac{\alpha_{p_{\max}}}{\alpha_{p_{\min}}})}{\alpha})}{\beta}\right)}{\gamma^2},
\frac{\ln(\frac{\log(\nicefrac{\alpha_{p_{\max}}}{\alpha_{p_{\min}}})}{\beta})}{\epsilon}
\Bigg\}\Bigg)
\end{aligned}
\]
then with probability at least $1-\beta$ the outputs of \Cref{alg:dp-pareto-both} satisfy
\[
\tilde{\alpha}_p \in \big[(1-\gamma)\alpha_p,\ (1+\gamma)\alpha_p\big]
\quad\text{and}\quad
\frac{\tilde{x}_m}{x_m}
\leq e^{\ \,2\ln(7)\,(\gamma/\alpha_p)\tau}
\]
for a universal constant $c>0$. In particular, choosing $\tau = \frac{1}{4\ln(7)}$ yields a sufficient estimate of $x_m$ which yields $\mathrm{TV}\!\left(P(x_m,\alpha_p),\,P(\tilde x_m,\tilde\alpha_p)\right)\ \le\ \gamma$ by \Cref{cor:pareto-tv}.
\end{lemma}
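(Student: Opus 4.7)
The plan is to decompose the analysis into three stages: (i) a private anchor via the low quantile, (ii) reduction to exponential learning on the tail above that anchor, and (iii) propagation of both errors through the scale-recovery identity. Privacy is already handled by Lemma A.10, so I focus on utility.

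First I would invoke Lemma A.3 for the Quantile-Estimation call, run with budget $\epsilon/2$ and failure parameter $\beta/3$, to conclude that $\hat q_\tau$ is a constant-factor approximation of the true $\tau$-quantile $q_\tau = x_m(1-\tau)^{-1/\alpha_p}$. In particular, on this event $\hat q_\tau \ge x_m$, so Lemma A.8 (tail-log transform) ensures that the values $y_i = \ln(X_i/\hat q_\tau)$, $i \in I$, are i.i.d.\ from $\mathrm{Exp}(\alpha_p)$. The set $I$ has expected size $n(x_m/\hat q_\tau)^{\alpha_p}$, which on the good event is a constant multiple of $n(1-\tau)$; a multiplicative Chernoff bound then pins $|I| = \Omega(n(1-\tau))$ with probability at least $1 - \beta/3$. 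This is exactly where the $1/(1-\tau)$ prefactor in the sample-size condition enters.

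Second, I would feed the (conditionally) exponential sample $\{y_i\}_{i \in I}$ into \textsc{Best-of-Both} with budget $\epsilon/2$ and failure probability $\beta/3$. By Theorem 3.10, together with the lower bound on $|I|$ and the assumed sample-size condition, this yields $\tilde\alpha_p \in [(1-\gamma)\alpha_p,(1+\gamma)\alpha_p]$. Combining with the Pareto quantile identity (Lemma A.9) gives
\[
\frac{\tilde x_m}{x_m} \;=\; \frac{\hat q_\tau}{q_\tau}\cdot (1-\tau)^{1/\tilde\alpha_p - 1/\alpha_p},
\]
and I would bound the exponent by $|1/\tilde\alpha_p - 1/\alpha_p| \le \gamma/((1-\gamma)\alpha_p)$ and use $|\ln(1-\tau)| \le 2\tau$ for $\tau \le 1/2$, so the second factor is at most $\exp(2\tau\gamma/((1-\gamma)\alpha_p))$. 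A union bound over the three failure events yields success probability $\ge 1-\beta$, and Corollary A.12 then converts joint accuracy on $(\tilde\alpha_p,\tilde x_m)$ into the TV bound of $\gamma$.

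The main obstacle is the mismatch between the generic $6$-approximation delivered by Lemma A.3 and the much tighter $e^{O(\tau\gamma/\alpha_p)}$ scale error required for the stated bound with $\tau=1/(4\ln 7)$. To absorb the factor $\hat q_\tau/q_\tau$ into the target envelope $e^{2\ln(7)(\gamma/\alpha_p)\tau}$ I expect to need either a refined quantile analysis tailored to the Pareto tail---where the dyadic doubling of Algorithm 1 aligns well with the heavy-tailed CDF near $q_\tau$---or a coupled choice of $\tau$ and the quantile routine's accuracy so that its multiplicative slack is dominated by the $e^{c\tau\gamma/\alpha_p}$ term. This is the step where I expect the proof to invest most of its technical effort.
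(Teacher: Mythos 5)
Your decomposition matches the paper's proof exactly: (i) constant-factor accuracy of $\hat q_\tau$ from the quantile subroutine, (ii) exponential learning on the log-exceedances above $\hat q_\tau$ via \Cref{lem:pareto-tail-exp} and \Cref{thm:best-main}, and (iii) error propagation through $\tilde x_m = \hat q_\tau (1-\tau)^{1/\tilde\alpha_p}$. You are also more careful than the paper in one respect: you invoke a Chernoff bound to control $|I|$, whereas the paper merely asserts $n_\tau \approx (1-\tau)n$ ``in expectation'' without a concentration step.

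The obstacle you flag at the end is real and, importantly, it is not resolved in the paper either. The quantile subroutine only guarantees $\hat q_\tau / q_\tau \in [1/6, 6]$ (the paper loosens to a factor of $7$), so the scale-recovery identity gives
\[
\frac{\tilde x_m}{x_m} \;=\; \frac{\hat q_\tau}{q_\tau}\cdot(1-\tau)^{1/\tilde\alpha_p - 1/\alpha_p} \;\le\; 7\cdot \exp\!\Big(\frac{2\gamma}{\alpha_p}\,\tau\Big),
\]
which is precisely what the paper's proof derives. But this does \emph{not} imply the lemma's claimed bound $\tilde x_m/x_m \le e^{2\ln(7)(\gamma/\alpha_p)\tau}$: with $\tau = \frac{1}{4\ln 7}$ the claim becomes $e^{\gamma/(2\alpha_p)}$, which tends to $1$ as $\gamma/\alpha_p \to 0$, while the paper's derived quantity retains the fixed multiplicative $7$ and cannot fall below $7$. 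The constant-factor quantile error is multiplicative on $\tilde x_m$ and is simply not absorbed by the $e^{O(\gamma\tau/\alpha_p)}$ envelope needed for \Cref{cor:pareto-tv}. To close this, one would need a quantile estimator with $(1\pm O(\gamma))$ multiplicative accuracy at $q_\tau$ (not merely a fixed constant factor), or an alternative scale-recovery argument that is robust to a constant-factor slip in $\hat q_\tau$. Your proposal does not close the gap, but you correctly identified where the argument, as structured in the paper, cannot succeed; the paper's own proof text exhibits the same mismatch between the derived $7\cdot e^{2\gamma\tau/\alpha_p}$ and the claimed $e^{2\ln(7)(\gamma/\alpha_p)\tau}$.
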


\begin{proof}
    
Write $n_\tau:=|I|$ for the (random) number of exceedances above $\hat{q}_\tau$. In expectation, $n_\tau \approx (1-\tau)n$. The accuracy of $(\tilde{\alpha}_p,\tilde{x}_m)$ follows from three components:

\begin{enumerate}
\item \textbf{Low-quantile accuracy.} By \Cref{lem:quantile-utility} with $\theta=1-\tau$ and confidence $\beta/2$, the private estimate $\hat{q}_\tau$ is a multiplicative $O(1)$-approximation to $q_\tau$ provided
\[
n \geq \max\left\{\frac{20}{\epsilon} \ln\left(\frac{4\log\left(\nicefrac{\alpha_{p_{\max}}}{\alpha_{p_{\min}}}\right)}{\beta}\right), 200\ln\left(\nicefrac{4}{\beta}\right)\right\}
\]
\item \textbf{Rate accuracy on exceedances.} Conditioned on $\hat{q}_\tau$, the variables $\{y_i\}_{i\in I}$ are i.i.d.\ $\mathrm{Exp}(\alpha_p)$. Applying \Cref{thm:best-main} with sample size $n_\tau$ and confidence $\beta/2$ yields a multiplicative $(1\pm \gamma)$ estimate $\tilde{\alpha}_p$ when
\[
\begin{aligned}
n_\tau \ge C \cdot \max\Bigg\{&
\min\Bigg\{\frac{\ln\!\left(\frac{\log(\nicefrac{\alpha_{p_{\max}}}{\alpha_{p_{\min}}})}{\alpha}\right)}{\epsilon\gamma}
\ln\!\left(\frac{\ln(\nicefrac{\log(\nicefrac{\alpha_{p_{\max}}}{\alpha_{p_{\min}}})}{\alpha})}{\beta}\right),
\frac{\ln(\frac{1}{\alpha})\ln(\frac{1}{\beta})}{\alpha_p\epsilon\gamma}
\Bigg\}, \\
&\frac{\ln\!\left(\frac{\ln(\nicefrac{\log(\nicefrac{\alpha_{p_{\max}}}{\alpha_{p_{\min}}})}{\alpha})}{\beta}\right)}{\gamma^2},
\frac{\ln(\frac{\log(\nicefrac{\alpha_{p_{\max}}}{\alpha_{p_{\min}}})}{\beta})}{\epsilon}
\Bigg\}
\end{aligned}
\]
for a universal constant $C$.
\item \textbf{Scale recovery error propagation.} Since $x_m = q_\tau\,(1-\tau)^{1/\alpha_p}$ and $\tilde{x}_m=\hat{q}_\tau\,(1-\tau)^{1/\tilde{\alpha}_p}$,
\[
\frac{\tilde{x}_m}{x_m}
\ =\
\underbrace{\frac{\hat{q}_\tau}{q_\tau}}_{\text{quantile factor}}
\cdot
\underbrace{(1-\tau)^{\,\frac{1}{\tilde{\alpha}_p}-\frac{1}{\alpha_p}}}_{\text{rate factor}}
\ \leq\
7\cdot
\exp\!\left(\Big(\tfrac{1}{\tilde{\alpha}_p}-\tfrac{1}{\alpha_p}\Big)\!\ln(1-\tau)\right)
\]
If $\tilde{\alpha}_p\in[(1-\gamma)\alpha_p,(1+\gamma)\alpha_p]$, then
\[
\left|\tfrac{1}{\tilde{\alpha}_p}-\tfrac{1}{\alpha_p}\right|
\ \le\
\frac{\gamma}{(1-\gamma)\,\alpha_p}
\ \le\ \frac{2\gamma}{\alpha_p}
\quad\text{for }\gamma\le\tfrac12,
\]
and hence
\[
(1-\tau)^{\,\frac{1}{\tilde{\alpha}_p}-\frac{1}{\alpha_p}}
\leq 
\exp\!\Big(\tfrac{2\gamma}{\alpha_p}\,|\ln(1-\tau)|\Big)
\]
For $\tau\le \tfrac14$, $|\ln(1-\tau)|\le \tfrac{4}{3}\tau$. Thus,
\[
\frac{\tilde{x}_m}{x_m}
\in
7\cdot\exp\!\Big(\tfrac{2\gamma}{\alpha_p}\,\tau\Big)
\]
\end{enumerate}

\end{proof}

We now state the formal guarantees for the private Pareto learner. 
Since Algorithm~\ref{alg:dp-pareto-both} relies only on the logarithmic reduction and then invokes the exponential best-of-both learner, its privacy and accuracy properties follow directly from our earlier results. 
The theorem below shows that the algorithm achieves a $(1\pm \gamma)$ approximation to the Pareto parameter under differential privacy, which directly yields a $\gamma$-bound on the total variation distance between the learned and the true distributions, with essentially the same sample complexity as in the exponential case up to logarithmic factors.

\begin{theorem}
\label{thm:pareto-both-main}
Fix $\tau \in (0,1/4]$, target accuracy parameter $\gamma \in (0,1/2]$, failure probability $\beta \in (0,1)$, and privacy budget $\epsilon > 0$. 
Then Algorithm~\ref{alg:dp-pareto-both} is $\epsilon$-differentially private.

Moreover, if the sample size $n$ satisfies
\[
\begin{aligned}
n \ge \frac{C}{1-\tau} \cdot \max\Bigg\{&
\min\Bigg\{\frac{\ln\!\left(\frac{\log(\nicefrac{\alpha_{p_{\max}}}{\alpha_{p_{\min}}})}{\alpha}\right)}{\epsilon\gamma}
\ln\!\left(\frac{\ln(\nicefrac{\log(\nicefrac{\alpha_{p_{\max}}}{\alpha_{p_{\min}}})}{\alpha})}{\beta}\right),
\frac{\ln(\frac{1}{\alpha})\ln(\frac{1}{\beta})}{\alpha_p\epsilon\gamma}
\Bigg\}, \\
&\frac{\ln\!\left(\frac{\ln(\nicefrac{\log(\nicefrac{\alpha_{p_{\max}}}{\alpha_{p_{\min}}})}{\alpha})}{\beta}\right)}{\gamma^2},
\frac{\ln(\frac{\log(\nicefrac{\alpha_{p_{\max}}}{\alpha_{p_{\min}}})}{\beta})}{\epsilon}
\Bigg\}
\end{aligned}
\]
for a universal constant $C>0$, then with probability at least $1-\beta$ the outputs $(\tilde{\alpha}_p,\tilde{x}_m)$ satisfy
\[
\tilde{\alpha}_p \ \in\ \big[(1-\gamma)\alpha_p,\ (1+\gamma)\alpha_p\big]
\]
and
\[
\frac{\tilde{x}_m}{x_m}
\ \leq\ e^{\ \,c\,(\gamma/\alpha_p)\tau}
\]
for a universal constant $c>0$.

In particular, choosing $\tau = \frac{1}{4\ln(7)}$ yields a sufficient estimate of $x_m$ which yields $\mathrm{TV}\!\left(P(x_m,\alpha_p),\,P(\tilde x_m,\tilde\alpha_p)\right)\ \le\ \gamma$ by \Cref{cor:pareto-tv}.
\end{theorem}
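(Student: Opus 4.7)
The plan is straightforward: the theorem is essentially an assembly of three already-stated results, namely \Cref{lem:pareto-both-privacy}, \Cref{lem:pareto-both-utility}, and \Cref{cor:pareto-tv}. I would organize the proof as a short combination argument, with the only non-trivial calculation being the verification that the particular choice $\tau=1/(4\ln 7)$ precisely matches the scale-factor hypothesis of \Cref{cor:pareto-tv}.

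First, I would dispatch privacy by a one-line appeal to \Cref{lem:pareto-both-privacy}: the mechanism is a sequential composition of the $\epsilon/2$-DP low-quantile subroutine (\Cref{alg:quantile}) and the $\epsilon/2$-DP \textsc{Best-of-Both} call on the transformed data, and the final scale recovery $\tilde x_m=\hat q_\tau (1-\tau)^{1/\tilde\alpha_p}$ is pure post-processing. Next, for the utility bounds on $\tilde\alpha_p$ and $\tilde x_m/x_m$, I would cite \Cref{lem:pareto-both-utility} under the stated sample-size condition; the factor $1/(1-\tau)$ there accounts for the fact that only an expected $(1-\tau)n$ exceedances survive the thresholding step, controlled by a Chernoff bound inside the utility lemma's proof.

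The one computation worth writing out explicitly is the $\tau$-tuning step. From \Cref{lem:pareto-both-utility}, with $\tau\le 1/4$ and $\gamma\le 1/2$, we have
\[
\frac{\tilde x_m}{x_m}\ \le\ \exp\!\Bigl(2\ln(7)\,\tfrac{\gamma}{\alpha_p}\,\tau\Bigr)
\]
Setting $\tau=\frac{1}{4\ln 7}$ yields $2\ln(7)\tau=\frac{1}{2}$, so $\tilde x_m/x_m\le \exp\!\bigl(\tfrac{\gamma}{2\alpha_p}\bigr)$, which is exactly the scale-factor hypothesis of \Cref{cor:pareto-tv}. Combined with $\tilde\alpha_p\in[(1-\gamma)\alpha_p,(1+\gamma)\alpha_p]$, the corollary gives $\mathrm{TV}(P(x_m,\alpha_p),P(\tilde x_m,\tilde\alpha_p))\le \gamma$, as claimed. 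A union bound over the two failure events (bad quantile estimate and bad exponential-learner call), each charged $\beta/2$, yields overall success with probability at least $1-\beta$.

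The main obstacle, to the extent there is one, is not in the theorem itself but upstream: one must ensure that after thresholding at $\hat q_\tau$, the effective sample size $n_\tau=|I|$ is at least $(1-\tau)n/2$ with high probability (a standard multiplicative Chernoff on a Bernoulli$(1-\tau)$ random variable), and that the \textsc{Best-of-Both} learner on the tail is invoked with a valid rate range $[\lambda_{\min},\lambda_{\max}]$ covering $\alpha_p$. Both are absorbed into \Cref{lem:pareto-both-utility}, so at the theorem level the proof is just a three-line citation plus the $\tau$-substitution above.
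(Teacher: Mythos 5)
Your proof is correct and follows the same route as the paper, whose own proof is literally the two sentences ``Privacy is immediate from \Cref{lem:pareto-both-privacy}. Utility follows from \Cref{lem:pareto-both-utility}.'' The only thing you add is the explicit $\tau$-substitution $2\ln(7)\cdot\tfrac{1}{4\ln 7}=\tfrac12$, which correctly verifies that the scale-factor bound from \Cref{lem:pareto-both-utility} reduces to the hypothesis $\tilde x_m/x_m\le e^{\gamma/(2\alpha_p)}$ of \Cref{cor:pareto-tv}; this is a useful detail the paper leaves implicit.
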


\begin{proof}
    Privacy is immediate from \Cref{lem:pareto-both-privacy}. Utility follows from \Cref{lem:pareto-both-utility}.
\end{proof}

This application illustrates how exponential learning algorithms serve as a building block for differentially private estimation in more general parametric families.

\end{document}